
\documentclass[submission,copyright,creativecommons]{eptcs}
\usepackage{breakurl}
\usepackage{latexsym}
\usepackage{mathrsfs}
\usepackage{graphicx}
\usepackage{amssymb}
\usepackage{enumerate}
\usepackage[linesnumbered,ruled,vlined]{algorithm2e}
\usepackage{color}
\usepackage{multirow}
\usepackage{amsmath}
\usepackage{cases}
\usepackage[T1]{fontenc}

\newtheorem{theorem}{Theorem}[section]
\newtheorem{lemma}[theorem]{Lemma}
\newtheorem{proposition}[theorem]{Proposition}
\newtheorem{corollary}[theorem]{Corollary}
\newtheorem{fact}[theorem]{Fact}
\newtheorem{proof}[theorem]{Proof}
\newtheorem{definition}{Definition}[section]
\newtheorem{example}{Example}[section]

\title{Preference at First Sight}
\author{Chanjuan Liu
       \institute{School of Computer Science and
                 Technology, Dalian University of Technology}
       \institute{Institute for Logic, Language and Computation, University of Amsterdam}
       \email{chanjuan.pkucs@gmail.com}
}

\hyphenpenalty=1000
\tolerance=1000

\begin{document}

\maketitle

\begin{abstract} We consider decision-making and game scenarios in which an agent is limited by his/her computational ability to foresee all the available moves towards the future -- that is, we study scenarios with \emph{short sight}. We focus on how short sight affects the logical properties of decision making in multi-agent settings. We start with  \emph{single-agent sequential decision making} (SSDM) processes, modeling them by a new structure of `\emph{preference-sight trees}'.  Using this model, we first explore the relation between a new natural solution concept of \emph{Sight-Compatible Backward Induction}  (SCBI) and the histories produced by classical Backward Induction (BI). In particular, we find necessary and sufficient conditions for the two analyses to be equivalent. Next, we study whether larger sight always contributes to better outcomes. Then we develop a simple logical special-purpose language to formally express some key properties of our preference-sight models. Lastly, we show how short-sight SSDM scenarios call for substantial enrichments of existing fixed-point logics that have been developed for the classical BI solution concept. We also discuss changes in earlier modal logics expressing `surface reasoning' about best actions in the presence of short sight. Our analysis may point the way to logical and computational analysis of more realistic game models.


\end{abstract}


\section{Introduction}\label{section:introduction}
There is a growing interest in the logical foundations, computational implementations, and practical applications of \emph{single-agent sequential decision-making} (SSDM) problems \cite{Peterson2009,Houlding08,North68atutorial,Hansson94decisiontheory,BonetBretto2004,Littman1996,Kjetil2001} in such diverse areas as Artificial Intelligence, Control, Logic, Economics, Mathematics, Politics, Psychology, Philosophy, and Medicine.
Making decisions is central to agents' routine and usually, they need to make multiple decisions over time. Indeed, a current situation is a result of past sequentially linked decisions,  each impacted by the preceding choices.

It is quite natural in sequential decision-making scenarios, particularly, in large systems, that agents may have some uncertainties and limitations on their precise view of the environment. The current literature   \cite{BonetBretto2004,North68atutorial} has studied uncertainty which an agent faces in recognizing possible outcomes after taking an action and the probabilities associated with these outcomes, as well as the partial observability of what the actual state is like. In addition to these, a  realistic aspect that affects a SSDM process is the short-sightedness of the agent, which blocks a full view of all the available actions. Short sight plays a critical role in such a situation, since, while making a  choice, the ability to foresee a variety of alternatives  and predict future decision sequences for each  of them, may make a significant difference. Nonetheless, such restrictions have not been discussed systematically yet in decision theory or game theory.

In \cite{GrossiT12:Sight},  a game-theoretic framework called \emph{games with short sight} was proposed. This framework explicitly models players's limited foresight in extensive games and calls for a new  solution termed as \emph{Sight-Compatible Backward Induction} (SCBI).  However, many essential issues related to sight remain unclear, such as: What is the exact role of sight? Will the outcome be better when sight is larger? What is the relation between  SCBI and classical \emph{backward induction}(BI)?  There are also unexplored issues pertaining to logical aspects. Which minimal logic is needed for formally characterizing a short-sight framework? Are existing logics for BI still applicable, or can they be extended to fit short-sight scenarios?  How different are the logical properties of the game frames for SCBI and for BI? Without such a logical analysis, the framework of \cite{GrossiT12:Sight} does not suffice for disclosing the general features of short sight and the changes it brings about in thinking about decisions and games. Additionally, in  multi-player games, short sight has to interact with many other factors, such as agents' mutual knowledge and interactive decisions and moves.

Having said this, we still start by focusing on short sight in single-agent sequential decision-making process. For this, we propose a model of `\emph{preference-sight trees}' (P-S trees). As the term says, a P-S tree combines the agent's \emph{ preference} and  its sight, as both are essential to decision problems  \cite{2011Rossi}.  We will study how the two are correlated, and cooperate to act on decision-making processes and their final outcomes.

As a preliminary illustration, consider the connection between larger sight and better outcome. A first impression might be that an agent will always perform better with larger sight. Surprisingly, this is not always true.   Sometimes, one can see much further into the future but receive a small payoff, while having one's vision restricted to a limited set of future alternatives yields a better payoff.

\vspace{-1ex}

\begin{example}\label{seemoreseeless} $Alice$ has to make sequential decisions at two stages (shown in Figure \ref{Figure:seemoreseeless}). For each stage, she can choose either $L$ or $R$.  Assume that the preference order (from most preferable to least preferable) among the four outcomes is $RR, LL, RL, LR$.  Now consider  two cases:

Case 1. At the start, $Alice$ sees two paths, $LR$ and $RL$. She chooses $R$ since it initiates $RL$ which is preferable to $LR$. At the second-stage, $Alice$ then foresees $RR$ and $RL$. She happily makes the best decision $RR$.

Case 2. $Alice$ sees more, e.g., $LL$, $LR$, and $RL$, immediately at the first stage. Therefore she thinks that $L$ is a better initial choice than $R$. Consequently, at the second-stage, she can only choose from $LL$ and $LR$.

Conclusion: Even though $Alice$ could see more in Case 2, she ultimately obtains a less preferable outcome.
\end{example}

\vspace{-3ex}

\begin{figure}[h]
  \begin{center}
  \includegraphics[width=120pt]{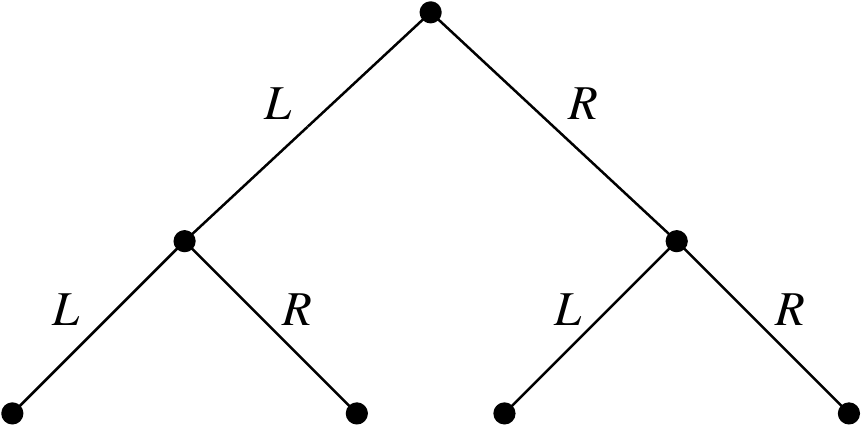}
  {\caption{Two-stage decision-making\vspace{-4ex}}\label{Figure:seemoreseeless}}
  \end{center}
\end{figure}

\vspace{-3ex}


This example demonstrates some of the crucial features that govern SSDM situations:

1) What an agent can foresee plays a crucial role in the decision-making process, since her sight determines the set of available choices.

2) Sight also updates her preferences over the options, and thereby the outcomes obtained in rational play.

3) Although in Case 2, \emph{Alice} does not get the best result, we can say that, given her sight, she plays optimally in a local sense. In other words, this is a rational plan for her, even though it is not equivalent to the rational outcome of classical decision theory or game theory \cite{Osborne2004}.

In this paper, we address all three challenges, but first we clarify our approach. To focus on sight, we ignore other factors such as the probability of moves by Nature. Also, we model the outcome of a decision as completely determined, or in other words, possible outcomes for each alternative and the probability corresponding to each outcome are encapsulated as a black box.

\section{Modeling Single-agent sequential decision-making}\label{section:preference-sight-tree}

We begin by defining a  structure called \emph{preference-sight tree} for modelling single-agent sequential decision-making (SSDM) processes.  Using this model, we then clarify the role that sight plays by discussing a series of changes it produces in agent's preferences,  decision-making procedures and their outcomes.

\subsection{Models}

There are two kinds of models for decision-making scenarios corresponding to two perspectives. One is an \emph{explicit} model from the perspective of Nature, or an outsider/designer; the other is the \emph{implicit} model from the perspective of the agent involved, or an insider/decider.
The former is complete and perfect in the sense that the outsider  holds a full view of all the options together with the \emph{objective} quality of these options, and thus can explicitly specify the reward of each situation for the decision-maker. In contrast with this,  the latter's views are possibly limited to a near future, especially in large-scale surroundings. Moreover, owing to limited foresight, the agent may also reason mistakenly about the quality of different choices, leading to what we call \emph{subjective} preference.

Both the above perspectives are essential:  the former offers a whole picture of the environment, the latter shows the actual play of the decider. In this section, we first introduce an explicit model of \emph{preference trees}.  After this, by endowing such trees with the agent's view of the process and his/her subjective preference in this view, we formulate an integrated model of \emph{preference-sight trees} which allows us to model both perspectives together.

\subsubsection{Preference trees (P trees)}

A preference tree is a decision tree with only two elements: histories and preferences. Each history corresponds to a situation resulting from previous decision actions, and a preference represents the objective quality of each of these situations. To ensure the existence of backward induction solutions, we confine ourselves to finite histories.

\begin{definition}$($Preference tree$)$\label{preference-tree} A  {\bf preference tree} is a tuple $T= (H,\succeq)$ where

\begin{itemize}
\item $H$ is a non-empty set of finite sequences of actions, called \emph{histories}.

   $\circ$ The empty sequence $\varepsilon$ is a member of $H$;

   $\circ$ If $(a^k)_{k=1,...,K} \in H$ and $L<K$ then \makebox{$(a^k)_{k=1,...,L}\in H$};

\vspace{-1ex}

\item  $\succeq$ is a total order over $H$. 
\end{itemize}
\end{definition}

\vspace{-1ex}

Let $A$ denote the set of all actions.  Any history $h$ can be written as a sequence of actions: ($a^k)_{k=1,\ldots, n}$, where each $a^k\in A$. If there is no $a^{n+1}$ s.t. $(a^k)_{k=1,...,n+1}\in H$, then history ($a^k)_{k=1,\ldots ,n}$ is  a terminal one. The set of terminal histories is denoted $Z$. The set of actions that are available at $h$ is denoted  $A(h)\subseteq A$. For any histories $h,h'$, if $h$ is a prefix of $h'$ we write $h \lhd h'$. The strict part of $\succeq$ is $\succ$, with  $h_1\succ h_2$ if $h_1\succeq h_2$ and not $h_2\succeq h_1$ for any two histories $h_1$ and $h_2$. Accordingly, $h_1\sim h_2$ iff $h_1\succeq h_2$ and $h_2\succeq h_1$.

\vspace{1ex}

Several remarks need to be made on the role of preference relations in the above definition:

(1) Instead of defining preference merely over terminal histories, we have defined it over all histories, an idea going back to  \cite{Harrenstein03}. Here preference over intermediate histories is necessary for our aim of modelling an agent's decision-making under limited foresight, which usually consists of intermediate histories.


(3) For convenience, we do not strictly differentiate the two main views of preference: qualitative and quantitative. Although we use qualitative order generally, we sometimes switch to numerical payoff when it is advantageous.\footnote{There is a debate on whether preference and utilities are the same \cite{Hansson94decisiontheory,BERMUDEZ1994}. Here we adopt the operational understanding of utility and do not distinguish it from preference.}

\subsubsection{Preference-sight Trees (P-S trees)}

P tree is an explicit model for decision-making scenarios which is independent of an agent. However, for an agent, the tree may appear differently in his/her limited view.  \cite{GrossiT12:Sight} proposes the idea of short sight, where the authors use a sight function to denote the set of states that players can actually see at every position in an extensive game.  Let us start by adapting their technique to preference trees.

\vspace{-1ex}

\begin{definition} \label{sight-function} Let $T=(H,\succeq)$ be a  preference tree.  A {\bf sight function} for $T$ is a function $s: H\rightarrow 2^H\backslash\{\emptyset\}$ satisfying $s(h) \subseteq H|_h$ and $|s(h)|< \omega$, where $H|_h$ represents the set of histories  extending $h$. As a special case, $h\in H|_h$.
\end{definition}

\vspace{-2ex}

In words, the function $s$ assigns to each history $h$ a finite subset of all available histories extending $h$.

\vspace{1ex}

The first effect that sight produces is as follows: Given a P tree, for any history $h$, a sight function always gives us a restricted tree.

\begin{definition} Let $T=(H,\succeq)$ be a  preference tree.  Given any history $h$ of $T$, a \textbf{visible tree} ${T}_h$ of $T$ at $h$ is a tuple $(H_h, {\succeq}_h)$, where $H_h=s(h)$, i.e., $H_h$ captures the decider's view of the decision tree; $\succeq_h$ represents  the subjective preference over $H_h$.
\end{definition}

A visible tree is actually an implicit model in our earlier terms.  $H_h$ also contains a set of terminal histories $Z_h$, which are those without successors in $s(h)$. Note that typically, the $Z_h$ are non-terminal for $T$.

\vspace{1ex}

Further, the preference order ${\succeq}_h$ is different from the objective preference $\succeq$. In fact, the formation of ${\succeq}_h$ is an update via a bottom-to-top process in terms of an agent's sight. This updating process involves leaving the payoffs of $Z_h$ as the same as their objective payoffs, then updating the payoffs of other histories in $H_h$ backwards, starting from the leaf nodes and proceeding towards the root of the tree.

\vspace{1ex}

The reason why we employ such an updating process is that, while the objective payoffs reflect the goodness of these situations, they are not the actual reward that an agent can get if he/she chooses this option.  At each decision point,  the subjective payoff of one available option is inherited from the best reachable terminal histories of the current visible tree. Therefore, the preference relation $\succeq_h$ in $T_h$ is not always consistent with the preference relation $\succeq$ in $T$.

\vspace{1ex}

This updating process is described by Algorithm 1:

\vspace{1ex}

{\small \textbf{*}For convenience, here we use payoffs $P$ to represent rewards.}
\begin{algorithm}
{
\caption{\small{\textbf{Preference updating in visible trees}}}\label{preferenceupdating} PU($T,h,s$)\\
\KwIn{ A  P tree $T=(H,\succeq)$ (or $T=(H,P)$), current history $h$, and a sight function $s$}
\KwOut{ A visible tree $T_h=(H_h, \succeq_h)$ or ($T_h=(H_h, P_h)$)}
\Begin{
   $H\cap s(h)\rightarrow H_h$;\\
   \For {any $z\in Z_h$ \tcc*[f]{Keep the payoffs of terminal histories unchanged}}{
   $P(z)\rightarrow P_h(z)$; $~~1\rightarrow \textit{flag}[z]$;\\
   }

   \While {$\textit{flag}[h]==0$}{
     \For {any $h' \in H_h$}{
       \If {(for all $(h'a)\in H_h$, $\textit{flag}[(h'a)]==1$) \tcc*[f]{If all of its children have been visited, reset its payoff as the highest one among them}\\}
       {$\textit{max}\{P_h(h'a)\}\rightarrow P_h(h')$; $~~1\rightarrow \textit{flag}[h']$;\\}
     }
    }
   Return $T_h$;
   }
}
\end{algorithm}






\begin{fact} Let $T=(H, \succeq)$ be a P tree. Each visible tree ${T}_h=(H_h, {\succeq}_h)$ is a P tree.
\end{fact}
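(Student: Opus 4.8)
The plan is to verify the two defining clauses of a preference tree (Definition~\ref{preference-tree}) for the structure $T_h = (H_h, \succeq_h)$ produced by Algorithm~\ref{preferenceupdating}, namely that $H_h$ is a non-empty, prefix-closed set of finite action sequences containing $\varepsilon$, and that $\succeq_h$ is a total order on $H_h$.

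First I would establish the structural properties of $H_h$. By Definition~\ref{sight-function}, $s(h)$ is a non-empty finite subset of $H|_h \subseteq H$, so $H_h = H \cap s(h) = s(h)$ is non-empty and consists of finite action sequences. The one subtlety is that Definition~\ref{preference-tree} requires the \emph{empty} sequence $\varepsilon$ to be a member and requires prefix-closure, whereas $s(h)$ contains only extensions of $h$. Here I would invoke the convention recorded in the excerpt: a visible tree is viewed as a tree rooted at $h$, so one re-indexes each history $ha^1\cdots a^m \in s(h)$ by the suffix $a^1\cdots a^m$, and under this re-rooting $h$ itself plays the role of $\varepsilon$. Prefix-closure of $H_h$ (in the re-rooted sense) then needs an argument: if $ha^1\cdots a^m \in s(h)$ we want $ha^1\cdots a^L \in s(h)$ for $L < m$. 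Strictly this is not forced by Definition~\ref{sight-function} as literally stated, so I would either (i) note that it is an implicit standing assumption on sight functions that $s(h)$ is itself prefix-closed above $h$ (which is what makes $T_h$ a genuine tree and is used tacitly when speaking of $Z_h$, $A(h)$, and the bottom-up recursion), or (ii) observe that the algorithm's backward pass only ever refers to children $h'a \in H_h$ and is well-defined precisely on the downward closure of the leaves, so we may harmlessly restrict $H_h$ to that closure. I expect this is the only place where a genuine choice of reading has to be made.

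Next I would check that $\succeq_h$ is a total order on $H_h$. The algorithm assigns to every $h' \in H_h$ a numerical payoff $P_h(h')$: leaves $z \in Z_h$ keep $P_h(z) = P(z)$, and each internal node gets $P_h(h') = \max\{P_h(h'a) : h'a \in H_h\}$. Since $H_h$ is finite and (on the relevant downward-closed part) every non-leaf has at least one child in $H_h$, a straightforward induction on the height of $h'$ in $T_h$ shows the \texttt{while} loop terminates with $\textit{flag}[h'']=1$ for every $h''$, in particular $\textit{flag}[h]=1$, so $P_h$ is total and well-defined on all of $H_h$. Defining $\succeq_h$ by $h_1 \succeq_h h_2 \iff P_h(h_1) \ge P_h(h_2)$ then immediately inherits totality, reflexivity, and transitivity from the total order $\ge$ on the reals (or on the ordinal payoff scale, if one prefers the qualitative reading); antisymmetry-up-to-indifference is automatic since $\succeq_h$ is defined from a function into a linear order. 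This is the routine part.

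The main obstacle is thus not any deep computation but the reconciliation of the re-rooting convention with the literal wording of Definitions~\ref{preference-tree} and~\ref{sight-function}: one must be explicit that $T_h$ is read as a tree with root $h\mapsto\varepsilon$ and that $s(h)$ is taken (or restricted) to be prefix-closed above $h$, so that the clause ``$\varepsilon \in H$'' and the prefix-closure clause are satisfied. Once that convention is fixed, the remaining verification — non-emptiness, finiteness, well-definedness and totality of $\succeq_h$ via termination of Algorithm~\ref{preferenceupdating} — is immediate, and I would present it compactly rather than belabouring the induction.
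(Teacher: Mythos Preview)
The paper states this Fact without proof, so there is no argument to compare against; your proposal supplies the verification the paper omits, and it is essentially correct.

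One remark sharpens your treatment of the only real subtlety. Where you hedge on whether $s(h)$ is prefix-closed above $h$ and offer two readings, the paper in fact settles this: immediately after defining preference-sight trees it records (as a further ``Fact'') the Downward-Closed property DC, namely that $h \lhd h' \lhd h''$ and $h'' \in s(h)$ imply $h' \in s(h)$. This is exactly the prefix-closure you need, so your option~(i) is the intended reading and no ad hoc restriction of $H_h$ is required. With DC in hand, the re-rooting at $h$ makes $h$ play the role of $\varepsilon$, prefix-closure of $H_h$ follows from DC, non-emptiness and finiteness of $H_h$ come straight from Definition~\ref{sight-function}, and your termination argument for Algorithm~\ref{preferenceupdating} (induction on height in the finite tree $T_h$) yields a total payoff function $P_h$, whence $\succeq_h$ is a total preorder inherited from $\ge$. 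That is all the Fact requires.
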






Correspondingly, we denote the prefix relation in $T_h$ by $\lhd_h$, and the actions that are available at $h$ by $A_h(h)$.

\vspace{0.5ex}

Finally we proceed to define our model of preference-sight trees.  A preference-sight tree allows us not only to represent the outsider's view, i.e., $(H,\succeq)$, but also to derive a series of implicit models, i.e., $(H_h, \succeq_h)$,  one for each $h$.

\begin{definition}$($Preference-sight tree$)$ A \emph{\textbf{preference-sight tree}} (P-S tree) is a tuple $(T,s)$, where $T=(H,\succeq)$ is a  preference tree and $s$ a sight function for $T$.
\end{definition}

In P-S trees,  an agent's sight should satisfy the following properties: First, if an agent can see a given future history, then he/she can also see any intermediate history up to that point.  Second, if the agent can see a history two steps forward, then after moving one step ahead, he/she can still see it. These features are formally stated as follows.

\vspace{0.5ex}

\begin{fact} $($Properties of sight function$)$ Let $(T,s)$ be a  P-S tree. For all $h,h', h''\in H$, with $h\lhd h'\lhd h''$, $s$ satisfies :
\begin{itemize}
\item[$\emph{DC}$]\textit{(Downward-Closed)}:  if $h'' \in s(h)$, then $h'\in s(h)$.
\item[$\emph{NF}$]\textit{(Non-Forgetting)}:   if $h''\in s(h)$, then $h''\in s(h')$.
\end{itemize}
\end{fact}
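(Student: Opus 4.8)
The plan is to treat the two clauses not as consequences of the bare Definition~\ref{sight-function} (which only fixes the type of $s$ and finiteness of its values) but as the closure conditions a sight function must obey in order to model foresight that is limited yet internally coherent, and then to verify them --- either directly, if they are posited, or from a concrete recipe for $s$. A first observation I would record is that DC is already presupposed by the earlier Fact that each visible tree $T_h=(H_h,\succeq_h)$ with $H_h=s(h)$ is a preference tree: reading $T_h$ as a tree re-rooted at $h$, the prefix-closure clause of Definition~\ref{preference-tree} says exactly that whenever $h''\in s(h)$ and $h\lhd h'\lhd h''$ then $h'\in s(h)$. So DC and the statement ``$T_h$ is a P tree'' are two faces of the same requirement, and half the proof is simply making this identification, after noting that $h\lhd h'$ already places $h'$ in $H|_h$, so that $h'$ is a legitimate candidate for membership in $s(h)$.

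For a genuine derivation I would realize $s$ through a per-history \emph{visibility horizon} $d(h)\in\mathbb{N}\cup\{\infty\}$, setting $s(h)=\{h'\in H|_h : |h'|-|h|\le d(h)\}$ with $|\cdot|$ the length of a sequence. Then DC falls out of prefix-closure of $H$ alone: if $h\lhd h'\lhd h''$ and $|h''|-|h|\le d(h)$, then since $|h'|\le|h''|$ also $|h'|-|h|\le d(h)$, hence $h'\in s(h)$. For NF the extra ingredient is the natural monotonicity constraint $d(h')\ge d(h)-(|h'|-|h|)$ whenever $h\lhd h'$ --- one step forward lets the horizon recede by at most one step --- under which $h''\in s(h)$, i.e.\ $|h''|-|h|\le d(h)$, gives $|h''|-|h'|=(|h''|-|h|)-(|h'|-|h|)\le d(h')$, so $h''\in s(h')$.

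The one place where something genuinely has to be assumed is NF, and this is unavoidable: a function meeting Definition~\ref{sight-function} may legitimately ``forget'' a visible history after a move --- e.g.\ $s(\varepsilon)=\{\varepsilon,a,ab\}$ while $s(a)=\{a\}$ --- so either NF is taken as a standing axiom on $s$ in a P-S tree (the route I expect the paper to follow, in parallel with the sight functions of \cite{GrossiT12:Sight}), or it is derived from a horizon-monotonicity hypothesis as above. Accordingly the main obstacle is not DC --- which is essentially free given how histories nest, and which is exactly what underlies the earlier P-tree Fact --- but pinning down and then propagating the no-forgetting constraint along an arbitrary chain $h\lhd h'\lhd h''$. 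I would therefore dispatch DC in a line from prefix-closure and devote the body of the argument to NF.
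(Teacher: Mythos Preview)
Your diagnosis is exactly right, and in fact more careful than the paper itself. The paper provides no proof of this Fact at all: the preceding sentence says ``an agent's sight \emph{should} satisfy the following properties,'' and DC and NF are then simply postulated as standing constraints on $s$ in a P-S tree, on a par with the sight axioms in \cite{GrossiT12:Sight}. There is nothing to compare your argument against.

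Where you go further than the paper is in (i) observing that DC is equivalent to the earlier Fact that each visible tree $T_h$ is a preference tree, which is a genuine internal consistency check the paper leaves implicit; (ii) giving the explicit counterexample $s(\varepsilon)=\{\varepsilon,a,ab\}$, $s(a)=\{a\}$ showing NF is not derivable from Definition~\ref{sight-function} alone; and (iii) sketching a horizon-based realization under which both properties do follow. None of this appears in the paper. So your proposal is not wrong --- it is simply an attempt to prove something the paper does not prove but asserts axiomatically. The appropriate ``proof'' here is one line: these are imposed as defining conditions on sight in a P-S tree, not derived.
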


\subsection{Solution concepts}

Solution concepts are at the center of all choice problems.  In what follows, we define two solution concepts for P-S trees,   adapted from \cite{Osborne1994,GrossiT12:Sight}.   After this, we investigate the conditions for their equivalence.

\subsubsection{BI history and SCBI history}

Backward Induction (BI) is well-known in game theory \cite{Osborne1994}. The process runs like this.  First, one determines the optimal strategy of the player who makes the last move of the game. Using this information, one can then determine the optimal action of the next-to-last moving player. The process continues backwards in this way  until all players' actions have been determined in the whole game.  Its adaptation to single-agent decision-making process becomes a maximality problem for the agent involved.

In a P-S tree, we say that one history $h$ is $\emph{max}_{\succeq}$ in a set of histories $\Gamma\subseteq H$, if $h\in \Gamma$ and for any other history $h'$ in $\Gamma$, it holds that $h\succeq h'$, and we write this as $h\in \emph{max}_{\succeq} \Gamma$. The strict part for $\emph{max}_{\succeq}$ is $\emph{max}_{\succ}$.

\vspace{-1ex}
\begin{definition}$($\emph{BI} history$)$ Let $(T,s)$ be a P-S tree. A history $h^*\in Z$ is a \emph{\textbf{BI history}} of $T$, iff  $h^*\in \emph{max}_{\succeq}Z$. Also, we use ${\textbf{\emph{BI}}}$ to denote the set of BI histories in $T$.
\end{definition}

\vspace{-1ex}

A BI history of a P-S tree is a terminal history that is most preferable or equivalently, that has a maximal payoff.

Backward induction precludes short-sight, while in practice it is impossible for an agent to foresee all final outcomes all the time.  In \cite{GrossiT12:Sight},
a new solution concept was proposed to capture optimal play of short-sighted players: \emph{sight-compatible subgame perfect equilibrium}. The main idea is that at each decision point, the current player chooses a locally optimal move by a local BI analysis within the visible part.
Here, we adapt this notion to P-S trees, yielding the \emph{sight-compatible backward induction history}.

\begin{definition}$($\emph{SCBI} history$)$\label{SCBI} Let $(T,s)$ be a P-S tree.  A history $h^*\in Z$ is a \emph{\textbf{Sight-Compatible Backward Induction history}} $($\emph{SCBI history}$)$ of $T$, iff for each history $h$ with $h\lhd h^*$, and the action $a$ following $h$, i.e., $(ha)\lhd h^*$, we have that $\exists z\in \emph{max}_{\succeq}Z_h$ such that $(ha)\lhd z$. Also, we use ${\textbf{\emph{SCBI}}}$ to denote the set of SCBI histories in $T$.
\end{definition}

\vspace{-1ex}

The difference  between SCBI and BI histories is obvious. A BI history is one with highest payoff among the set of terminal histories in the P-S tree, while for a SCBI history every restriction of it should be a local BI history for the visible tree.  Thus,  BI histories are the BI outcomes for the objective model {$(H, \succeq)$}, while SCBI histories are a combination of best responses to all subjective models $(H_h, \succeq_h)$. Typically it is the case that $\textbf{SCBI}\neq \textbf{BI}$.

\vspace{-1ex}

\begin{example} Consider the P-S tree $(T,s)$ in Figure \ref{Figure:counter-example-2}, where $s(\varepsilon)=\{L\}$, and $s(L)=\{LR\}$. It is easy to check that $\textbf{\emph{BI}}\neq \textbf{\emph{SCBI}}$, since $\textbf{\emph{BI}}=\{LL\}$, while $\textbf{\emph{SCBI}}=\{LR\}$.
\end{example}

\vspace{-2ex}

\begin{figure}[h]
\vspace{-2ex}
  \begin{center}
  \includegraphics[width=100pt]{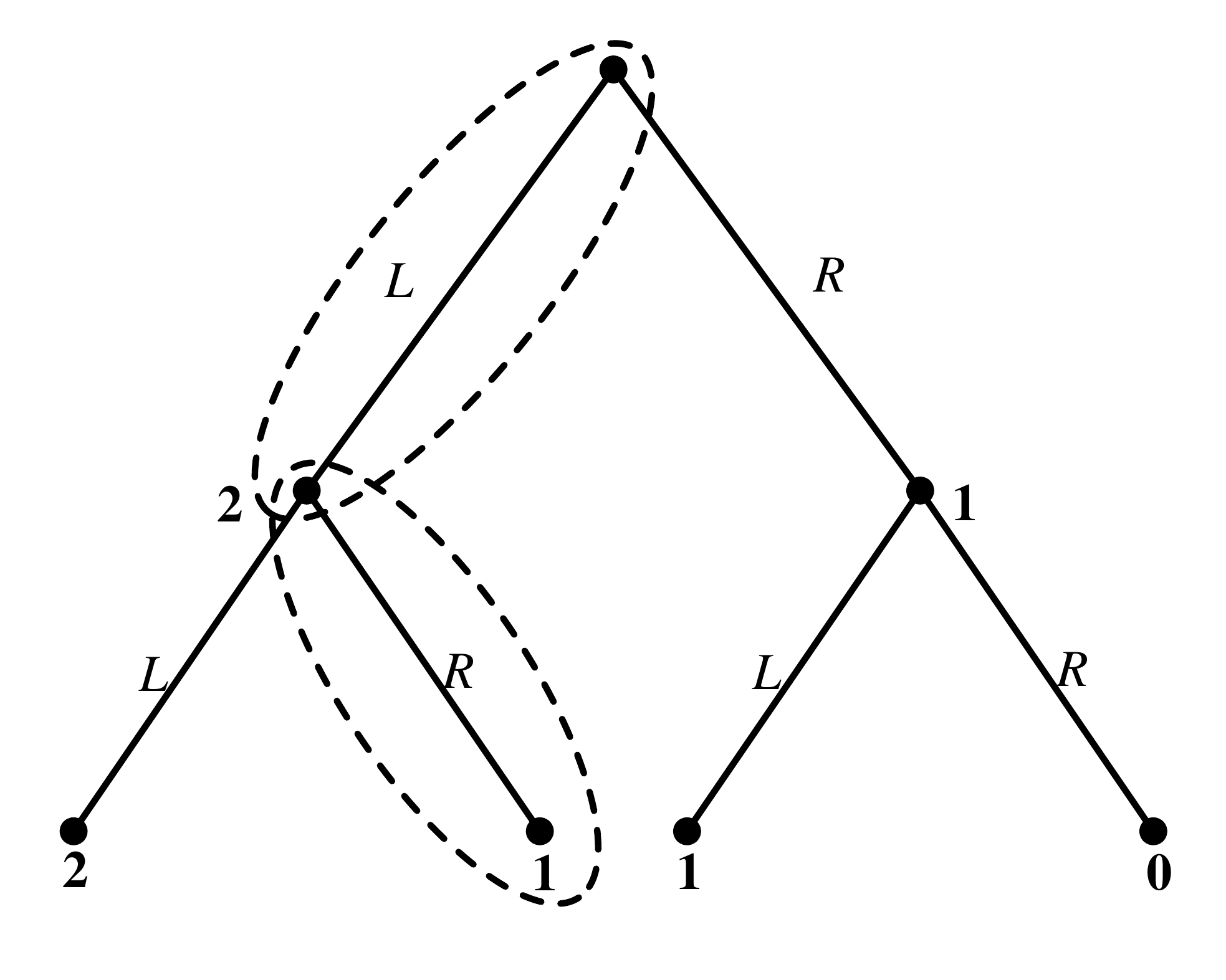}
  {\vspace{-3ex}\caption{${\textbf{BI}} \neq {\textbf{SCBI}}$ }\label{Figure:counter-example-2}}
  \end{center}
\end{figure}

\vspace{-3ex}
However, sometimes the two notions can be  equivalent.

\vspace{-1ex}

\begin{example} Consider a P-S tree, with $T$ and $s$ shown by Figure \ref{Figure:counter-example-1} (a), and  Figure \ref{Figure:counter-example-1} (b) respectively.  In (b) the three dotted circles represent $s(\varepsilon)$, $s(L)$ and $s(R)$. For histories $L$ and $R$, their objective payoffs in $(a)$ are $1$ and $2$, respectively. However, in $T_\varepsilon$, the subjective payoff of $L$ is updated to $3$ and $R$ to $2$. Obviously,  $\textbf{\emph{BI}} = \textbf{\emph{SCBI}}=\{LL\}$.
\end{example}

\vspace{-4ex}

\begin{figure}[h]
  \begin{center}
  \includegraphics[width=230pt]{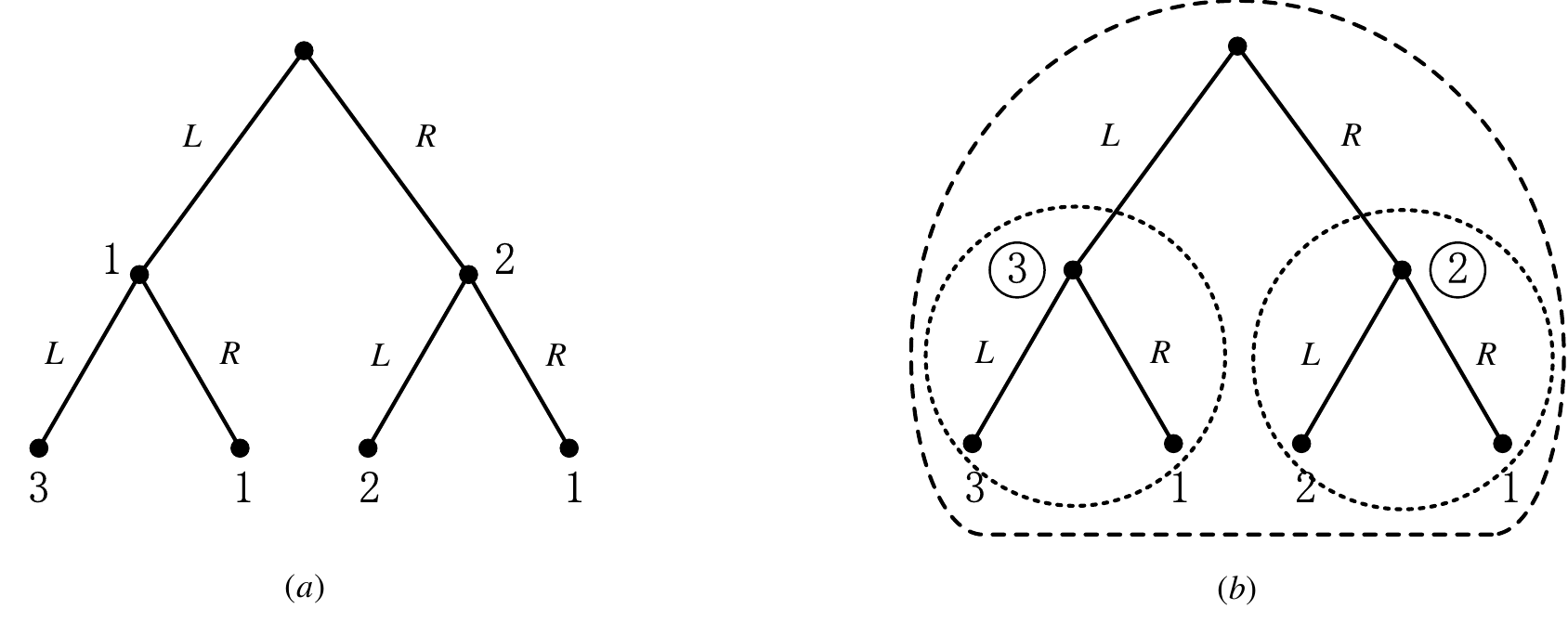}
  {\vspace{-3ex}\caption{${(\textbf{BI}}$ = ${\textbf{SCBI}})$\vspace{-16ex}}\label{Figure:counter-example-1}}
  \end{center}\vspace{-3ex}
\end{figure}

\subsubsection{Equivalence condition}

Then an interesting question on BI and SCBI histories arises: are there  conditions under which the two will be equivalent?  To get a feeling for this, a first attempt at an answer looks for a condition related to consistency between subjective and objective preferences.

Two histories are said to be `preference-sight consistent' if the subjective preference in each sight-restricted tree is consistent with the objective preference over them:

\begin{definition}\vspace{-1ex}{\rm$($Preference-sight consistency$)$\label{PSconsistency} Let $(T,s)$ be a  P-S tree, and $T_h$ be the visible tree at an arbitrary history $h$. Then for any two  histories $h_1$, $h_2$  of $T_h$, we say $(h_1,h_2)$ satisfies {\bf{ preference-sight consistency}} at $h$ iff  }
\vspace{-0.5ex}
 $$h_1\succeq h_2 \emph{ iff } h_1  ~{\succeq}_h ~h_2$$
\end{definition}

\vspace{-0.5ex}

{\rm If for any history $h\in T$, the pair of arbitrary two histories $(h_1, h_2)$ in $T_h$ is {\emph{preference-sight consistent}} (at $h$), then we say $(T,s)$ is {\emph{preference-sight consistent}}.}

\vspace{1ex}

Is preference-sight consistency an appropriate condition for ${\textbf{BI}}$ = ${\textbf{SCBI}}$? We have the following observation:
\begin{fact} Preference-sight consistency does not guarantee that ${\textbf{\emph{BI}}}$ = ${\textbf{\emph{SCBI}}}$.
\end{fact}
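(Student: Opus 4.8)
The plan is to prove the statement by exhibiting a counterexample: a preference-sight tree $(T,s)$ that is preference-sight consistent but has $\textbf{BI}\neq\textbf{SCBI}$. The point behind the construction is that preference-sight consistency is a purely local requirement — it only compares $\succeq$ with $\succeq_h$ on the histories that actually occur in the visible tree $T_h$ — so it says nothing about how the objective payoff of $h$ relates to the payoffs of histories invisible at $h$. In particular, Definition~\ref{sight-function} does not force $s(h)$ to contain all of the immediate successors of $h$ (the examples above already use such sight functions), so the agent can fail to consider one of her available moves altogether; and if the unique BI outcome lies behind such a move, consistency cannot prevent a mismatch.

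Concretely, I would take the tree with root $\varepsilon$ and two terminal histories $L$ and $R$, with objective payoffs $P(R)>P(L)$, so that $\textbf{BI}=\{R\}$. Put $s(L)=\{L\}$, $s(R)=\{R\}$, and $s(\varepsilon)=\{\varepsilon,L\}$, so that the agent at the root sees only the $L$-branch. The only visible tree that is not a singleton is $T_\varepsilon$ on $\{\varepsilon,L\}$; Algorithm~\ref{preferenceupdating} gives $P_\varepsilon(L)=P(L)$ and $P_\varepsilon(\varepsilon)=P(L)$, so if we fix $P(\varepsilon)=P(L)$, the objective and subjective orders coincide on $\{\varepsilon,L\}$ and $(T,s)$ is preference-sight consistent. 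But $Z_\varepsilon=\{L\}$, so $\max_\succeq Z_\varepsilon=\{L\}$, and the SCBI condition at $\varepsilon$ forces the first move towards $L$; hence $\textbf{SCBI}=\{L\}\neq\{R\}=\textbf{BI}$. The same effect survives in less degenerate trees: for instance, make $\varepsilon$ an internal node that looks one step ahead, with a non-terminal child $L$ at which the agent is blind ($s(L)=\{L\}$) but whose descendants carry the highest payoffs, and a child $R$ leading to lower-payoff terminals; consistency still holds on each visible tree, yet $\textbf{SCBI}$ stays inside $R$'s subtree while $\textbf{BI}$ lies inside $L$'s.

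The step I expect to be the main obstacle is recognising \emph{where} consistency can be broken — in particular, that no counterexample arises from merely reordering intermediate histories. Consistency is in fact quite rigid on the visible part: whenever $h$ is an internal node of $T_h$, Algorithm~\ref{preferenceupdating} sets $P_h(h)=\max\{P(z):z\in Z_h\}$, so consistency forces $P(h)=\max\{P(z):z\in Z_h\}$ as well, pinning the objective payoff of $h$ to the maximum of the objective payoffs of its visible leaves. Propagating this, one checks that if the sight function is \emph{dense} enough — for example, full down to a uniform depth (at least one) at every history — then the objective payoffs reconstruct exactly the backward-induction values and $\textbf{BI}=\textbf{SCBI}$. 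So a counterexample is possible only when the sight is genuinely sparse, omitting some available move, as the construction above does at $\varepsilon$. Once that is understood, the remaining checks — that $s$ satisfies DC and NF, that Algorithm~\ref{preferenceupdating} yields the stated $\succeq_h$, and that $\succeq$ and $\succeq_h$ agree on every visible tree — are routine.
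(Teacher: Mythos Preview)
Your proof is correct and follows the same approach as the paper: exhibit a preference-sight consistent tree with $\textbf{BI}\neq\textbf{SCBI}$ by letting the sight function omit the branch leading to the BI outcome, so that consistency is trivially satisfied on the (small) visible trees. The paper reuses the tree of Figure~\ref{Figure:counter-example-2} with the extra stipulation that $s(R)$ contains only one successor, whereas you build a simpler depth-one example; the underlying idea is identical.
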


\begin{proof}Consider Figure \ref{Figure:counter-example-2}.  Suppose that $s(R)$ contains only one successor. Then it is easy to see that $(T,s)$  is preference-sight consistent. However, $\textbf{BI}\neq \textbf{SCBI}$.
\end{proof}

Next, does the other direction hold?
\vspace{-1ex}
\begin{fact} Preference-sight consistency does not follow from ${\textbf{\emph{BI}}}$ = ${\textbf{\emph{SCBI}}}$.
\end{fact}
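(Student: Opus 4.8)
The plan is to refute the implication by a single counterexample: a P-S tree for which $\textbf{BI}=\textbf{SCBI}$ while some sight-restricted tree carries a subjective order that disagrees with the objective one. The two-stage tree already drawn in Figure~\ref{Figure:counter-example-1} does the job, so the work is essentially to verify it carefully; no genuinely new construction is needed, only a careful reading of that example together with the update Algorithm~\ref{preferenceupdating}.

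In that tree the terminal histories are $LL,LR,RL,RR$, and $\succeq$ is taken so that $LL$ is the unique $\emph{max}_{\succeq}$ element of $Z$, whence $\textbf{BI}=\{LL\}$. Since Definition~\ref{preference-tree} puts no constraint on how $\succeq$ ranks intermediate histories, we may simultaneously have $R\succ L$ objectively (as in the figure, $1$ versus $2$). Running Algorithm~\ref{preferenceupdating} with $s(\varepsilon)$ the whole tree resets $P_\varepsilon(L)=P(LL)$ and $P_\varepsilon(R)=\max(P(RL),P(RR))<P(LL)$, hence $L\succ_\varepsilon R$ in $T_\varepsilon$. Thus the pair $(L,R)$ violates preference-sight consistency at $\varepsilon$, so $(T,s)$ is not preference-sight consistent. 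On the other hand, applying Definition~\ref{SCBI}: the only proper prefixes of $LL$ are $\varepsilon$ and $L$; at $\varepsilon$ the move $L$ lies on $LL\in\emph{max}_{\succeq}Z_\varepsilon$, and choosing $s(L)$ so that $LL\in\emph{max}_{\succeq}Z_L$, the move $L$ at history $L$ is again locally optimal, so $LL\in\textbf{SCBI}$; moreover any terminal history beginning with $R$ fails the test at $\varepsilon$ (no $\emph{max}_{\succeq}$ element of $Z_\varepsilon$ extends $R$), and $LR$ fails at $L$, so $\textbf{SCBI}=\{LL\}=\textbf{BI}$. The two observations together prove the Fact.

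The one delicate point is keeping $\textbf{SCBI}=\textbf{BI}$ while the update genuinely reorders the intermediate histories: the sight function must be tuned so that, although $\succeq_\varepsilon$ no longer agrees with $\succeq$, the local backward-induction choice in each visible tree still points along the branch toward the global optimum $LL$. This becomes automatic once the best reachable leaf under $L$ is made to be exactly the BI history $LL$, but one still has to confirm the SCBI condition at every proper prefix of $LL$ and exclude each competing terminal history — which is precisely where a careless choice of $\succeq$ or $s$ would break the equality.
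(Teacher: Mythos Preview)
Your proposal is correct and follows essentially the same approach as the paper: both use the P-S tree of Figure~\ref{Figure:counter-example-1} as the counterexample, observing that $\textbf{BI}=\textbf{SCBI}=\{LL\}$ while objectively $R\succ L$ yet $L\succ_\varepsilon R$ after the update, so preference-sight consistency fails at $\varepsilon$. You simply spell out in more detail the verification that the paper leaves to the reader.
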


\begin{proof}The situation in Figure \ref{Figure:counter-example-1} is a counterexample, in which  $\textbf{BI} = \textbf{SCBI}=\{LL\}$, but $(T,s)$ is not preference-sight consistent, since $R \succ L$ and $L~ {\succ}_{\varepsilon}~ R$.
\end{proof}

\vspace{-1ex}

What is the exact condition for ${\textbf{BI}}$ = ${\textbf{SCBI}}$? From the failure of preference-sight consistency, we can draw a lesson. In Figure \ref{Figure:counter-example-2}, the main reason for $(T,s)$ being inconsistent is that at history $L$, the branch $LL$, which in fact forms a BI history, is non-observable to the agent. This tells us that the one with maximal payoff should always be visible. Consider then the example in Figure \ref{Figure:counter-example-1}. Here all the options are within agent's sight, but we notice that although the path $LL$ following $L$ finally turns out to be better than that following $R$, which makes subjectively $L \succ_{\varepsilon} R$, the objective payoff of $L$ itself is lower than $R$. Thus, it fails to imply the consistency between preference and sight.

\vspace{1ex}

Based on the above analysis, we now isolate necessary and sufficient conditions for ${\textbf{BI}}$ = ${\textbf{SCBI}}$. First, we define an auxiliary property of \emph{sight-reachability}, which intuitively reflects whether each restriction of a history is visible.

\vspace{-1ex}

\begin{definition}$($Sight-reachability$)$ \label{sightr} A BI history $h^*$ is {\bf sight-reachable} if, for all $(ha)\lhd h^*$, we have  $(ha)\in H_h$, where $h,h'$ are histories, and $a$ is an action following $h$.
\end{definition}

\vspace{-2ex}

\begin{theorem}$($Equivalence Theorem$)$ For any P-S tree $(T,s)$, ${\textbf{\emph{SCBI}}}$= ${\textbf{\emph{BI}}}$ iff the following conditions are satisfied:

\begin{itemize}
\vspace{-2ex}
\item[\emph{I)}.] Any history $h^*\in \textbf{\emph{BI}}$ is sight-reachable.
\vspace{-1ex}
\item[\emph{II)}.] Any history $h^*\in \textbf{\emph{BI}}$ is locally optimal: For any history $(hh')\lhd h^*$, if $(hh')\in Z_h$, then  $(hh')\in \emph{max}_{\succ} Z_h$ and for any other $(hh'')\in Z_h$, $(hh')\sim (hh'')$ iff $\exists z\in \textbf{\emph{BI}}$ such that $(hh'')\lhd z$.


\end{itemize}
\end{theorem}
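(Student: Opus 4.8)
The plan is to prove both directions of the biconditional by carefully unpacking the definitions of \textbf{BI} and \textbf{SCBI} in terms of the visible trees $T_h$ and the preference-updating Algorithm~1. The central fact I would establish first is a local characterization of the subjective preference: for any history $h$ and any $(ha)\in H_h$, the updated payoff $P_h(ha)$ equals $\max\{P(z) : z\in Z_h,\ (ha)\lhd z\}$, i.e. the best objective payoff among terminal histories of $T_h$ that are reachable through $(ha)$. This is an immediate induction on the depth of the bottom-up pass in Algorithm~1, and it translates ``$(ha)\in\max_{\succeq_h}$ among siblings'' into the statement ``some $z\in\max_{\succeq}Z_h$ lies below $(ha)$'', which is precisely the clause appearing in Definition~\ref{SCBI}. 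So SCBI histories are exactly those terminal $h^*$ such that every one-step extension along $h^*$ passes through a $\succeq$-maximal element of the corresponding local leaf set $Z_h$.

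For the direction ``conditions I) and II) $\Rightarrow$ $\textbf{SCBI}=\textbf{BI}$'': I would first show $\textbf{BI}\subseteq\textbf{SCBI}$. Take $h^*\in\textbf{BI}$ and any $(ha)\lhd h^*$. By sight-reachability (I), $(ha)\in H_h$, so the SCBI condition at $h$ is meaningful; I must produce $z\in\max_{\succeq}Z_h$ with $(ha)\lhd z$. If $(ha)$ is already in $Z_h$, condition II) says $(ha)\in\max_{\succ}Z_h$, so take $z=(ha)$. Otherwise $h^*$ itself, or some prefix of it lying in $Z_h$, is a candidate; using II) applied at the appropriate prefix together with the fact that $h^*$ is globally $\succeq$-maximal (hence no element of $Z_h\subseteq H$ can strictly beat it), one shows the relevant local leaf is $\succeq$-maximal in $Z_h$ and lies below $(ha)$. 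Conversely, for $\textbf{SCBI}\subseteq\textbf{BI}$, suppose $h^*\in\textbf{SCBI}$ but $h^*\notin\textbf{BI}$; pick a genuine $g\in\textbf{BI}$. By (I), $g$ is sight-reachable, and walking down from $\varepsilon$ along $g$ and $h^*$, consider the last history $h$ at which they agree and the two distinct actions $a$ (towards $h^*$) and $b$ (towards $g$). The SCBI condition forces some $\succeq$-maximal $z\in\max_{\succeq}Z_h$ below $(ha)$; condition II) applied at $h$ forces $(hb)$ — the step towards the BI history — to lead to a $\succeq$-maximal leaf of $Z_h$ as well, and moreover that any $z$ below $(ha)$ with $z\in\max_\succeq Z_h$ must be $\sim$-tied to a BI history. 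Chasing the payoffs with the local characterization lemma then forces $h^*$ itself to carry the BI payoff, a contradiction; so equality holds.

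For the direction ``$\textbf{SCBI}=\textbf{BI}$ $\Rightarrow$ I) and II)'': here the natural move is contraposition — assume one of the conditions fails and construct (or merely exhibit from the structure) a witness showing $\textbf{SCBI}\neq\textbf{BI}$. If (I) fails, some $h^*\in\textbf{BI}$ has a one-step extension $(ha)\not\in H_h$; then at $h$ the SCBI recipe necessarily sends the agent into a different branch, so $h^*\notin\textbf{SCBI}$, and since $\textbf{SCBI}$ is always nonempty (the greedy local construction always terminates at some terminal history, by finiteness of $H$ and the sight properties DC/NF) one obtains an SCBI history outside \textbf{BI}, or at least $h^*\in\textbf{BI}\setminus\textbf{SCBI}$ — either way the sets differ. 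This is essentially the lesson drawn from Figure~\ref{Figure:counter-example-2}. If (II) fails, there is $h^*\in\textbf{BI}$ and a prefix-step $(hh')\lhd h^*$ with $(hh')\in Z_h$ but either $(hh')\notin\max_\succ Z_h$, or $(hh')$ is only weakly maximal while the tied alternative $(hh'')$ does not itself lead to a BI history (or vice versa). In the first subcase, the local BI analysis at $h$ would reject the branch towards $h^*$, so again $h^*\notin\textbf{SCBI}$; in the tie subcase one shows the greedy procedure can (must, in the $\max_\succeq$ formulation) pick the ``wrong'' tied branch $(hh'')$, yielding an SCBI history whose payoff is not globally maximal, contradicting $\textbf{SCBI}=\textbf{BI}$. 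The main obstacle I anticipate is bookkeeping the interaction between the weak order $\succeq$ (with genuine ties $\sim$) and the $\max_\succeq$ rather than $\max_\succ$ used in Definition~\ref{SCBI}: ties mean SCBI can be ``set-valued'' in a way that must be matched exactly by the tie-clause in condition II), and getting the ``iff'' inside II) to line up with the quantifier structure of Definition~\ref{SCBI} will require care. It will help to reduce, once and for all, to reasoning one branching point at a time via the local characterization lemma, so that the global statement becomes a conjunction of purely local conditions about each $Z_h$ met along a BI history.
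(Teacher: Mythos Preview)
Your local characterization lemma --- that $P_h(ha)=\max\{P(z):z\in Z_h,\ (ha)\lhd z\}$ --- is correct and a genuinely useful addition; the paper relies on this fact tacitly but never isolates it. For the direction I)\,$\wedge$\,II) $\Rightarrow$ $\textbf{SCBI}=\textbf{BI}$, your plan coincides with the paper's: first $\textbf{BI}\subseteq\textbf{SCBI}$ directly from I) and II), then $\textbf{SCBI}\subseteq\textbf{BI}$ by contradiction at the last common prefix of a supposed non-BI SCBI history and a genuine BI history.

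Where you diverge is the forward direction $\textbf{SCBI}=\textbf{BI}\Rightarrow$ I)\,$\wedge$\,II). The paper does \emph{not} argue by contraposition; it argues directly: given $h^*\in\textbf{BI}=\textbf{SCBI}$, unfold the SCBI definition at each prefix $h$ of $h^*$ and read off sight-reachability and the local-maximality clause of II). Your contrapositive route runs into precisely the obstacle you flag in the tie subcase: to show that the ``wrong'' tied branch $(hh'')$ yields a full SCBI history outside \textbf{BI}, you must verify the SCBI condition at \emph{every} subsequent prefix, and the visible trees $T_{h'}$ for $h'$ beyond $h$ can differ arbitrarily from $T_h$ --- local maximality of $(hh'')$ in $Z_h$ alone does not let you extend. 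Nonemptiness of \textbf{SCBI} does not help either, since you need an SCBI history through $(hh'')$ specifically, not just some SCBI history. The paper's direct route sidesteps this witness construction entirely, so I would recommend switching to it for this direction; your lemma then becomes the clean tool for passing from the one-step SCBI clause (a maximal leaf below the \emph{first} action after $h$) to the statement that the particular prefix $(hh')\in Z_h$ lying on $h^*$ is itself maximal --- a step the paper's own proof takes rather quickly.
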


\begin{proof}
\begin{itemize}
\item[($\Rightarrow$)] $\rm{I)}$. We show that every $h^*\in \textbf{BI}$ is sight reachable. That is, for all $(hh')\lhd h^*$, it holds that $(ha)\in H_h$.  By ${\textbf{SCBI}}$= ${\textbf{BI}}$, we know that any history $h^*$ in $\textbf{BI}$, is also in $\textbf{SCBI}$. By Definition \ref{SCBI}, for each of its prefix $h$, $h^*_h$ is $\emph{max}_{\succeq}$ in $Z_h$. So $h^*_h$ is in $Z_h$. In addition, by non-emptiness of $Z_h$, $h^*_h$ is not an empty sequence. Thus, for all $(ha)\lhd h^*$, it holds that $(ha)\in H_h$. So $h^*\in \textbf{BI}$ is sight-reachable.
    \vspace{-1ex}
\item[] To show condition $\rm{II)}$, take any $h^*$ in $\textbf{BI}$,  we  have that it is in $\textbf{SCBI}$. Thus, for all  $(hh')\lhd h^*$, if $(hh')\in Z_h$, then $(hh')$ is $\emph{max}_{\succeq}$ in $Z_h$. Moreover, for any $(hu)\in Z_h$ such that $(hh')\sim(hu)$, we have $(hu)$ is a prefix of a BI history, i.e., $(hu)\in \textbf{BI}_h$. For suppose not, then  $(hu)$ is not a prefix of SCBI history. Then it must be $(hh')\nsim (hu)$. Contradict.
    \vspace{-1ex}
\item[($\Leftarrow$)]  Suppose conditions $\rm{I)}$ and $\rm{II)}$ are satisfied. It suffices to show (a)``every BI history is SCBI history of $T$'', and (b) `` every SCBI history is BI history of $T$''.
    \vspace{-1ex}
\item[] For (a), take any BI history $h^*$. By $\rm{I)}$, all BI histories are sight reachable. Further by $\rm{II)}$, for all $(hh')\lhd h^*$, if $(hh')\in Z_h$, then $(hh')$ is $\emph{max}_{\succeq}$ in $Z_h$. This is to say that for each of its prefix $h$, $h^*_h$ is $\emph{max}_{\succeq}$ in $Z_h$.  By definition \ref{SCBI},  $h^*$ is a SCBI history.
\vspace{-1ex}
\item[] For (b), take any SCBI history $h^*$. We can show it is a BI history, i.e., $h^*$ is $\emph{max}_{\succeq}$ in $Z$. For suppose not, then there exists a BI history $h'$ such that $h'\succ h^*$. Notice that there must be some history $u$ which is the common prefix of $h^*$ and $h'$. Since $h'$ is a BI history, by condition $\rm{I)}$ and \rm{II)}, we know that $h'_{u} \succ h^*_{u}$. Then $h^*_{u}$ is not a prefix of a SCBI history. Thus, $h^*$ is not a SCBI history. Contradiction.
    \end{itemize}
    \end{proof}


\subsubsection{More sight, better outcome?}

We have seen earlier on that,  SCBI may loss global optimality. The BI history definitely has a maximal payoff, while it might not be the case for SCBI, since each action is chosen with a limited sight. So $\textbf{BI}\succeq \textbf{SCBI}$ holds without exception, in the sense that any BI history is no worse than any SCBI history. One might conjecture that more sight always contributes to better outcomes. Yet, the fact below falsifies this.

\vspace{-1ex}

\begin{fact} Let $T$ be a P tree. Also, let $s_1$ and $s_2$ be two sight functions for $T$ satisfying $s_1(h)\subseteq s_2(h)$ for any history $h$ in $T$.  Take any two \emph{SCBI} histories $z_1$ and $z_2$ of $(T,s_1)$ and $(T,s_2)$ respectively. Then the following three cases are all possible: $a)$  $z_1\succ z_2$; \hspace{1ex} $b)$  $z_2\succ z_1$; \hspace{1ex} $c)$  $z_1\sim z_2$.

\end{fact}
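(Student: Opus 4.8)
This is a pure possibility claim, so the plan is to exhibit, for each of the three cases, one (finite) $P$ tree $T$ together with a pair of nested sight functions $s_1\subseteq s_2$ (i.e.\ $s_1(h)\subseteq s_2(h)$ for every history $h$, both obeying \emph{DC} and \emph{NF}) and to read off the corresponding \textbf{SCBI} histories. Case $c)$ is immediate: for any finite $P$ tree take $s_1=s_2$ equal to the full-sight function $s(h)=H|_h$; then $(T,s_1)$ and $(T,s_2)$ have the same nonempty set of \textbf{SCBI} histories (here coinciding with \textbf{BI}), so picking $z_1=z_2$ gives $z_1\sim z_2$. The real content is in cases $a)$ and $b)$, each handled by a small explicit example.

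For case $b)$ I would use a minimal asymmetric tree: $H=\{\varepsilon,L,R,RL,RR\}$ with $L$ a terminal of payoff $2$, $RL$ a terminal of payoff $1$, and $RR$ a terminal of payoff $3$, so $\textbf{BI}=\{RR\}$. Put $s_1(\varepsilon)=\{\varepsilon,L,R,RL\}$, $s_1(R)=\{R,RL\}$ and $s_2(\varepsilon)=\{\varepsilon,L,R,RL,RR\}$, $s_2(R)=\{R,RL,RR\}$ (and $s_i(z)=\{z\}$ at terminals); one checks at a glance that $s_1\subseteq s_2$ and that \emph{DC} and \emph{NF} hold. Under $s_1$ the leaves of the visible tree at $\varepsilon$ are $\{L,RL\}$ with $L\succ RL$, so every \textbf{SCBI} history must start with the action to the (terminal) history $L$, whence $\textbf{SCBI}(T,s_1)=\{L\}$; under $s_2$ the visible leaves at $\varepsilon$ are $\{L,RL,RR\}$ with $RR$ best, so the first action must be the one to $R$, and then the visible leaves at $R$ are $\{RL,RR\}$, forcing the move to $RR$, whence $\textbf{SCBI}(T,s_2)=\{RR\}$. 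Thus $z_2=RR\succ L=z_1$.

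For case $a)$ I would reuse Example~\ref{seemoreseeless}: the complete depth-two binary tree with terminals $LL,LR,RL,RR$ and $RR\succ LL\succ RL\succ LR$. Let $s_1$ realize Case~1 there --- $s_1(\varepsilon)=\{\varepsilon,L,R,LR,RL\}$, $s_1(L)=\{L,LR\}$, $s_1(R)=\{R,RL,RR\}$ --- and let $s_2$ realize Case~2 --- $s_2(\varepsilon)=\{\varepsilon,L,R,LL,LR,RL\}$, $s_2(L)=\{L,LL,LR\}$, $s_2(R)=\{R,RL,RR\}$ --- with $s_i(z)=\{z\}$ at terminals. Then $s_1\subseteq s_2$, and \emph{DC}, \emph{NF} are routine to check. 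Running the \textbf{SCBI} computation as in the example: under $s_1$ the visible leaves at $\varepsilon$ are $\{LR,RL\}$, so the agent moves to $R$, then sees $\{RL,RR\}$ and reaches $RR$; under $s_2$ the visible leaves at $\varepsilon$ are $\{LL,LR,RL\}$, the subjective value of $L$ (namely that of $LL$) now beats that of $R$ (namely that of $RL$), so the agent moves to $L$ and can only reach $LL$. Hence $\textbf{SCBI}(T,s_1)=\{RR\}$ and $\textbf{SCBI}(T,s_2)=\{LL\}$, and since $RR\succ LL$ we get $z_1\succ z_2$.

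The step I expect to require the most care is the bookkeeping behind the nesting condition and the sight axioms. First, $s_1(h)\subseteq s_2(h)$ must be verified at every history, not only at the root; since \emph{NF} already forces parts of $s(\varepsilon)$ into the sight at deeper nodes, one should take the deeper values of $s_1$ as small as \emph{NF} permits and those of $s_2$ just large enough to contain them. Second --- and this is the conceptual heart of case $a)$ --- $s_2$ must be \emph{larger but still incomplete}: were $s_2$ the full-sight function, it would see $RR$ already at the root, the subjective value of $R$ would become that of $RR$, and the example would collapse into case $c)$. So the construction must keep $RR$ outside $s_2(\varepsilon)$ while still having $s_2(\varepsilon)\supsetneq s_1(\varepsilon)$; this is exactly the role of the extra, misleadingly attractive history $LL$.
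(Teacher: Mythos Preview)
Your proof is correct and follows essentially the same strategy as the paper: exhibit an explicit finite example for each of the three cases. The paper's own proof is much terser --- it simply points to Example~\ref{seemoreseeless} for case~(a), to Figure~\ref{Figure:counter-example-2} for case~(b), and to Figure~\ref{Figure:counter-example-1} for case~(c) --- whereas you spell out the sight functions in full and verify \emph{DC}, \emph{NF}, and the nesting $s_1\subseteq s_2$ at every node, which is a welcome addition. Your case~(c) is actually simpler than the paper's (taking $s_1=s_2$ rather than invoking a separate figure), and your case~(b) is a fresh but equivalent construction; case~(a) is the same example as the paper's, fleshed out with the explicit values of $s_1(L),s_1(R),s_2(L),s_2(R)$ that the informal Example~\ref{seemoreseeless} left implicit.
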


\begin{proof} Case (a) has been shown in Example  \ref{seemoreseeless}. Case (b): Obviously, Figure \ref{Figure:counter-example-2} offers an instance for this. Case (c): The scenario depicted in Figure \ref{Figure:counter-example-1} is an  example.$\Box$
\end{proof}

\vspace{-1ex}

In conclusion, full sight guarantees a maximal payoff. However, with short sight, increase of sight does not always improve the outcome. The  added sight may bring misleading information, e.g., a branch which is temporarily nicer but actually unpromising, and finally gives rise to an even worse outcome. Still, this does not mean that SCBI is deficient: rather, these observations seem realistic for real agents. These issues will be  discussed further in Section 4.

\section{A Logical Analysis}
After modelling decision-making with short sight by preference tree models, it is instructive to see what a logical language looks like for reasoning about these models, especially the role of sight in a SSDM process.  So far, no such logic has been proposed, though logics of game-theoretic structures have been extensively studied -- see  \cite{Hoek06,Harrenstein02onmodal} --  while there are a few preliminary logic analyses of sight on its own, \cite{Degremont14,Chanjuan13}.  In this section, we design a minimal and natural logical system that supports reasoning about sight in the context of single-agent decision-making processes, characterizing basic properties of preference-sight trees, and formally capturing the results in the previous section.

\subsection{Syntax and Semantics}
To reason about the key ingredients (i.e., histories, preferences, and sights) of a P-S tree, we take $P^{(T,s)}$ as a set of propositional letters, which at least contains the following {\footnote{The idea of defining $\overline{h}$ is motivated by \cite{BaltagSZ09}, where the authors define an atomic sentence $\overline{o}$ for each leaf in a game tree.}:

\vspace{1ex}

$\bullet$ $\overline{h}$ for each history $h$.

\vspace{1ex}

$\bullet$ $\overline{h_1\geq h_2}$ encoding the preference relation of the agent over all histories, and the strict part of which is $\overline{h_1> h_2}$.

\vspace{1ex}


$\bullet$ $\overline{s(h)}$  encoding the sight at each history $h$ in $T$.

\vspace{1ex}

Based on $P^{(T,s)}$, we give a language $\mathcal{L}$ for reasoning about P-S trees. In $\mathcal{L}$, we have a key dynamic operator $[!\varphi]$ for restricting to the worlds satisfying $\varphi$, and a universal modality with $A\varphi$ saying that $\varphi$ is true in every world.

\begin{definition}{\rm $($Preference-sight language$)$ Take any set of atomic letters $P^{(T,s)}$. The {\bf preference-sight language} $\mathcal{L}$ is given by the following BNF, where $p\in P^{(T,s)}$:}
\vspace{-0.5ex}
$$\varphi::= p~|\neg \varphi~|\varphi\wedge \psi~|[!\varphi]\psi ~|~A\varphi.$$
\end{definition}

We write $\langle!\varphi\rangle\varphi$ to abbreviate $\neg[!\varphi]\neg\varphi$.
\vspace{-0.5ex}

\begin{definition}{\rm $($Preference-sight models$)$ For  a P-S tree $(T,s)$, a \textbf{\emph{preference-sight model}} $M^{(T,s)}$ is a tuple $(H, \lhd, \mathcal{V})$ where the following holds:

\smallskip

$\bullet$ $H$ is the set of possible worlds, one for each history,
\smallskip

$\bullet$ $\lhd$ is the reachability (prefix) relation among worlds,
\smallskip

$\bullet$ $\mathcal{V}: P_T \rightarrow \rho(H)$ is an evaluation function satisfying:

\vspace{0.5ex}

\hspace{2ex}$(1)$ $\forall h\in H$, $\mathcal{V}(\overline{h})=\{h'|h'\lhd h\}$.

\vspace{-3.5ex}\begin{eqnarray*}
\hspace{-44ex}(2) ~\mathcal{V}(\overline{h_1\geq h_2}) =
\begin{cases}
H,       & \emph{IF}~ h_1\succeq h_2, \\
\emptyset, & Otherwise.
\end{cases}
\end{eqnarray*}

\vspace{-0.5ex}

\hspace{2ex}$(3)$ $\forall h\in H$, $\mathcal{V}(\overline{s(h)})= \bigcup\limits_{h'\in s(h)}\mathcal{V}(\overline{h'})$.}
\end{definition}

\vspace{-1.5ex}

Intuitively, $\overline{h}$ is true at all the worlds leading to $h$.  $\overline{h_1\geq h_2}$ is true everywhere if  $h_1\succeq h_2$, and nowhere otherwise.  Finally, $\mathcal{V}(\overline{s(h)})$ is a union of the worlds that make the given atom true for at least one element of $s(h)$.

\vspace{1ex}

There seems to be nothing striking in this syntax. However, given the special role of atoms, the natural model update differs from the usual one in dynamic-epistemic logic.

\vspace{-0.5ex}

\begin{definition}{\rm (Model update)\label{modelupdate} Given a preference-sight model $M^{(T,s)}=(H,\lhd,\mathcal{V})$ and a set $X\subset H$, the \textbf{updated model} $M^{(T,s)}_{!X}$ produced by the restriction of $X$ is defined as a tuple $(X,\lhd\cap X^2,\mathcal{V}_{!X})$, where \footnote{In this definition, $Z_X$ denotes the terminal histories in $X$, i.e., the set of histories that have no successors in $X$.}}
\end{definition}

\vspace{-4ex}

\begin{eqnarray*}
\hspace{1ex}\mathcal{V}_{!X}(p) =
\begin{cases}
\mathcal{V}_{!X} (\overline{h_1 \geq h_2}), & \emph{IF}~  p\emph{ is of the form} ~\overline{h_1\geq h_2} \\
\mathcal{V}(p)\cap X, & Otherwise
\end{cases}
\end{eqnarray*}

\vspace{-2ex}

\begin{eqnarray*}
\hspace{-2ex}\mathcal{V}_{!X}({\overline{h_1 \geq h_2}}) =
\begin{cases}
X,       & \emph{IF}~ {\mathcal{V}(\overline{z_1\geq z_2}})=H, ~\emph{where}\\
&~ z_1\in \emph{max}_{\succeq}\{z\in Z_X| h_1\lhd z\}, \\
&~ z_2\in \emph{max}_{\succeq}\{z\in Z_X| h_2\lhd z\} \\
\emptyset, & Otherwise
\end{cases}
\end{eqnarray*}

$M^{(T,s)}_{!X}$ is the update of the model  $M^{(T,s)}$  restricting the set of states to $X$, and the valuation function accordingly.  But crucially, the valuation for preference atoms in the new model reflects the updating process in the visible tree of Algorithm \ref{preferenceupdating}. In the following, we omit superscripts ${(T,s)}$.

\smallskip

The semantics for this language is basically standard, \cite{Blackburn2001b}, so we only mention the truth condition of $[!\varphi]\psi$:

\smallskip

Let $M$ be a preference-sight model. For any state $h$ in $M$,

\vspace{-2ex}

$$M, h\models [!\varphi]\psi ~\emph{iff} ~ M, h \models \varphi \Rightarrow M_{!\varphi}, h\models \psi.$$

Validity of formulas is defined as usual, cf. \cite{Blackburn2001b}.

\subsection{Main characterization results}

Despite its simplicity, $\mathcal{L}$ can express our results in previous sections concerning  properties and solutions of P-S trees.  We introduce some helpful syntactic abbreviations, and then state our main characterization results.

\medskip

$\bullet$ $\overline{Z_h}=\bigvee\{~\overline{z}~|~z\in Z_h\}$.

\smallskip

$\bullet$ $\overline{\emph{max}_{\geq}X}$=$\bigvee\{~\overline{h}~|~h\in X, ~\emph{and}~ h\succeq h' ~\emph{for}~ \forall h'\in X\}$.

\smallskip

$\bullet$ $\overline{\rm{BI}}=\bigvee\{~\overline{z}~|~z\in \textbf{BI}\}$ ($\overline{\rm{BI}}$ holds at $T$'s BI histories).

\smallskip

$\bullet$ $\overline{\rm{SCBI}}=\bigvee\{~\overline{z}~|~z\in \textbf{SCBI}\}$, that is, the formula $\overline{\rm{SCBI}}$ holds at  the SCBI histories of $T$.

\begin{proposition} Let $(T,s)$ be a P-S tree and $M$ be a $\mathcal{L}$-model for it. Then $(T,s)$ is \emph{preference-sight consistent} iff the following formula is valid in $M$:

\vspace{-1ex}

$$\bigwedge\limits_{h}\bigwedge\limits_{h_1\in H_h}\bigwedge\limits_{h_2\in H_h}((\overline{h_1\geq h_2} \rightarrow [!\overline{s(h)}]\overline{h_1\geq h_2})\wedge$$

\vspace{-2ex}

$$(\langle!\overline{s(h)}\rangle\overline{h_1\geq h_2}\rightarrow \overline{h_1\geq h_2}))$$

\end{proposition}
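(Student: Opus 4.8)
The plan is to reduce the validity in $M$ of the displayed formula, conjunct by conjunct, to the defining biconditional of preference-sight consistency (Definition~\ref{PSconsistency}): $h_1\succeq h_2 \iff h_1\succeq_h h_2$ for every history $h$ and all $h_1,h_2\in H_h$. Fix such $h,h_1,h_2$ and abbreviate $X=\mathcal{V}(\overline{s(h)})=\bigcup_{h'\in s(h)}\mathcal{V}(\overline{h'})$. Since every $h'\in s(h)$ extends $h$, unfolding clause~(1) of the valuation shows that $X$ is exactly the $\lhd$-downward closure of $s(h)$, i.e.\ $s(h)$ together with the path from $\varepsilon$ to $h$; in particular $h\in X$, so $X\neq\emptyset$. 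Using the $\mathrm{DC}$ property of the sight function --- so that $s(h)$ is a finite subtree rooted at $h$ --- one checks that the terminal histories of $X$ are precisely $Z_h$, and that for each $h_i\in H_h$ the set $\{z\in Z_X : h_i\lhd z\}=\{z\in Z_h : h_i\lhd z\}$ is finite and non-empty, hence has a $\succeq$-maximal element $z_i$, unique up to $\sim$.

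The substantive step is to show that $\mathcal{V}_{!X}(\overline{h_1\geq h_2})=X$ iff $h_1\succeq_h h_2$. By Definition~\ref{modelupdate}, $\mathcal{V}_{!X}(\overline{h_1\geq h_2})=X$ iff $\mathcal{V}(\overline{z_1\geq z_2})=H$ iff $z_1\succeq z_2$, with $z_1,z_2$ the $\succeq$-maximal terminals of $Z_X=Z_h$ extending $h_1,h_2$ from the previous paragraph. On the other hand, by induction on the finite subtree $s(h)$ one reads off from Algorithm~\ref{preferenceupdating} that the updated payoff of any $h'\in H_h$ equals $\max\{P(z): z\in Z_h,\ h'\lhd z\}$; hence $h'\sim_h z'$ for $z'\in\max_{\succeq}\{z\in Z_h: h'\lhd z\}$, and therefore $h_1\succeq_h h_2$ iff $z_1\succeq z_2$. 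Combining the two equivalences gives $\mathcal{V}_{!X}(\overline{h_1\geq h_2})=X \iff h_1\succeq_h h_2$ (and $\mathcal{V}_{!X}(\overline{h_1\geq h_2})=\emptyset$ otherwise). This is precisely where the correspondence between the bottom-up updating of Algorithm~\ref{preferenceupdating} and the ``max over reachable visible terminals'' clause of Definition~\ref{modelupdate} is used.

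It then remains to collapse the two dynamic modalities. Since $\overline{h_1\geq h_2}$ is a global atom (true everywhere iff $h_1\succeq h_2$, nowhere otherwise) and, by the previous step, $\mathcal{V}_{!X}(\overline{h_1\geq h_2})$ is all-or-nothing too, the truth conditions simplify: for any world $w$, $M,w\models\overline{s(h)}$ iff $w\in X$, so $M,w\models[!\overline{s(h)}]\overline{h_1\geq h_2}$ holds vacuously when $w\notin X$ and, when $w\in X$, holds iff $h_1\succeq_h h_2$; dually $M,w\models\langle!\overline{s(h)}\rangle\overline{h_1\geq h_2}$ iff $w\in X$ and $h_1\succeq_h h_2$. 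Using $X\neq\emptyset$ and distinguishing whether $h_1\succeq h_2$ (resp.\ $h_1\succeq_h h_2$) holds, one obtains: the first conjunct $\overline{h_1\geq h_2}\rightarrow[!\overline{s(h)}]\overline{h_1\geq h_2}$ is valid in $M$ iff $h_1\succeq h_2\Rightarrow h_1\succeq_h h_2$, and the second conjunct $\langle!\overline{s(h)}\rangle\overline{h_1\geq h_2}\rightarrow\overline{h_1\geq h_2}$ is valid in $M$ iff $h_1\succeq_h h_2\Rightarrow h_1\succeq h_2$. Conjoining over all $h$ and all $h_1,h_2\in H_h$ then yields exactly preference-sight consistency, establishing both directions of the Proposition.

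The only real obstacle is the middle step: verifying that the preference-atom clause of Definition~\ref{modelupdate} genuinely recomputes $\succeq_h$ as produced by Algorithm~\ref{preferenceupdating}. The rest --- the description of $X$, the identification $Z_X=Z_h$, and the reduction of the $[!\,\cdot\,]$ / $\langle!\,\cdot\,\rangle$ modalities to statements about the single uniform truth value of $\overline{h_1\geq h_2}$ --- is routine bookkeeping, modulo the mild convention that $\lhd$ is reflexive so that $\overline{h}$ holds at $h$ itself (used only to guarantee $X\neq\emptyset$).
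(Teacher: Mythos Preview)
The paper states this proposition without proof, so there is nothing to compare against directly. Your argument is correct and supplies exactly the details the paper omits: you unpack the semantics of $[!\overline{s(h)}]$ and $\langle!\overline{s(h)}\rangle$ applied to the global atom $\overline{h_1\geq h_2}$, identify the relativization set $X=\mathcal{V}(\overline{s(h)})$ and its terminals $Z_X=Z_h$, and then verify that the preference-atom clause of Definition~\ref{modelupdate} computes the same order $\succeq_h$ as Algorithm~\ref{preferenceupdating}. The reduction of each conjunct to one direction of the biconditional in Definition~\ref{PSconsistency} is clean, and your use of $X\neq\emptyset$ to rule out vacuous validity is the right observation.

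One small point worth tightening: your identification $Z_X=Z_h$ relies on $X$ being $s(h)$ together with the single chain from $\varepsilon$ to $h$, so that no proper ancestor of $h$ is terminal in $X$. This is fine, but it also means that for histories in $X\setminus s(h)$ (the ancestors of $h$) the clause of Definition~\ref{modelupdate} would assign them the payoff of the best $z\in Z_h$ below them---which happens to coincide with what Algorithm~\ref{preferenceupdating} would give if run on the larger tree $X$. Since the proposition only quantifies over $h_1,h_2\in H_h=s(h)$, you never need this, and your restriction to $h_i\in H_h$ is exactly right; it is just worth being explicit that the big conjunction in the formula ranges only over $H_h$, not over all of $X$.
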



\begin{lemma} For any P-S tree $(T,s)$ and model $M$ for it, a $BI$ history $h^*$ is \emph{sight-reachable} if and only if the following formula holds in $M$:

\vspace{1ex}


$(SR):~~$ $\bigwedge\limits_{h}\bigwedge\limits_{a\in A(h)}(A(\overline{(ha)}\rightarrow\overline{h^*})\rightarrow (A(\overline{(ha)}\rightarrow \overline{s(h)})))$.

\end{lemma}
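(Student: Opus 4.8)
The plan is to prove the equivalence by unwinding the semantics of the atoms $\overline{(ha)},\overline{h^*},\overline{s(h)}$ and of the universal modality $A$, thereby reducing the validity of $(SR)$ in $M$ to the combinatorial statement of Definition \ref{sightr}; the only non-bookkeeping ingredient will be the Downward-Closed property of sight functions. Note first that every conjunct of $(SR)$ is a Boolean combination of $A$-formulas, whose truth value does not depend on the evaluation world, so ``$(SR)$ holds in $M$'' means ``$(SR)$ is true at some, equivalently every, world of $M$''. I would then record two bookkeeping facts obtained directly from the valuation clauses together with the observation that $h\in\mathcal{V}(\overline{h})$ for every history $h$: (a) for any histories $g,g'$, $A(\overline{g}\rightarrow\overline{g'})$ is valid in $M$ iff $\mathcal{V}(\overline{g})\subseteq\mathcal{V}(\overline{g'})$ iff every prefix of $g$ is a prefix of $g'$ iff $g\lhd g'$ (using transitivity of $\lhd$ and $g\in\mathcal{V}(\overline{g})$); and (b) for any $h$ and $a\in A(h)$, $A(\overline{(ha)}\rightarrow\overline{s(h)})$ is valid in $M$ iff $\mathcal{V}(\overline{(ha)})\subseteq\mathcal{V}(\overline{s(h)})=\bigcup_{h'\in s(h)}\mathcal{V}(\overline{h'})$ iff $(ha)$ is a prefix of some element of $s(h)$.

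Next I would chain these facts together. By (a), the conjunct of $(SR)$ indexed by $(h,a)$ has a true antecedent exactly when $(ha)\lhd h^*$, and is vacuously true otherwise; hence $(SR)$ holds in $M$ iff for every $h$ and every $a\in A(h)$ with $(ha)\lhd h^*$ the consequent $A(\overline{(ha)}\rightarrow\overline{s(h)})$ holds, i.e., by (b), iff for every such $(ha)$ there is some $h'\in s(h)$ with $(ha)\lhd h'$. It then remains to show that, given $h\lhd(ha)$, the statement ``$(ha)$ is a prefix of some element of $s(h)$'' is equivalent to ``$(ha)\in s(h)$'', which is the same as ``$(ha)\in H_h$''. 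One direction is immediate. For the other, suppose $(ha)\lhd h'$ with $h'\in s(h)$: if $(ha)=h'$ we are done, and otherwise $h\lhd(ha)\lhd h'$ is a strict chain with $h'\in s(h)$, so the Downward-Closed property, applied with middle history $(ha)$, gives $(ha)\in s(h)$. Putting everything together, $(SR)$ holds in $M$ iff for all $(ha)\lhd h^*$ we have $(ha)\in H_h$, which is exactly the assertion that the BI history $h^*$ is sight-reachable.

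The argument is essentially bookkeeping, and the one point deserving care — which I would flag as the main obstacle — is the passage between ``$(ha)$ is a prefix of an element of $s(h)$'' and ``$(ha)\in s(h)$''. This equivalence fails for an arbitrary finite subset of $H|_h$ and holds precisely because $s$ is a genuine sight function, so I would make the appeal to the Downward-Closed property explicit and treat the borderline case $(ha)=h'$ separately, since DC as stated speaks of a strict chain $h\lhd h'\lhd h''$. I would also note in passing that fact (a) depends on reading the valuation clause $\mathcal{V}(\overline{h})=\{h'\mid h'\lhd h\}$ so that $h\in\mathcal{V}(\overline{h})$ — the same reading under which $\overline{\mathrm{BI}}$ and $\overline{\mathrm{SCBI}}$ hold at the BI resp.\ SCBI histories themselves — as otherwise $A(\overline{(ha)}\rightarrow\overline{h^*})$ would capture only a weaker ``all proper prefixes'' condition; and that the proof nowhere uses that $h^*\in\textbf{BI}$, so the characterization holds for an arbitrary history $h^*$.
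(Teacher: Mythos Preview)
Your proof is correct and follows the same approach as the paper's: both arguments reduce the validity of $(SR)$ to the definition of sight-reachability by observing that $A(\overline{(ha)}\rightarrow\overline{h^*})$ encodes $(ha)\lhd h^*$ and $A(\overline{(ha)}\rightarrow\overline{s(h)})$ encodes $(ha)\in s(h)$. Your version is in fact more careful than the paper's on one point: the paper simply asserts that $A(\overline{(ha)}\rightarrow\overline{s(h)})$ ``defines'' $(ha)\in s(h)$, whereas you correctly note that the semantics only yields ``$(ha)$ is a prefix of some element of $s(h)$'' and then invoke the Downward-Closed property (with the edge case $(ha)=h'$ handled separately) to close the gap.
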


\begin{proof} $(\Rightarrow)$ Suppose that $\rm{BI}$ history $h^*$ is sight-reachable. By Definition \ref{sightr}, we have that, for all $(ha)\lhd h^*$, it holds that $(ha)\in s(h)$, where $h,h'$ are histories, and $a$ is an action following $h$.  More formally, $(ha)\lhd h^*$ can be defined by the formula $A(\overline{(ha)}\rightarrow\overline{h^*})$ in the sense that, in $T$, for all $h$ and $a\in A(h)$, $(ha)\lhd h^*$ iff $M\models A(\overline{(ha)}\rightarrow\overline{h^*})$. And similarly $(ha)\in s(h)$ is defined by $A(\overline{(ha)}\rightarrow \overline{s(h)})$. Thus if a $\rm{BI}$ history $h^*$ is sight-reachable, then $M\models$ $\bigwedge_{h}\bigwedge_{a\in A(h)}(A(\overline{(ha)}\rightarrow\overline{h^*})\rightarrow (A(\overline{(ha)}\rightarrow \overline{s(h)})))$. The other direction can be proved in a similar way. $\Box$
\end{proof}

\begin{lemma} Let $(T,s)$ be a P-S tree and $M$ be a $\mathcal{L}$-model for it. A $\emph{BI}$ history $h^*$ is \emph{locally optimal} iff the following formula is valid in $M$:

\vspace{1ex}

$(LO):~~$\vspace{-4ex}$$(\bigwedge\limits_{h}\bigwedge\limits_{(hh')\in Z_h}(A(\overline{(hh')}\rightarrow \overline{h^*})\rightarrow$$
\vspace{-1ex}
$$(A(\overline{(hh')}\rightarrow \overline{\emph{max}_{\succeq}Z_h}) \wedge$$
\vspace{-2ex}
$$\bigwedge\limits_{(hh'')\in Z_h} (\overline{(hh')\sim (hh'')}\leftrightarrow \bigvee\limits_{z\in \textbf{BI}}(A(\overline{(hh'')}\rightarrow \overline{z}))))).$$


\end{lemma}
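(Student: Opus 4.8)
The plan is to prove the biconditional by unwinding both sides into the same combinatorial statement about the visible trees $T_h$, exactly as the preceding Lemma did for sight-reachability. First I would recall the intended meaning of local optimality from condition II) of the Equivalence Theorem: a $\mathrm{BI}$ history $h^*$ is locally optimal iff for every prefix $(hh')\lhd h^*$ with $(hh')\in Z_h$, we have $(hh')\in \mathrm{max}_{\succ} Z_h$, and for every other $(hh'')\in Z_h$, $(hh')\sim(hh'')$ holds iff $(hh'')$ is a prefix of some $z\in\textbf{BI}$. So the work is to check that each clause of $(LO)$ faithfully encodes each clause of this definition, under the valuation $\mathcal V$ and the update semantics of Definition \ref{modelupdate}.

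The key steps, in order, are: (i) Establish the translation dictionary — that $(hh')\lhd h^*$ corresponds to $M\models A(\overline{(hh')}\rightarrow\overline{h^*})$ (this is the same fact used in the previous Lemma, since $\mathcal V(\overline{g})=\{h'\mid h'\lhd g\}$, so $\overline{(hh')}\rightarrow\overline{h^*}$ is globally true iff the prefixes of $(hh')$ are contained in the prefixes of $h^*$, i.e. iff $(hh')\lhd h^*$); that $(hh')\in\mathrm{max}_{\succeq}Z_h$ corresponds to $M\models A(\overline{(hh')}\rightarrow\overline{\mathrm{max}_{\succeq}Z_h})$ by the syntactic abbreviation for $\overline{\mathrm{max}_{\geq}X}$; and that $(hh'')$ being a prefix of some $z\in\textbf{BI}$ corresponds to $M\models\bigvee_{z\in\textbf{BI}}A(\overline{(hh'')}\rightarrow\overline{z})$. (ii) Observe that the big conjunctions $\bigwedge_h$ and $\bigwedge_{(hh')\in Z_h}$ and $\bigwedge_{(hh'')\in Z_h}$ are finite (since each $s(h)$ is finite and $H$ is over finite histories, and for a fixed $(T,s)$ there are finitely many relevant $h$), so the formula $(LO)$ is a well-formed object of $\mathcal L$. (iii) For the ($\Rightarrow$) direction, assume $h^*$ is locally optimal; then for each triple $h,(hh'),(hh'')$ the antecedent $A(\overline{(hh')}\rightarrow\overline{h^*})$ is true exactly when $(hh')\lhd h^*$, and in that case condition II) gives both $(hh')\in\mathrm{max}_{\succeq}Z_h$ and the $\sim$/prefix-of-$\textbf{BI}$ equivalence, which by the dictionary makes the consequent true at every world; hence $(LO)$ is valid in $M$. (iv) For ($\Leftarrow$), run the same equivalences backwards: validity of $(LO)$ forces, for every $(hh')\lhd h^*$ in $Z_h$, that $(hh')\in\mathrm{max}_{\succeq}Z_h$ and the $\sim$-characterization, which is precisely condition II), i.e. $h^*$ is locally optimal.

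The one genuinely delicate point — and the step I expect to be the main obstacle — is the treatment of $\mathrm{max}_{\succeq}$ versus $\mathrm{max}_{\succ}$ and the handling of ties, because condition II) speaks of $\mathrm{max}_{\succ}Z_h$ while the abbreviation $\overline{\mathrm{max}_{\geq}X}$ is defined with $\succeq$, and the semantics must be read against the \emph{subjective} preference $\succeq_h$ living in $T_h$ rather than the objective $\succeq$. I would handle this by carefully noting that $\overline{\mathrm{max}_{\geq}Z_h}$ is meant to be evaluated after the relevant update to the visible tree (so that $\overline{h_1\geq h_2}$ there reflects $\succeq_h$ via the $\mathcal V_{!X}$ clause of Definition \ref{modelupdate} with $X=\overline{s(h)}$), and that the biconditional clause $\overline{(hh')\sim(hh'')}\leftrightarrow\bigvee_z A(\overline{(hh'')}\rightarrow\overline{z})$ is exactly what reconciles the strict-maximality requirement with the possibility of payoff-ties among terminal histories that are themselves prefixes of genuine $\textbf{BI}$ histories — this is the subtle content of condition II), and the rest of the argument is a routine unwinding once this correspondence is pinned down. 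I would therefore devote the bulk of the written proof to making this single equivalence explicit, and dispatch the surrounding conjunctions and the two directions briskly by appeal to the dictionary established in step (i) and the argument pattern of the preceding Lemma.
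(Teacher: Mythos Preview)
Your proposal is correct and follows essentially the same route as the paper's own proof: set up the translation dictionary (that $A(\overline{(hh')}\rightarrow\overline{h^*})$ encodes $(hh')\lhd h^*$, that $A(\overline{(hh')}\rightarrow\overline{\mathrm{max}_{\succeq}Z_h})$ encodes $(hh')\in\mathrm{max}_{\succeq}Z_h$, and that $\bigvee_{z\in\textbf{BI}}A(\overline{(hh'')}\rightarrow\overline{z})$ encodes $(hh'')\lhd z$ for some $z\in\textbf{BI}$), then unwind both directions. The paper's proof is in fact briefer than yours---it just states the dictionary and dismisses the second direction as ``a similar check.''

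One correction: the ``delicate point'' you flag is a red herring. The formula $(LO)$ contains no $[!\overline{s(h)}]$ operator, so nothing is evaluated in an updated model; $\overline{\mathrm{max}_{\succeq}Z_h}$ is the syntactic abbreviation computed with the \emph{objective} $\succeq$. This causes no trouble because, by Algorithm~1, the elements of $Z_h$ keep their objective payoffs in $T_h$, so $\succeq$ and $\succeq_h$ agree on $Z_h$ and the strict/weak and subjective/objective distinctions collapse there. You can therefore drop the discussion of $\mathcal V_{!X}$ entirely and the proof goes through exactly as in your steps (i)--(iv).
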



\vspace{0.5ex}

\begin{proof} ($\Leftarrow$) Suppose $\rm{BI}$ history $h^*$ is locally optimal. Then for $(hh')\lhd h^*$, if $(hh')\in Z_h$, we have $(hh')$ is   $\emph{max}_{\succeq}$ in $Z_h$. And for any $(hh'')$, $(hh'')\sim (hh')$ iff $\exists z\in \textbf{BI}$ s.t. $(hh'')\lhd z$. Similar with the above proposition, $A(\overline{(hh')}\rightarrow \overline{h^*})$ captures that $(hh')\lhd h^*$. And $A(\overline{(hh'')}\rightarrow \overline{z})$ shows that $(hh'')\lhd z$. Finally, $(\overline{(hh')}\rightarrow \overline{\emph{max}_{\succeq}Z_h})$ demonstrates that $(hh')$ is   $\emph{max}_{\succeq}$ in ${Z_h}$.  Direction ($\Rightarrow$) uses a similar check.
\end{proof}



\begin{proposition}$($$\mathcal{L}$-characterization of equivalence$)$ Let $(T,s)$ be a preference-sight tree and $M$  a model for it. Then the following formula is valid in $M$:

\vspace{-2ex}

$$\models (A(\overline{\emph{BI}}\leftrightarrow \overline{\emph{SCBI}}))\leftrightarrow$$
\vspace{-3ex}
$$\bigwedge_{h^*\in Z}((A(\overline{h^*}\rightarrow \overline{\emph{BI}}))\rightarrow (SR\wedge LO))$$

\end{proposition}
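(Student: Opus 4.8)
The plan is to reduce the proposition to the \emph{Equivalence Theorem} by rewriting each side of the claimed biconditional into its semantic content, and then to discharge the two sub-formulas $(SR)$ and $(LO)$ via the two preceding lemmas. A useful preliminary observation is that every formula occurring in the displayed biconditional — including $(SR)$ and $(LO)$, which are conjunctions of material implications between $A$-formulas — is a Boolean combination of formulas of the shape $A\psi$, and such a formula has a constant truth value across all worlds of $M$. Hence ``valid in $M$'' and ``true at some world of $M$'' coincide for all the formulas in play, and I may argue about truth values rather than individual worlds.

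First I would unpack the left-hand side $A(\overline{\mathrm{BI}}\leftrightarrow\overline{\mathrm{SCBI}})$. By clause $(1)$ of the valuation, $\overline{z}$ is true exactly at the prefixes of $z$, so $\overline{\mathrm{BI}}=\bigvee_{z\in\textbf{BI}}\overline{z}$ is true precisely at the worlds that are prefixes of some BI history, and similarly for $\overline{\mathrm{SCBI}}$. Since the elements of $\textbf{BI}$ and $\textbf{SCBI}$ are terminal, a terminal history lies in such a prefix set iff it is itself a member of the set; therefore the set of worlds satisfying $\overline{\mathrm{BI}}$ determines $\textbf{BI}$, and likewise $\overline{\mathrm{SCBI}}$ determines $\textbf{SCBI}$. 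Consequently $A(\overline{\mathrm{BI}}\leftrightarrow\overline{\mathrm{SCBI}})$ is valid in $M$ iff $\textbf{BI}=\textbf{SCBI}$.

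Next I would unpack the right-hand side. Fix $h^*\in Z$. The antecedent $A(\overline{h^*}\rightarrow\overline{\mathrm{BI}})$ says that every prefix of $h^*$ is a prefix of some BI history; taking the prefix $h^*$ itself and using that $h^*$ is terminal yields $h^*\in\textbf{BI}$, and the converse implication is immediate, so this antecedent is equivalent to $h^*\in\textbf{BI}$. When $h^*\in\textbf{BI}$, the lemma characterizing \emph{sight-reachability} gives that the $h^*$-instance of $(SR)$ holds in $M$ iff $h^*$ is sight-reachable, and the lemma characterizing \emph{local optimality} gives that the $h^*$-instance of $(LO)$ is valid in $M$ iff $h^*$ is locally optimal; when $h^*\notin\textbf{BI}$ the corresponding conjunct is vacuously true. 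Hence $\bigwedge_{h^*\in Z}\big((A(\overline{h^*}\rightarrow\overline{\mathrm{BI}}))\rightarrow(SR\wedge LO)\big)$ is valid in $M$ iff every $h^*\in\textbf{BI}$ is both sight-reachable and locally optimal, i.e. iff conditions I) and II) of the Equivalence Theorem both hold.

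Finally I would invoke the Equivalence Theorem, which states exactly that $\textbf{SCBI}=\textbf{BI}$ iff I) and II) hold; chaining this with the two translations above shows that the two sides of the displayed biconditional have the same truth value in $M$, so the whole formula is valid. I expect the main obstacle to be bookkeeping rather than any deep step: one must state cleanly that the free history $h^*$ in the lemmas' formulas $(SR)$ and $(LO)$ is the one bound by the outer conjunction $\bigwedge_{h^*\in Z}$, and one must verify the two small semantic equivalences — that the prefix set of $\overline{\mathrm{BI}}$ recovers $\textbf{BI}$, and that $A(\overline{h^*}\rightarrow\overline{\mathrm{BI}})$ is equivalent to $h^*\in\textbf{BI}$ — carefully enough that they slot directly into the Equivalence Theorem and the two lemmas without gaps.
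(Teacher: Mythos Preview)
Your proposal is correct, and it takes a different, more modular route than the paper's own argument. You reduce the biconditional to the Equivalence Theorem by (i) showing that $A(\overline{\mathrm{BI}}\leftrightarrow\overline{\mathrm{SCBI}})$ holds in $M$ iff $\textbf{BI}=\textbf{SCBI}$ as sets, (ii) showing that $A(\overline{h^*}\rightarrow\overline{\mathrm{BI}})$ is equivalent to $h^*\in\textbf{BI}$, (iii) invoking the two preceding lemmas to translate the $(SR)$ and $(LO)$ conjuncts into sight-reachability and local optimality of $h^*$, and then (iv) applying the Equivalence Theorem. The paper, by contrast, does not appeal back to the Equivalence Theorem or to the two lemmas; instead it argues each direction of the biconditional directly at the semantic level, essentially re-running the proof of the Equivalence Theorem inside the $\mathcal{L}$-setting (e.g., in the $(\Rightarrow)$ direction it assumes $\neg(SR)$ for some BI history $h^*$ and derives $h^*\notin\textbf{SCBI}$, contradicting $\overline{\mathrm{BI}}\leftrightarrow\overline{\mathrm{SCBI}}$). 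Your approach is cleaner and makes the logical dependence on the earlier results explicit; the paper's is more self-contained but partially duplicates work already done. Your preliminary observation that every subformula in play is a Boolean combination of $A$-formulas, hence world-independent, is a nice touch that justifies treating ``valid in $M$'' and ``true in $M$'' interchangeably throughout; the paper leaves this implicit.
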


\begin{proof}

\begin{itemize}


\item[] \hspace{-6ex}Direction ($\Rightarrow$). We need to prove the following:
\item[] 1) $(A(\overline{\rm{BI}}\leftrightarrow \overline{\rm{SCBI}}))\rightarrow$$\bigwedge_{h^*\in Z}(A(\overline{h^*}\rightarrow \overline{\rm{BI}})\rightarrow SR)$.

\item[] 2) $(A(\overline{\rm{BI}}\leftrightarrow \overline{\rm{SCBI}}))\rightarrow$$\bigwedge_{h^*\in Z}(A(\overline{h^*}\rightarrow \overline{\rm{BI}})\rightarrow LO)$.
\end{itemize}

\smallskip

\noindent For 1). It is equivalent to prove that, for any $h^*\in Z$, $(\overline{\rm{BI}}\leftrightarrow \overline{\rm{SCBI}}) \wedge (A(\overline{h^*}\rightarrow \overline{\rm{BI}}))\rightarrow SR$. Suppose $\neg(SR)$. Then $\exists(ha)\lhd h^*$, and $(ha)\notin T_h$, and so, at $h$, the branch leading to $h^*$ is not visible in $T_h$.  Thus, the BI history in $T_h$ could not be a branch leading to $h^*$. By the definition \textbf{SCBI}, it follows that $h^*\notin \textbf{SCBI}$.  However, by $\overline{h^*}\rightarrow \overline{\rm{BI}}$ we know that $h^*$ is a BI history. This contradicts $\overline{\rm{BI}}\leftrightarrow \overline{\rm{SCBI}}$.

\smallskip


2) can be proved in a similar style.


\smallskip

Direction ($\Leftarrow$). Suppose that $\neg(A(\overline{\rm{BI}}\leftrightarrow \overline{\rm{SCBI}}))$. Then

\smallskip

$(a)$: $\exists z^*\in \textbf{BI}$ and $z^*\notin \textbf{SCBI}$, or

\smallskip

$(b):$ $\exists z^*\in \textbf{SCBI}$ and $z^*\notin \textbf{BI}$.

\smallskip

\noindent If $(a)$, then, by the antecedent, we have that: $\forall (ha)\lhd z^*, (ha)\in H_h$.
Also, $\forall (hh')\in Z_h$ and $(hh')\lhd h^*$, it holds that $(hh')\in \emph{max}_{\succeq}Z_h$. Then it directly follows that $z^*$ is a SCBI history. Contradiction.

\smallskip

If $(b)$, then take any $z\in \textbf{BI}$, which shares a prefix $u$ with $z^*$, i.e., $u\lhd z$ and $u\lhd z^*$.  By the antecedent, we have $z_u\in \emph{max}_{\succeq}Z_h$. Since $z^*\notin \textbf{BI}$, it follows that $z_u> z^*_u$. Then $z^*\notin \textbf{SCBI}$. Once more, we have a contradiction. $\Box$
\end{proof}

\subsection{Valid principles}

The operator $[!\varphi]$ makes $\mathcal{L}$ a PAL-like language. However, the special model-update makes it different from standard PAL \cite{Ditmarsch2007}.  This suggests a close look at what is and what is not valid in preference-sight models.

First, some axioms in standard PAL do not hold in preference-sight models. For example, the $!\emph{ATOM}$ axiom,  $[!\varphi]p\leftrightarrow (\varphi\rightarrow p)$, is not valid when it is of the form below.

\begin{proposition} The following is not valid in preference-sight models, where $h,h_1,h_2$ represent arbitrary histories.

\smallskip

$!\emph{Sight-Preference}:$  \hspace{2ex} $[!\overline{s(h)}]\overline{h_1\geq h_2}\leftrightarrow (\overline{s(h)}\rightarrow \overline{h_1\geq h_2})$.
\end{proposition}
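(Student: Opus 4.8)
The plan is to exhibit a single small P-S tree in which the biconditional $[!\overline{s(h)}]\overline{h_1\geq h_2}\leftrightarrow (\overline{s(h)}\rightarrow \overline{h_1\geq h_2})$ fails at some world, i.e. to construct a concrete counter-model. The natural choice is one of the trees already drawn in the paper, or a minor variant of them: what we need is a history $h$ whose sight $s(h)$ is a proper subtree, two histories $h_1,h_2$ inside $s(h)$ for which the \emph{objective} order disagrees with the \emph{updated} (subjective) order produced by Algorithm~\ref{preferenceupdating}, and moreover these $h_1,h_2$ are \emph{not} terminal in the visible tree so that the bottom-up update actually changes their payoffs. The tree of Figure~\ref{Figure:counter-example-1} is tailor-made for this, since there $R\succ L$ objectively but $L\succ_\varepsilon R$ after the update at $\varepsilon$.

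First I would spell out the semantics of the two sides in this model. Take $h=\varepsilon$, $h_1=L$, $h_2=R$, and evaluate at a world $w$ (for instance $w=\varepsilon$ itself) which, by clause (1) of the valuation, satisfies $\overline{s(\varepsilon)}$ since $\varepsilon\lhd z$ for every $z$, hence $\varepsilon\in\mathcal V(\overline{h'})$ for the terminal $h'\in s(\varepsilon)$ and so $w\in\mathcal V(\overline{s(\varepsilon)})$. On the right-hand side, $\overline{s(\varepsilon)}$ is true at $w$, and $\overline{L\geq R}$ is false everywhere because objectively $R\succ L$ (clause (2) makes $\mathcal V(\overline{L\geq R})=\emptyset$); hence $\overline{s(\varepsilon)}\rightarrow\overline{L\geq R}$ is false at $w$. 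On the left-hand side, $[!\overline{s(\varepsilon)}]\overline{L\geq R}$ holds at $w$ iff (since $w\models\overline{s(\varepsilon)}$) the updated model $M_{!\overline{s(\varepsilon)}}$ satisfies $\overline{L\geq R}$ at $w$; by Definition~\ref{modelupdate}, $\mathcal V_{!X}(\overline{L\geq R})=X$ precisely when $\mathcal V(\overline{z_1\geq z_2})=H$ for the best terminals $z_1$ above $L$ and $z_2$ above $R$ in $X=s(\varepsilon)$. In the Figure~\ref{Figure:counter-example-1} tree the best terminal above $L$ is $LL$ with objective payoff $3$, the best above $R$ is $R$'s successor with payoff $2$, so $z_1\succeq z_2$ objectively, $\mathcal V(\overline{z_1\geq z_2})=H$, and therefore the left-hand side is \emph{true} at $w$. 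Thus the left side is true and the right side is false, so the biconditional fails and the formula is not valid.

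The only real subtlety — the step I would be most careful about — is matching the payoff-update in Definition~\ref{modelupdate} to the backward pass of Algorithm~\ref{preferenceupdating}, and in particular checking that $h_1=L$ and $h_2=R$ are genuinely non-terminal in $s(\varepsilon)$ (otherwise $Z_X$ would contain them and the update would be vacuous, collapsing the left side back to the right side). I would state explicitly that $Z_{s(\varepsilon)}$ consists of the leaves $LL,LR,R$ (or whatever the figure shows), that $L$ and $R$ lie strictly above these leaves, and that $\mathrm{max}_{\succeq}\{z\in Z_{s(\varepsilon)}\mid L\lhd z\}=\{LL\}$ while $\mathrm{max}_{\succeq}\{z\in Z_{s(\varepsilon)}\mid R\lhd z\}$ is $R$'s (unique) successor — so the comparison $z_1\geq z_2$ really does flip relative to the comparison $L\geq R$. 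Once that bookkeeping is pinned down the counterexample is immediate, and I would close by remarking that this mismatch is exactly the phenomenon, already noted after Definition~\ref{modelupdate}, that the valuation of preference atoms changes in a non-standard way under model update, which is why the PAL reduction axiom for atoms breaks for sight-indexed preference atoms. $\Box$
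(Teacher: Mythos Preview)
Your proposal is correct and follows essentially the same approach as the paper: both give a concrete counter-model in which the left-hand side $[!\overline{s(\varepsilon)}]\overline{h_1\geq h_2}$ is true at a world (because the sight-update flips the preference) while the right-hand side $\overline{s(\varepsilon)}\rightarrow\overline{h_1\geq h_2}$ is false there (because the objective preference goes the other way). The paper's own proof is terser---it simply points to a figure and asserts the two truth values---whereas you carefully unpack the valuation clauses and the model-update definition, which is a genuine improvement in exposition. One small over-statement: you say both $h_1$ and $h_2$ must be non-terminal in $s(\varepsilon)$ or ``the update would be vacuous''; in fact it suffices that \emph{one} of them (here $L$) has a best visible descendant whose objective payoff differs enough to reverse the comparison, and indeed in Figure~\ref{Figure:counter-example-1} the subjective payoff of $R$ stays at $2$, so $R$ may well be terminal in $s(\varepsilon)$ without spoiling the argument.
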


\begin{proof} For a counterexample, consider the tree $T$ in Figure \ref{Figure:preferenceandsight}. It is easy to see that in the model $M$ for $T$, $M\models [!\overline{s(\varepsilon)}]\overline{h_1\geq h_2}$ and $M\nvDash \overline{s(\varepsilon)}\rightarrow \overline{h_1\geq h_2}$, since there exists a state $\varepsilon$ such that  $M, \varepsilon \models \overline{s(\varepsilon)}$ and $M, \varepsilon \nvDash\overline{h_1\geq h_2}$.
\end{proof}

\vspace{-1ex}

This proposition says that subjective preference in visible trees is not necessarily consistent with objective preference.

Now let us see some interesting valid principles and their intuitive interpretations.

\begin{lemma} The formulas shown in Table \ref{table:validity} are valid, where $h, h_1, h_2,$ and $h_3$ are arbitrary histories. \end{lemma}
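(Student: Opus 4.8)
The plan is to verify each formula in Table~\ref{table:validity} individually by unfolding the truth condition for $[!\varphi]\psi$ and appealing directly to the valuation clauses in Definitions of the preference-sight model and of the model update. The key observation driving all cases is that the atoms $\overline{h}$, $\overline{h_1\geq h_2}$, and $\overline{s(h)}$ behave globally: each preference atom is valued as either all of $H$ or $\emptyset$, and $\overline{h}$ holds exactly on the prefixes of $h$, while $\overline{s(h)}$ collects the prefixes of the histories in $s(h)$. Because of this, most of the candidate validities reduce to a purely set-theoretic statement about how $s$, $\lhd$, and $\succeq$ interact — in particular, the properties $\mathrm{DC}$ and $\mathrm{NF}$ from the earlier Fact, together with the bottom-up definition of $\mathcal{V}_{!X}(\overline{h_1\geq h_2})$ via $\mathrm{max}_\succeq$ over terminal histories of $X$.

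First I would treat the "frame-like" formulas that involve only $\overline{h}$ and $\lhd$ (transitivity/irreflexivity of the prefix order, downward closure of $\overline{h}$, interaction of $A$ with the nesting of histories); these follow immediately from clause $(1)$ of the valuation, i.e.\ $\mathcal{V}(\overline{h})=\{h'\mid h'\lhd h\}$, and the fact that $\lhd$ is a genuine prefix relation on finite sequences. Next I would handle the formulas coupling sight and prefixes, where $\mathrm{DC}$ and $\mathrm{NF}$ are exactly the semantic content being expressed; here the work is simply to rewrite $\mathcal{V}(\overline{s(h)})=\bigcup_{h'\in s(h)}\mathcal{V}(\overline{h'})$ and read off the claim. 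Finally, for the formulas mentioning preference atoms inside $[!\overline{s(h)}]$ — the subtle ones — I would unfold $M,h'\models[!\overline{s(h)}]\overline{h_1\geq h_2}$ to: if $h'$ lies in (a prefix-world of) $s(h)$ then in $M_{!\overline{s(h)}}$ the atom $\overline{h_1\geq h_2}$ holds, and then invoke the update clause: this amounts to comparing the $\succeq$-best terminal histories of the visible tree below $h_1$ and below $h_2$. The point is that this comparison is uniform across the updated model (it yields $X$ or $\emptyset$), so the box and the material implication over $\overline{s(h)}$ line up — or fail to, as the preceding Proposition on $!\mathit{Sight\text{-}Preference}$ already warns, so I must be careful to only assert the weaker directions that actually hold.

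The main obstacle I anticipate is precisely the preference-update clause: unlike standard PAL, $\mathcal{V}_{!X}(\overline{h_1\geq h_2})$ is not $\mathcal{V}(\overline{h_1\geq h_2})\cap X$ but is recomputed from the $\mathrm{max}_\succeq$ of terminal histories reachable from $h_1$ and $h_2$ within $X$. So for any valid principle that survives iterated updates or that relates pre- and post-update preference, I need a small lemma describing how $\mathrm{max}_\succeq Z_X$ interacts with further restriction — essentially the correctness of Algorithm~\ref{preferenceupdating}, namely that the backward-propagated payoff of a node equals the objective payoff of the best terminal history of the visible tree below it. Given that lemma, each row of Table~\ref{table:validity} becomes a short syntactic-plus-set-theoretic check; without it, the preference rows are the only place real argument is needed. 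I would therefore state and prove that auxiliary fact first, then dispatch the table row by row.
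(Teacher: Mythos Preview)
Your plan is the same as the paper's---a row-by-row semantic check using the valuation clauses and the structural properties of $s$ and $\succeq$---and it would succeed. The paper in fact only spells out four representative cases ($T_s$, $\emph{TM}$, $\emph{DC}$, and $!\emph{ATOM}{\setminus}\emph{SP}$) and declares the remainder ``trivial or standard.''

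Where you diverge is in over-preparing for a difficulty that the table deliberately sidesteps. You anticipate rows ``mentioning preference atoms inside $[!\overline{s(h)}]$'' and propose an auxiliary lemma about the correctness of Algorithm~\ref{preferenceupdating} to handle them. But no such row is present: the carve-out in $!\emph{ATOM}{\setminus}\emph{SP}$ excludes exactly the schema $[!\overline{s(h)}]\,\overline{h_1\geq h_2}$ (whose failure was the point of the preceding Proposition), and the rows $T_{\geq}$, $4_{\geq}$, $to_{\geq}$ concern the \emph{un-updated} preference atom, so they follow immediately from clause~(2) of $\mathcal{V}$ together with $\succeq$ being a total order---they are not, as you write, ``frame-like formulas involving only $\overline{h}$ and $\lhd$.'' The remaining dynamic axioms $!\emph{NEG}$, $!\emph{CON}$, and $\emph{Dual}$ are purely compositional in the Boolean structure and hold regardless of how atomic valuations are recomputed, so no lemma about $\mathrm{max}_{\succeq}Z_X$ is required there either. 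Your caution is arguably warranted only for $!\emph{COM}$, where iterated non-standard updates must agree; the paper simply labels this ``standard,'' so in that one place your plan is more scrupulous than the original.
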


\begin{table}[h]
\begin{center}
\fontsize{10pt}{1.1\baselineskip}\selectfont{\begin{tabular}{|l|l|}
\hline
$\emph{Taut}$ & all propositional tautologies\\
\hline

\vspace{-1.5ex}

    &  \\

$T_{\geq}$    &  $\overline{h\geq h}$\\


$4_{\geq}$        & $\overline{h_1\geq h_2}\wedge \overline{h_2\geq h_3} \rightarrow \overline{h_1\geq h_3}$\\


$to_{\geq}$   &  $\overline{h_1\geq h_2} \vee \overline{h_1\geq h_2}$\\

\hline

\vspace{-1.5ex}

    &  \\

$T_{s}$        &  $\overline{h}\rightarrow \overline{s(h)}$\\

$\emph{TM}$  & $\bigwedge\limits_{z\in Z}\bigwedge\limits_{h}(A(\overline{z}\rightarrow \overline{h})\rightarrow A(\overline{h}\rightarrow \overline{z}))$\\

$\emph{DC}$  & $\bigwedge\limits_{h_3}\bigwedge\limits_{h_2\lhd h_3} \bigwedge\limits_{h_1\lhd h_2}(A(\overline{h_3}\rightarrow \overline{s(h_1)})\rightarrow A(\overline{h_2}\rightarrow \overline{s(h_1)}))$\\

$\emph{NF}$  & $\bigwedge\limits_{h_3}\bigwedge\limits_{h_2\lhd h_3} \bigwedge\limits_{h_1\lhd h_2}(A(\overline{h_3}\rightarrow \overline{s(h_1)})\rightarrow A(\overline{h_3}\rightarrow \overline{s(h_2)}))$\\

\hline

$!\tiny{\emph{ATOM$\setminus$SP}}$ &  $[!\varphi]p\leftrightarrow (\varphi\rightarrow p)$ \\
&(excluding the schema $!\emph{Sight-Preference}$)\\

$!\emph{NEG}$ &  $[!\varphi]\neg\psi\leftrightarrow (\varphi\rightarrow \neg[\varphi]\psi)$\\

$!\emph{CON}$ &  $[!\varphi](\psi\wedge \chi)\leftrightarrow ([!\varphi]\psi\wedge[!\varphi]\chi)$\\

$!\emph{COM}$ &  $[!\varphi][!\psi]\chi \leftrightarrow ![\varphi\wedge[!\varphi]\psi]\chi$\\

$\emph{Dual}$ & $[!\varphi]\psi\leftrightarrow\neg\langle!\varphi\rangle\neg\psi$\\

\hline

\end{tabular}
} \vspace{-0.5ex}\caption{Valid principles of $L$\vspace{-2ex}}\label{table:validity}
\vspace{-2ex}\end{center}
\end{table}
\vspace{-1ex}

\begin{proof}  We only prove some cases, proofs for the others are trivial or standard.

For $T_s$. Take any state $u$ with $M,u\models \overline{h}$. Then $u\in \mathcal{V}(\overline{h})$. As the sight function is reflexive, i.e., $h\in s(h)$, it holds that $\mathcal{V}(\overline{h})\subseteq \mathcal{V}(\overline{s(h)})$. So $u\in \mathcal{V}(\overline{s(h)})$.  Thus,  $M,u\models\overline{s(h)}$.

For $TM$. Take any state $u$, any history $h$ and any $z\in Z$, and suppose $M,u\models A(\overline{z}\rightarrow \overline{h})$. Then for any $u'$, $u'\in \mathcal{V}(\overline{z})$ implies that $u'\in \mathcal{V}(\overline{h})$. Thus, $\mathcal{V}(\overline{z})\subseteq \mathcal{V}(\overline{h})$. It follows that $z\in \mathcal{V}(\overline{h})$. Given that $z$ is terminal, by the definition of $ \mathcal{V}(\overline{h})$, it must be that $h=z$.  Thus, $M,u\models A(\overline{h}\rightarrow \overline{z})$.

For $\emph{DC}$. Take  any state $u$, suppose for some $h_1\lhd h_2\lhd h_3$, $M,u\models A(\overline{h_3}\rightarrow \overline{s(h_1)})$. Then we know $\mathcal{V}(\overline{h_3})\subseteq \mathcal{V}(\overline{s(h_1)})$. It follows that $h_3\in s(h_1)$.  As the sight function is downward closed, we have $h_2\in s(h_1)$. Thus, $M,u\models A(\overline{h_2}\rightarrow \overline{s(h_1)})$.

For $!\tiny{\emph{ATOM$\setminus$SP}}$. Take any state $u$, and let $M,u\models [!\varphi]p$ where $\varphi$ is not of the form $!\overline{s(h)}$ and $p$ is not of the form $\overline{h_1\geq h_2}$.  It holds that $M,u\models \varphi$ implies that $M_{!\varphi},u\models p$. By Definition \ref{modelupdate}, $M_{!\varphi},u\models p$ iff $M,u\models p$. Therefore, $M,u\models \varphi$ implies $M,u\models p$. Equivalently, then, $M,u\models \varphi\rightarrow p$. $\Box$\end{proof}

{\textbf{Interpretation of valid principles.}} Each of these axioms has some intuitive appeal. $T_{\geq}$, $4$ and $to_{\geq}$ show the \emph{reflexivity}, \emph{transitivity} and \emph{totality} of the preference relation, respectively.  Likewise, $T_{s}$ says that sight is \emph{reflexive}. $DC$ characterizes the (\emph{downward-closure}) property of sight. $NF$ encodes the \emph{non-forgetting} property of sight. $\emph{TM}$ guarantees that terminal histories of the P-S tree are actually \emph{terminal}. One further interesting point is that there is no correspondence of $\emph{TM}$ for  terminal histories of visible trees.

\begin{fact} The following formula is not valid in preference-sight models:

\hspace{13ex} $\bigwedge_u\bigwedge_{z\in Z_u}\bigwedge_{h}(A(\overline{z}\rightarrow \overline{h})\rightarrow  A(\overline{h}\rightarrow \overline{z})).$
\end{fact}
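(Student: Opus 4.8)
The plan is to exhibit a single small preference-sight tree in which some history is terminal in a visible tree $T_u$ but is properly extended in $T_u$ itself, so that the implication $A(\overline{z}\rightarrow\overline h)\rightarrow A(\overline h\rightarrow\overline z)$ fails at the relevant node. The point to exploit is exactly the remark made just before the statement: the $Z_h$ of a visible tree are, as the text says, ``typically non-terminal for $T$'', which means a history $z\in Z_u$ can have a genuine successor $z'$ in $H$ with $z\lhd z'$, while $z'\notin s(u)$. At such a $z$ we have $z\lhd z$ and $z\lhd z'$, hence by clause $(1)$ of the valuation $\mathcal V(\overline z)\subseteq\mathcal V(\overline{z'})$ is false but $\mathcal V(\overline z)\subseteq \mathcal V(\overline z)$ and also, taking $h=z'$, every world making $\overline z$ true makes $\overline{z'}$ true is what we need to break; more carefully, I take $h$ to be a strict extension of $z$ inside $H$ so that $\mathcal V(\overline z)\subsetneq\mathcal V(\overline h)$, giving $A(\overline z\rightarrow\overline h)$ true everywhere while $A(\overline h\rightarrow\overline z)$ fails at the world $z$ itself (since $z\in\mathcal V(\overline h)\setminus\mathcal V(\overline z)$).

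Concretely I would reuse one of the figures already in the paper — the tree of Figure~\ref{Figure:counter-example-2} works, with $s(\varepsilon)=\{L\}$: then $Z_\varepsilon=\{L\}$, but $L$ is not terminal in $T$ because $LL, LR\in H$. Taking $u=\varepsilon$, $z=L\in Z_u$, and $h=LL$, we have $\mathcal V(\overline L)=\{\varepsilon\}$ (the prefixes of $L$, excluding $L$ itself under the convention $\mathcal V(\overline h)=\{h'\mid h'\lhd h\}$ read strictly, or $\{\varepsilon\}$ anyway) and $\mathcal V(\overline{LL})\supseteq\{\varepsilon,L\}$, so $\mathcal V(\overline z)\subseteq\mathcal V(\overline h)$ holds, whence $A(\overline z\rightarrow\overline h)$ is true at every world; but $L\in\mathcal V(\overline{LL})\setminus\mathcal V(\overline L)$, so $A(\overline h\rightarrow\overline z)$ fails at $L$. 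Hence the conjunct for $u=\varepsilon$, $z=L$, $h=LL$ of the displayed big conjunction is false in $M$, so the whole formula is not valid.

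The only delicate point — and the one I expect to spend a sentence on rather than a paragraph — is bookkeeping around the exact reading of $\mathcal V(\overline h)=\{h'\mid h'\lhd h\}$ (whether $h$ itself is included via the reflexive or strict prefix relation $\lhd$); whichever convention the paper intends, the separation $\mathcal V(\overline z)\subsetneq\mathcal V(\overline h)$ for a strict extension $z\lhd h$ is genuine, because $z$ lies in the right-hand set but $z\not\lhd z$ strictly (or, if $\lhd$ is reflexive, some intermediate history between $z$ and $h$ does the separating job and one picks $h$ two steps past $z$). So the contrast with $\emph{TM}$ is structural: for terminal $z\in Z$ of $T$ there is no such $h$, which is precisely why $\emph{TM}$ is valid while its $Z_u$-analogue is not. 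I would close by noting explicitly that this reflects the phenomenon flagged in the text: a decider's horizon can cut off a path, making a non-terminal history play the role of a leaf inside the visible tree.
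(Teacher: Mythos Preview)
The paper states this Fact without proof; it is offered as a contrast to the validity $\emph{TM}$ and the text moves on immediately. So there is nothing to compare against, and your argument is the right one in substance: take any $u$ whose visible tree has a leaf $z\in Z_u$ that is \emph{not} terminal in $T$, choose a proper extension $h$ of $z$ in $H$, and observe that $\mathcal V(\overline z)\subsetneq\mathcal V(\overline h)$; then $A(\overline z\rightarrow\overline h)$ holds everywhere while $A(\overline h\rightarrow\overline z)$ fails. The instance from Figure~\ref{Figure:counter-example-2} with $u=\varepsilon$, $z=L$, $h=LL$ is a clean witness, and your closing remark tying this to the earlier observation that the $Z_h$ are ``typically non-terminal for $T$'' is exactly the structural point the paper is gesturing at.

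A few clean-ups. First, your opening sentence has a slip: you write that $z$ is ``properly extended in $T_u$ itself'', but you mean in $T$ --- in $T_u$ it is terminal by construction. Second, the first pass at the valuation computation is garbled (you assert that $\mathcal V(\overline z)\subseteq\mathcal V(\overline{z'})$ ``is false'' when that inclusion is precisely what you want, and you then self-correct with ``more carefully''); just delete the false start. Third, the hedging about whether $\lhd$ is reflexive can be resolved rather than flagged: the paper's own proof of $\emph{TM}$ infers $z\in\mathcal V(\overline h)$ from $\mathcal V(\overline z)\subseteq\mathcal V(\overline h)$, which only works if $z\in\mathcal V(\overline z)$, i.e.\ $\lhd$ is read reflexively. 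Under that reading your separating world is $h$ itself (here $LL\in\mathcal V(\overline{LL})\setminus\mathcal V(\overline L)$), not $z$; so the parenthetical ``fails at the world $z$ itself'' and the computation $\mathcal V(\overline L)=\{\varepsilon\}$ should be adjusted accordingly.
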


\vspace{1ex}

Other validities in the table are axioms for standard PAL. We postpone the study of a complete axiomatization of the logic L until future work.

To conclude this section, in $\mathcal{L}$, the ingredients including histories, preferences and sights are encoded as primitive propositions. Various earlier phenomena in P-S trees can thus be captured in a simple, direct and intuitive manner. This special-purpose logic, as we will see soon, is model-dependent, but it can also be formulated generically.


\section{Background in  game logics}

In this section, we relate our logic $L$ to existing logics for classical game theory, showing how ideas can be combined where useful.  Since so far we have been working with BI and SCBI histories, we first define strategies for P-S trees: A \emph{strategy} for a P-S tree $(T,s)$ is a function $\sigma: H\rightarrow A$ such that $\sigma(h)\in A(h)$. That is, $\sigma$ assigns each history $h$ an action that follows $h$. In particular, for a visible tree $T_h$, a `local strategy' $\sigma_h$ is a restriction of $\sigma$ to $T_h$, such that $\sigma_h(h')=\sigma(h')$ for any $h'\in T_h$.

\subsection{Generic formulation of $\mathcal{L}$}

In applied logic for structure analysis, there exist two extremes, viz. model-dependent `local languages' and `generic languages' that work across models. For a generic logic, a definition of a property $\pi$ is a formula $\varphi$ such that for all models $M$, $M$ has property $\pi$ iff $M\models \varphi$. For a local language, such a formula can depend on a given model $M$: there exists a formula $\varphi_M$ which depends on $M$, such that any model $M$ has the property $\pi$ iff $M\models \varphi_M$. However, in this case, the defining formula can be trivial. For example, one might define $\varphi_M$ simply as follows.

\vspace{-3ex}

\begin{eqnarray*}
\varphi_M =
\begin{cases}
\top,       & \emph{if} ~ M ~\emph{satisfies} ~ \pi \\
\bot, & Otherwise
\end{cases}
\end{eqnarray*}
\vspace{-1ex}

In this subsection, using a well-known \emph{Rationality} property as an example,  we discuss how model-dependent our earlier language $L$ is, and then show how it can be formulated in a generic way.  We first recall the results on classical BI. Given that we have been dealing with single-agent cases until now, in this Section, we will adapt the results from the literature on multi-player games  to the single-player case.

\vspace{0.5ex}

The BI strategy \cite{vanBenthem2011,Benthem14} is  the largest subrelation $\sigma$ of the total \emph{move} relation that has at least one successor at each node, while satisfying the rationality (RAT) property:

\smallskip

\textbf{RAT} \hspace{2ex} No alternative move for the player yields an outcome via further play with $\sigma$ that is strictly better than all the outcomes resulting from starting at the current move and then playing $\sigma$ all the way down the tree.

\smallskip

As argued in \cite{vanBenthem2011,Benthem14}, this rationality assumption is a confluence property for action and preference:

\vspace{1ex}

\textbf{CF} \vspace{-3ex} $$\forall x\forall y(x\sigma y\rightarrow\forall z(x ~\emph{move} ~z \rightarrow$$

\vspace{-4ex}

 $$\exists u(end(u)\wedge y\sigma^*u \wedge \forall v((end(v)\wedge z\sigma^*v )\rightarrow u \geq v))))$$

\vspace{1ex}

We can observe that there is also a corresponding rationality property for the local BI strategies that constitute an SCBI, which should however now express a confluence property for action, preference and sight. Specifically, for a P-S tree, each local BI strategy for the visible tree $T_h$ at $h$ is the largest subrelation $\sigma_h$ of the total \emph{move} relation in $T_h$, satisfying 1) $\sigma_h$ has at least one successor at each $h'\in T_h$, and 2) the following rationality property holds:

\smallskip

\textbf{RATS} \hspace{1.6ex} In the visible tree, there is one outcome obtained by playing $\sigma_h$ from the start to the end, that is no worse than all the outcomes yielded from any alternative first move followed by further play with $\sigma_h$.

\smallskip

This confluence property involving sight is expressible as follows in our language $\mathcal{L}$:

\vspace{-0.5ex}

\begin{proposition} Let $(T,s)$ be a P-S tree, and let $M$ be any model for it. $M$ satisfies \emph{RATS}  iff $M$ validates the following $\mathcal{L}$-formula, where $\sigma_h$ is the \emph{BI} strategy for visible tree at $h$ and where $(h(\sigma_h)^k)$ stands for the history reached from $h$ after executing $\sigma_h$ for $k$ times.
\end{proposition}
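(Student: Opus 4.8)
The plan is to mimic the structure of the classical BI/RAT characterization (property \textbf{CF}) but now carry the sight parameter $h$ explicitly through the formula, so that the confluence statement is relativized to the visible tree $T_h$ rather than to the full tree $T$. First I would fix the intended $\mathcal{L}$-formula: a big conjunction over all histories $h$ of a statement saying that the $\sigma_h$-generated play from $h$ reaches some terminal history $z\in Z_h$ which is $\succeq$-maximal among all outcomes reachable in $T_h$ by any deviating first move $a\in A_h(h)$ followed by $\sigma_h$-play. Concretely the inner formula should have the shape
\begin{eqnarray*}
\bigwedge_h\bigwedge_{a\in A_h(h)}\Bigl(A\bigl(\overline{(ha)}\to\overline{s(h)}\bigr)\to \bigvee_{k,\ell}\bigl(\overline{Z_h}\wedge\overline{(h(\sigma_h)^k)}\wedge \overline{(h(\sigma_h)^k)\geq (ha(\sigma_h)^\ell)}\bigr)\Bigr),
\end{eqnarray*}
using the abbreviations $\overline{Z_h}$ and $\overline{\mathrm{max}_{\geq}}$ already introduced, together with the atoms $\overline{h(\sigma_h)^k}$ for the $\sigma_h$-generated histories. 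Since every visible tree is finite (Definition of sight function: $|s(h)|<\omega$), the disjunctions over $k$ and $\ell$ are finite, so the formula is a legitimate $\mathcal{L}$-formula.

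Next I would prove the two directions. For ($\Rightarrow$): assume $M$ satisfies \textbf{RATS}. Fix $h$ and a deviating move $a\in A_h(h)$. \textbf{RATS} gives an outcome $z^\star\in Z_h$ obtained by playing $\sigma_h$ from $h$ — i.e. $z^\star=h(\sigma_h)^k$ for the appropriate $k$ — with $z^\star\succeq v$ for every $v\in Z_h$ of the form $ha(\sigma_h)^\ell$. Translating via the valuation clauses: $\overline{(h(\sigma_h)^k)}$ holds exactly along the prefixes of $z^\star$, $\overline{Z_h}$ holds there since $z^\star\in Z_h$, and by clause (2) of the valuation, $\overline{(h(\sigma_h)^k)\geq(ha(\sigma_h)^\ell)}$ is true everywhere because $z^\star\succeq v$. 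Hence the inner disjunct is satisfied, so the formula is valid in $M$. For ($\Leftarrow$): assume the formula is valid. Fix $h$; the $a=\sigma_h(h)$ conjunct (or directly the disjunction) supplies a $k$ with $h(\sigma_h)^k\in Z_h$, and for every deviating $a$ it supplies the $\succeq$-dominance of that outcome over all $ha(\sigma_h)^\ell\in Z_h$. This is precisely the content of \textbf{RATS}, after unwinding $y\sigma_h^{*}u$ into the finitely many iterates $\sigma_h^k$ — which is legitimate because $T_h$ is a finite tree, so $\sigma_h$-play terminates. A small routine check is that "playing $\sigma_h$ from $ha$" in a visible tree is well-defined, i.e. $\sigma_h$ does assign a move at each non-terminal node of $T_h$ by the defining property 1) of the local BI strategy.

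The main obstacle, and the step I would spend the most care on, is the faithful first-order-to-$\mathcal{L}$ translation of the confluence schema \textbf{CF}/\textbf{RATS}: in \textbf{CF} the quantifiers range over arbitrary terminal states reachable via $\sigma^{*}$, whereas in $\mathcal{L}$ we have no genuine reachability modality for $\sigma_h$ and must instead enumerate the concrete histories $h(\sigma_h)^k$ as named atoms. Making this enumeration rigorous requires (i) noting that $\sigma_h$ is a \emph{function} on $T_h$, so from any node there is a \emph{unique} $\sigma_h$-path, hence finitely many relevant iterates, and (ii) checking that the atom $\overline{(h(\sigma_h)^k)}$ together with $\overline{Z_h}$ correctly picks out "the $\sigma_h$-outcome from $h$" rather than some spurious world. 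Once that bookkeeping is in place, both directions are a direct unpacking of the valuation clauses, exactly parallel to the proof of the earlier $(SR)$ and $(LO)$ lemmas, and no genuinely new machinery is needed.
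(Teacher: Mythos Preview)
Your overall plan---relativize the confluence schema to each visible tree $T_h$ and enumerate the finitely many $\sigma_h$-iterates---matches the paper's intent, but the concrete $\mathcal{L}$-formula you write down does not work, and the defect shows up exactly in your $(\Rightarrow)$ argument.

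The problem is that the atoms $\overline{Z_h}$ and $\overline{(h(\sigma_h)^k)}$ are \emph{local}: by the valuation clause $\mathcal{V}(\overline{h'})=\{w\mid w\lhd h'\}$, the conjunction $\overline{Z_h}\wedge\overline{(h(\sigma_h)^k)}$ holds only at prefixes of $h(\sigma_h)^k$, not at arbitrary worlds. Meanwhile your antecedent $A(\overline{(ha)}\to\overline{s(h)})$ is a global $A$-statement, true at \emph{every} world whenever $(ha)\in s(h)$. Hence at any world $w$ lying off the $\sigma_h$-path from $h$, the implication fails and your formula is not valid even though \textbf{RATS} holds. A two-leaf tree with $L\succ R$ and full sight already exhibits this: at world $R$ the $(h=\varepsilon,\,a=R)$ conjunct is false. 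Your sentence ``$\overline{(h(\sigma_h)^k)}$ holds exactly along the prefixes of $z^\star$ \ldots\ hence the formula is valid'' is precisely the illicit jump from ``true somewhere'' to ``true everywhere''.

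The paper's formula $\textbf{CFS}_M$ avoids this by never asserting a bare history atom. Instead it uses the device $A(\overline{(h(\sigma_h)^k)}\leftrightarrow\overline{z})$, which is a \emph{global} proposition encoding the equality $h(\sigma_h)^k=z$: two histories have identical prefix-sets iff they coincide. The whole formula is then a Boolean combination of such $A$-biconditionals and preference atoms $\overline{z\geq z'}$, all of which are true either everywhere or nowhere, so that validity becomes the appropriate notion. Accordingly the paper's proof reduces to a single claim---$A(\overline{h'}\leftrightarrow\overline{z})$ holds iff $h'=z$---after which the formula reads off verbatim as ``there is some $z\in Z_h$ reached from $h$ by $\sigma_h$ such that for every alternative first move $a'\in A_h(h)$ and every $z'\in Z_h$ reached from $ha'$ by $\sigma_h$, $z\succeq z'$''. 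Your remarks on finiteness of $T_h$ and functionality of $\sigma_h$ are correct and implicitly used here too; the missing ingredient is the global-encoding trick via $A(\cdot\leftrightarrow\cdot)$.
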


${{\textbf{CFS}}}_{M}$ \vspace{-3.5ex}  $$\bigwedge\limits_{h}\bigvee\limits_{z\in Z_h}\bigvee\limits_{k=l(z)-l(h)}(A(\overline{(h(\sigma_h)^k)} \leftrightarrow \overline{z})$$

\vspace{-3ex}

$$\rightarrow (\bigwedge\limits_{a'\in A_h(h)}\bigwedge\limits_{z'\in Z_h}\bigwedge\limits_{m=l(z')-l(ha')}(A(\overline{(ha'(\sigma_{h})^m)} \leftrightarrow \overline{z'}))\rightarrow$$

\vspace{-2ex}

$$\overline{z\geq z'})),$$

\begin{proof} We first claim that at state $h\in H$, for any terminal history $z\in Z_h$, and $h'\in H_h$, $A(\overline{h'} \leftrightarrow\overline{z})$ implies that $h'=z$. This is straightforward since $A(\overline{h'} \leftrightarrow\overline{z})$ demonstrates that prefixes of $h'$ are  the same with those of $z$, which means that $h'=z$. Then $M\models \textbf{CFS}_M$ says that there is a terminal history $z_h$ following $h$ by playing a local BI strategy $\sigma_h$, such that $z\succeq  z'$ for any other  $z'\in Z_h$ which follows an alternative first move $a'\in A_h(h)$ via further play of $\sigma_h$. Therefore, we know that $M$ satisfies \textbf{RATS}. $\Box$\end{proof}

\vspace{-1ex}

However, compared with the generic logic in \cite{BenthemG10,Benthem14,vanBenthem2011}, the given definition in our logic is local. It is obvious that $\textbf{CF}$, the formula defining the property \textbf{RAT}, is insensitive to models --  while our \textbf{CFS}$_M$ relies on a given model for its ranges of big disjunctions and conjunctions, and in its model-dependent notations like $s(h)$ and $\overline{h_1\geq h_2}$.  Still, it is also clearly true that our definition is not as trivial as the earlier local trick. Therefore, our logic $\mathcal{L}$ seems somewhere between the two extremes of locality and genericity.  This feeling can be made precise by moving to a closely related truly generic first-order logic.

The relevant modified formula involves some natural auxiliary predicates. $x\lhd y$ says that $x$ is a prefix of $y$; $x \,\sphericalangle \, y$ means that $x$ can see $y$. Corresponding to the BI relation $\sigma$, $y \sigma(x) z$ says that from $y$, $z$ is a local backward induction move in the visible tree at $x$; $\sigma^k$ describes $\sigma$ being  composed  for $k$ times with $k\in \mathbb{N}$ \footnote{Here $x\sigma^k y$ is the abbreviation of $\exists y_1\exists y_2 \cdots\exists y_k(x\sigma y_1\wedge y_1\sigma y_2\wedge\cdots \wedge y_{k-1}\sigma y_k \wedge (y_k=y)).$}; $move$ and $\geq$ are still the move relation and preference relation, respectively, of the game.


\begin{proposition} Any model $M$ satisfies \textbf{\emph{RATS}} iff it validates the following formula.
\end{proposition}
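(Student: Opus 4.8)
The plan is to write down the generic first-order formula $\textbf{CFS}^{\text{gen}}$ that mirrors the local formula $\textbf{CFS}_M$ but uses the auxiliary predicates $\lhd$, $\sphericalangle$, $move$, $\geq$, and the iterated BI relation $\sigma(x)$ in place of the model-dependent big conjunctions and disjunctions. Concretely, I would propose something of the shape
\begin{eqnarray*}
\textbf{CFS}^{\text{gen}}: ~\forall x\bigl(\exists u(end_x(u)\wedge x\,\sigma(x)^*\,u \wedge \forall a'\forall v((x\,move\,a'\wedge a'\,\sigma(x)^*\,v\wedge end_x(v))\rightarrow u\geq v))\bigr),
\end{eqnarray*}
where $end_x(u)$ abbreviates ``$u$ is seen by $x$ and has no $\sigma(x)$-successor seen by $x$'' (i.e. $u$ is terminal in the visible tree $T_x$), $x\sphericalangle u$, and $x\,\sigma(x)^*\,u$ is the reflexive-transitive closure of the local BI move relation rooted at $x$. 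The first step is to fix these abbreviations carefully, in particular the encoding of ``terminal in $T_x$'' via $\sphericalangle$ together with absence of a further visible $\sigma(x)$-move, so that quantification ranges over exactly the histories of $T_x$.

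Next I would prove the biconditional by unfolding both sides. For the direction ($M$ satisfies \textbf{RATS} $\Rightarrow M\models\textbf{CFS}^{\text{gen}}$): fix a world $h$; since $\sigma_h$ is the local BI strategy on $T_h$ and $T_h$ is finite, following $\sigma_h$ from $h$ reaches a unique visible terminal history $z_h\in Z_h$, which witnesses the existential $u$; then \textbf{RATS} says $z_h\geq z'$ for every $z'\in Z_h$ reached from an alternative first move $a'\in A_h(h)$ followed by $\sigma_h$-play, which is exactly the universally quantified clause, using that $a'\,\sigma(h)^*\,v$ captures ``$v$ is the $\sigma_h$-endpoint after first move $a'$''. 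For the converse, read the first-order formula back into the ``there is a $\sigma_h$-outcome no worse than all alternative-first-move $\sigma_h$-outcomes'' statement, which is the definition of \textbf{RATS}. I would also note explicitly that $\textbf{CFS}^{\text{gen}}$ and the local $\textbf{CFS}_M$ are equivalent on every model $M$ — this is essentially the translation between the bounded quantifiers (ranging over $Z_h$, over $A_h(h)$, over the iteration count $k=l(z)-l(h)$) and the closure-based quantifiers — so the previous proposition transfers.

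The main obstacle I expect is getting the local BI relation $\sigma(x)$ right as a genuine first-order (or at worst, fixed-point-definable) predicate: $\sigma_h$ is itself defined by a backward-induction/greatest-fixed-point condition (the $\textbf{RATS}$-confluence property restricted to $T_h$), so writing ``$y\,\sigma(x)\,z$'' generically, uniformly in the base point $x$, really needs the machinery of the extended fixed-point logic the paper is pointing toward rather than plain first-order logic. I would therefore either (i) treat $\sigma(x)$ as a primitive relation symbol satisfying the stated intended reading (as the statement's preamble already does — ``$y\sigma(x)z$ says that from $y$, $z$ is a local backward induction move in the visible tree at $x$''), in which case the proof is the clean unfolding above, or (ii) remark that $\sigma(x)$ is itself the fixed point defined by $\textbf{RATS}$ localized to $T_x$, and that the equivalence then holds relative to models where $\sigma(x)$ is interpreted correctly. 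The secondary fiddly point is the finiteness/well-foundedness argument guaranteeing that ``follow $\sigma_h$ to the end'' yields a well-defined visible terminal history — this uses that histories are finite sequences and $s(h)$ is finite, so $T_h$ is a finite tree and $\sigma(h)^*$ from $h$ terminates; I would invoke this once and move on.
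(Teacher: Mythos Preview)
Your proposal is correct and in fact contains the paper's entire proof as one of its remarks. The paper's proof is a single line: it asserts that $M\models \textbf{CFS}(FO)$ iff $M\models \textbf{CFS}_M$, and then implicitly invokes the previous proposition (which established $M$ satisfies \textbf{RATS} iff $M\models \textbf{CFS}_M$). That is exactly your transfer step ``$\textbf{CFS}^{\text{gen}}$ and the local $\textbf{CFS}_M$ are equivalent on every model $M$ \dots\ so the previous proposition transfers.''

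Where you differ is in emphasis and in how much you worry. You front-load a direct unfolding of \textbf{RATS} $\Leftrightarrow$ $\textbf{CFS}^{\text{gen}}$; the paper skips this entirely and relies on the reduction to $\textbf{CFS}_M$. You also spend a paragraph on whether $\sigma(x)$ is genuinely first-order definable or needs a fixed-point construction; the paper simply treats $y\,\sigma(x)\,z$ as a primitive ternary predicate with the stipulated reading (your option (i)) and does not raise the issue at this point --- the fixed-point definition of $bi_{\textit{sight}}$ only comes in the next subsection. Your version is more careful and more self-contained; the paper's is terser and leans harder on the preceding proposition. One minor gap in your proposed formula: you write $x\,move\,a'$ for the alternative first move but omit the visibility constraint $x\sphericalangle a'$, which the paper's $\textbf{CFS}(FO)$ includes explicitly; without it the universal over $a'$ could range over moves outside $T_x$.
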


$\textbf{CFS}{(FO)}$:

\vspace{-5ex}

$$\forall x\{(\exists y(x\lhd y))\rightarrow$$

\vspace{-3ex}

$$\forall u[(x\sigma(x) u)\rightarrow \forall t( (x ~\emph{move}~t \wedge x\sphericalangle t)\rightarrow $$\\

\vspace{-9ex}

$$\exists z((x\sphericalangle z\wedge \neg\exists z'(z\lhd z' \wedge x\sphericalangle z')\wedge \exists k(u(\sigma(x))^k z))\wedge $$\\

\vspace{-9ex}

$$\forall v((x\sphericalangle v\wedge \neg\exists v'(v\lhd v' \wedge v\sphericalangle v')\wedge \exists l(t({\sigma(x))}^l v))\rightarrow $$\\

\vspace{-9ex}

$$\wedge z\geq v)))]\}.$$

\begin{proof} It is easy to show that $M\models \textbf{CFS}{(FO)} \emph{ iff } M\models \textbf{CFS}_M.\Box$
\end{proof}

\vspace{-1ex}

In summary, incorporating basic elements of P-S trees directly into first-order syntax makes $L$ intuitive and natural.

\smallskip

Even so, other logics exist for dealing with further aspects of game trees and solution procedures, and we will discuss a few  examples in what follows with a view to how they behave in the presence of sight.

\subsection{Solution procedures and fixed-point logics}

Recursive solution procedures naturally correspond to definitions in existing fixed-point logics, such as the widely used system  LFP(FO). An LFP(FO) formula mirroring the recursive nature of BI is constructed in \cite{BenthemG10,BenthemPR2011} to define the classical BI relation, based on the above property \textbf{RAT}.  Now, we have shown that sight-restricted SCBI, too, is a recursive game solution procedure.  Can LFP(FO) be used to define SCBI as well -- and if so, how?


The answer is yes, but we need an extension. Rather than a binary relation $bi$ as in \cite{BenthemG10,BenthemPR2011}, characterizing SCBI needs a \emph{ternary} relation.  First, we define the local BI relation in visible trees, which will be denoted by $bi_{\emph{sight}}$.  For any states $x,y,z$,  $bi_{\emph{sight}}(x,y,z)$ means that in the visible tree at $x$, the local BI strategy is $bi_{\emph{sight}}$, which chooses $z$ when the current state is $y$. It is then obvious that $bi_{sight}$ should satisfy the following simple first-order definable property, requiring the relevant states to be visible and reachable:

\vspace{-2ex}

$$bi_{sight}(x,y,z)\rightarrow \emph{see}(x,y)\wedge \emph{see}(x,z)\wedge \emph{move}(y,z).$$

The intuition of $bi_{\emph{sight}}(x,y,z)$ is then captured as follows:

\vspace{-2ex}

$$\forall x \forall y \forall z(bi_{\emph{sight}}(x,y,z)\rightarrow \forall t((see(x,t) \wedge \emph{move}(y,t))$$

\vspace{-5ex}

$$\rightarrow(\exists u (end_{\emph{sight}}(x,u)\wedge bi_{\emph{sight}}^*(x,z,u)\wedge \forall v((end_{\emph{sight}}(x,v)  \wedge$$

\vspace{-5ex}

$$bi_{\emph{sight}}^*(x,t,v))\rightarrow u\geq v))))).$$

Notice that all occurrences of  $bi_{\emph{sight}}$ in the above formulas are still syntactically positive. This allows us to define local BI strategy $bi_{\emph{sight}}$ with LFP(FO).

\vspace{-1ex}

\begin{proposition} The strategy $bi_{sight}$ can be defined as the relation $R$ in the following \emph{LFP(FO)} formula.

\vspace{-2ex}

$$\nu R, xyz\bullet \forall x \forall y \forall z(R(x,y,z)\rightarrow \forall t((see(x,t) \wedge \textit{move}(y,t))$$

\vspace{-5ex}

$$\rightarrow(\exists u (end_{\textit{sight}}(x,u)\wedge R^*(x,z,u) \wedge \forall v( (end_{\textit{sight}}(x,v)\wedge$$

\vspace{-5ex}

$$R^*(x,t,v))\rightarrow u\geq v))))).$$
\end{proposition}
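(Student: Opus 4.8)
The plan is to show the two directions of the biconditional by relating the fixed-point relation $R$ to the intended relation $bi_{sight}$ through the standard Knaster--Tarski machinery for greatest fixed points, combined with the confluence characterization of local backward induction developed just above. Concretely, I would first observe that the body of the $\nu R, xyz\bullet \ldots$ formula is syntactically positive in $R$ (every occurrence of $R$ and of $R^*$ sits under an even number of negations, since $R^*$ is itself built from $R$ positively), so the greatest-fixed-point operator is well defined and its value is the union of all post-fixed points of the associated monotone operator $\Phi$ on ternary relations over the domain of $M$. The task then reduces to identifying $bi_{sight}$ with the greatest post-fixed point of $\Phi$.

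The forward direction --- $bi_{sight} \subseteq R$ --- I would get by showing $bi_{sight}$ is itself a post-fixed point of $\Phi$, i.e. $bi_{sight} \subseteq \Phi(bi_{sight})$. This is exactly the content of the displayed confluence property for $bi_{sight}$ stated immediately before the proposition: if $bi_{sight}(x,y,z)$ holds, then for every visible alternative move $t$ from $y$ there is an endpoint $u$ reachable from $z$ via $bi_{sight}^*$ in the visible tree at $x$ that weakly dominates every endpoint $v$ reachable from $t$ via $bi_{sight}^*$. Feeding $bi_{sight}$ in for $R$ makes the body of the fixed-point formula true, so $bi_{sight}$ is a post-fixed point, hence contained in the greatest one, namely $R$. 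Here I would also invoke the side constraint $bi_{sight}(x,y,z)\rightarrow see(x,y)\wedge see(x,z)\wedge move(y,z)$ together with the determinacy of local BI (unique choice at each visible node, which follows from Algorithm~\ref{preferenceupdating} producing a single maximal value at each node and the tie-breaking convention), to ensure the endpoint predicates $end_{sight}$ and the iterates $(\sigma(x))^k$ behave as the formula expects.

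The reverse direction --- $R \subseteq bi_{sight}$ --- is the one I expect to be the main obstacle, because a post-fixed point of $\Phi$ need not be a \emph{function} at each visible node, whereas $bi_{sight}$ is. The greatest fixed point could in principle contain ``too many'' triples: a relation $R$ satisfying only the displayed dominance condition but choosing non-optimal moves somewhere. The fix is to argue that the dominance condition, read top-down along $\lhd$ within a fixed visible tree $T_x$, forces optimality at every node, by induction on the height of the node inside $T_x$ (which is finite since histories are finite). At a node $y$ one step above the $Z_x$ level, $R(x,y,z)$ plus the dominance clause says the chosen successor $z$ yields an endpoint no worse than any endpoint from any sibling move, i.e. $z$ realizes $\max_\succeq$ among the children's values --- which is precisely the local BI choice, so $(x,y,z)\in bi_{sight}$; and then by the non-forgetting property NF and the recursive structure of $R^*$, the inductive hypothesis lifts this one level up. Combining both inclusions gives $R = bi_{sight}$, and since $bi_{sight}$ was \emph{defined} to be the local BI strategy relation, the proposition follows; the write-up can legitimately compress all of this to ``positivity gives well-definedness, the confluence property shows $bi_{sight}$ is the greatest post-fixed point by the same Knaster--Tarski argument used for $bi$ in \cite{BenthemG10,BenthemPR2011}, now applied fibrewise over the parameter $x$,'' with the height induction inside each fibre $T_x$ flagged as the one genuinely new ingredient relative to the binary case.
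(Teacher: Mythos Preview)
Your proposal is essentially correct and follows exactly the line the paper gestures at: the paper itself offers no proof beyond the sentence ``It can be proved formally that $bi_{\textit{sight}}$ is a greatest-fixed-point of the above formula,'' together with the preceding remark that all occurrences of $bi_{\textit{sight}}$ are syntactically positive. Your Knaster--Tarski argument---positivity for well-definedness, the displayed confluence property to show $bi_{\textit{sight}}$ is a post-fixed point, and a height induction inside each visible tree $T_x$ for the converse inclusion---is precisely the elaboration the paper defers to \cite{BenthemG10,BenthemPR2011}, now carried out fibrewise over the sight parameter $x$.

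One small point worth tightening: your worry that the greatest fixed point might be ``too large'' because it is not functional is slightly off-target. In this framework (as in the classical $bi$ case the paper cites), the BI strategy is \emph{defined} as the largest subrelation of \textit{move} satisfying the rationality property, so $bi_{\textit{sight}}$ is itself relational and may be multi-valued at ties; there is no tie-breaking convention in the paper, and Algorithm~1 only fixes payoffs, not choices. Thus the reverse inclusion should be argued not as ``$R$ must be a function'' but as ``every triple in $R$ picks a move whose continuation value is maximal,'' which your height induction does establish. With that reformulation your argument goes through cleanly.
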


\vspace{-1ex}

It can be proved formally that $bi_{\emph{sight}}$ is a greatest-fixed-point of the above formula. Based on  $bi_{\emph{sight}}$, we now proceed to show that the SCBI relation is LFP(FO) definable.

\vspace{-1ex}

\begin{corollary} The \emph{SCBI} relation \textit{scbi} for a P-S tree can be represented in the following formula:

\vspace{-3ex}

$$\forall x \forall y(scbi(x,y)\leftrightarrow bi_{\emph{sight}}(x,x,y)).$$
\end{corollary}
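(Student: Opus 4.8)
The plan is to unfold both sides of the claimed biconditional directly from the definitions and from the LFP(FO) characterisation of $bi_{\emph{sight}}$ established in the preceding proposition. Recall that $scbi(x,y)$ is meant to hold exactly when, at state $x$, the move to $y$ is the one dictated by the sight-compatible backward induction analysis; and that, by Definition \ref{SCBI} specialised to a single decision step, this analysis at $x$ is precisely a local BI analysis carried out inside the visible tree $T_x$ rooted at $x$. First I would record the trivial structural fact that $x\in s(x)$ (reflexivity of the sight function, i.e. $x\in H|_x$), so that $see(x,x)$ holds and the root of the visible tree at $x$ is $x$ itself; this is what makes the diagonal instance $bi_{\emph{sight}}(x,x,y)$ meaningful.

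Next I would argue the left-to-right direction: if $scbi(x,y)$, then $y$ is the first move along some SCBI continuation starting at $x$, which by the definition of SCBI means $y$ is selected by the local backward-induction strategy of the visible tree $T_x$ when the play is at its root. Since that local strategy is exactly the relation $bi_{\emph{sight}}(x,\cdot,\cdot)$ (the greatest fixed point from the previous proposition), and the play is at the root $x$, we get $bi_{\emph{sight}}(x,x,y)$. For the converse, if $bi_{\emph{sight}}(x,x,y)$ holds, then by the side condition $bi_{sight}(x,y,z)\rightarrow see(x,y)\wedge see(x,z)\wedge move(y,z)$ we have $see(x,y)$ and $move(x,y)$, so $y$ is a visible first move at $x$; and the confluence clause defining $bi_{\emph{sight}}$ guarantees that continuing with $bi_{\emph{sight}}$ from $y$ reaches a $\geq$-maximal visible terminal, i.e. $y$ initiates a locally optimal continuation in $T_x$, which is precisely the requirement for $y$ to be the first step of an SCBI history. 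Chaining this step-by-step (each prefix $h$ of the resulting path is, by Non-Forgetting and the same argument applied at $h$, handled by its own local analysis) recovers membership in $\textbf{SCBI}$.

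The main obstacle I anticipate is bookkeeping the shift between the one-step "move" reading of $scbi$ and the whole-history reading of $\textbf{SCBI}$ in Definition \ref{SCBI}: the corollary as stated is about the strategy relation $scbi(x,y)$ rather than about terminal histories, so the real content is just that the strategy induced by iterating $bi_{\emph{sight}}$ along the diagonal coincides step-by-step with any SCBI strategy. The cleanest way to discharge this is to note that for every history $h$ lying on an SCBI path, the relevant visible tree is $T_h$ (rooted at $h$), the local BI move there is $bi_{\emph{sight}}(h,h,\cdot)$, and by the DC/NF properties of $s$ these local analyses cohere; hence $scbi$ defined pointwise as $bi_{\emph{sight}}(x,x,\cdot)$ is well-defined and agrees with the SCBI solution concept. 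Everything else — positivity of $bi_{\emph{sight}}$, hence its LFP(FO)-definability — has already been secured, so the corollary follows by a short first-order equivalence argument rather than any genuinely new construction.
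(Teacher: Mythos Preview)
Your proposal is correct and follows the same definitional line the paper intends: the paper states this corollary without proof, treating it as an immediate consequence of the preceding proposition, since by construction the SCBI move at $x$ is precisely the local BI move at the root of the visible tree $T_x$, which is exactly what $bi_{\emph{sight}}(x,x,y)$ encodes. Your discussion of chaining via DC/NF to recover full SCBI histories is more than the corollary itself requires (it concerns only the one-step relation $scbi(x,y)$), but you already flag this, and the core argument is sound and matches the paper's intent.
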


As in the original classical case, this LFP(FO) definability of \emph{scbi} exposes an intersection between the logical foundation of computation and the recursive nature of sight-compatible backward induction solutions for P-S trees.

\subsection{Modal surface logic of best action}

In contrast with detailed formalism of solutions with LFP(FO), there is the modal surface logic of \cite{BenthemOtterlooRoy}, which enables direct and natural reasoning about best actions without considering the underlying details of recursive computation. First of all, we list its modalities for classical BI. $[bi]$ and $[BI]$ encode the BI move and BI paths respectively. $[\emph{best}]\varphi$ says that $\varphi$ is true in some successor of the current node that can be reached in one step via the \emph{bi} move.

\medskip

$M,h\models end$ iff $h\in Z$.

\vspace{0.5ex}

$M,h\models [move]\varphi$ iff $\forall$ $h'=(ha)$ with $a\in A(h)$, $M, h'\models \varphi$.

\vspace{0.5ex}

$M,h\models [best]\varphi$ iff for all $h'$ with  $h'\in bi(h)$, $M, h'\models \varphi$.

\vspace{0.5ex}

$M,h\models [bi]\varphi$ iff for all $h'$ with  $h'\in bi(h)$, $M, h'\models \varphi$.

\vspace{0.5ex}

$M,h\models [bi^*]\varphi$ iff $M, u\models \varphi$ for all $u$ with $u\in (bi)^*(h)$.

\vspace{0.5ex}

$M,h\models [BI]\varphi$ iff for all $z$ with  $z\in \textbf{BI}$, $M, z\models \varphi$.

\vspace{2ex}

The above logic is still applicable in our setting, but it requires substantial extension for sight-related concepts. In accordance with $[bi]$ and $[BI]$, we use $[scbi]$ and $[\emph{SCBI}]$ as  operators for the SCBI strategy and SCBI path, respectively.  For the local BI strategy and path in visible trees, the modalities are  $[bi_{\emph{sight}}]$ and $[\emph{BI}_{\emph{sight}}]$. Moreover, recall that $M_{!s(h)}$ is the updated model obtained in the way of Definition \ref{modelupdate}.

\medskip

$M,h \models [scbi]\varphi$ iff for all $h'$ with $h'\in scbi(h)$, $M, h'\models \varphi$.

\vspace{0.5ex}

$M,h\models [\emph{SCBI}]\varphi$ iff for all $h'$ with  $z\in \textbf{SCBI}$, $M, z\models \varphi$.

\vspace{0.5ex}

$M,h\models [!\emph{sight}]\varphi$ iff $ ~M_{!s(h)},h\models \varphi.$

\vspace{0.5ex}

$M_{!s(h)}, u\models end_{\emph{sight}}$ iff $u\in Z_h.$

\vspace{0.5ex}

$M_{!s(h)}, u\models [move_{\emph{sight}}]\varphi~$ iff  for $\forall u'=(ua)$ $\emph{with} ~ a\in A_h(u),$

\vspace{-0.2ex}

\hspace{28ex} $M_{!s(h)}, u'\models \varphi$.

\vspace{0.2ex}

$M_{!s(h)}, u \models [best_{\emph{sight}}]\varphi$ iff $M,u'\models \varphi$ for $\forall u'\in bi_h(u)$.

\vspace{0.5ex}

$M_{!s(h)}, u \models [bi_{\emph{sight}}]\varphi$ iff $M, u'\models \varphi$ for $\forall u'\in bi_h(u)$.

\vspace{0.5ex}

$M_{!s(h)}, u\models [(bi_{\emph{sight}})^*]\varphi$  iff $M_{!s(h)}, u'\models \varphi$ for all $u'$,

\hspace{27.5ex}                                   such that $u'\in (bi_h)^*(u)$.

\vspace{0.5ex}

$M,h\models [BI_\emph{sight}]\varphi$ iff for all $z$ with  $z\in \textbf{BI}_h$, $M, z\models \varphi$.

\medskip

We give a few illustrations of  new issues that arise now.

\smallskip

\noindent{\bf Capturing the SCBI strategy} For a start, we are now able to characterize the SCBI strategy, in a similar vein as the frame correspondence for the classical BI strategy in \cite{BenthemOtterlooRoy}.

\vspace{-1ex}

\begin{proposition} The \emph{BI} strategy is the unique relation $bi$ satisfying this modal axiom for all propositions $p$:

\vspace{-3ex}

$$(\langle \textit{bi}^*\rangle(\textit{end}\wedge p))\rightarrow([\textit{move}][\sigma^*](\textit{end}\wedge \langle\leq\rangle p))$$
\end{proposition}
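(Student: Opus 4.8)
The plan is to adapt the classical frame-correspondence argument of \cite{BenthemOtterlooRoy} for the $[bi]$ operator to the single-agent preference-tree setting. I would prove the two directions separately: (i) the $\textbf{BI}$ relation $\sigma$ (i.e.\ the standard backward-induction move relation for the P tree $T$) validates the axiom, and (ii) any relation $bi$ that validates the axiom for all propositions $p$ must coincide with $\sigma$.

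For direction (i), I would fix an arbitrary state $h$ and suppose $M,h\models \langle bi^*\rangle(end\wedge p)$, so there is a $\sigma$-reachable terminal history $z$ with $M,z\models p$. I then take any one-step $move$-successor $h'=(ha)$ of $h$ and must exhibit a $\sigma^*$-reachable terminal history $v$ from $h'$ with $M,v\models end\wedge\langle\leq\rangle p$; since $\langle\leq\rangle p$ holds at $v$ exactly when some history $\geq$-above-or-equal to $v$ satisfies $p$, it suffices to find such a $v$ with $v\leq z$. This is precisely the content of the rationality/confluence property \textbf{CF} (equivalently \textbf{RAT}): the backward-induction outcome $z$ reached by following $\sigma$ from $h$ is $\succeq$ every outcome obtainable by a deviating first move $a$ followed by $\sigma$. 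So I would simply invoke \textbf{CF} (available from the earlier discussion of the BI strategy) to produce $v$, and read off $v\leq z$ and $M,z\models p$, hence $M,v\models\langle\leq\rangle p$. Note $end$ holds at $v$ by construction. The quantifier over propositions $p$ is harmless here since $p$ only appears positively.

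For direction (ii), I would argue by the standard maximality/uniqueness technique: suppose $bi$ validates the axiom for all $p$. Using the freedom to instantiate $p$ by the characteristic proposition $\overline{z}$ of a single terminal history (available in our language $\mathcal{L}$, which is exactly the point of having atoms $\overline{h}$), the modal axiom pins down, at every node, which outcomes are $bi^*$-reachable, and the confluence condition forces these to be the $\succeq$-maximal ones among the move-reachable outcomes. A downward induction on the height of $h$ in the finite tree $T$ then shows $bi(h)=\sigma(h)$: at terminal nodes both are empty; at an internal node $h$, the axiom applied with $p=\overline{z}$ for the various terminal descendants shows that the $bi$-outcome from $h$ must be $\geq$ every deviating outcome, which (given the inductive hypothesis that $bi$ agrees with $\sigma$ strictly below $h$) is exactly the defining property of $\sigma(h)$; since $\sigma$ is the \emph{largest} such subrelation and $bi$ has a successor at each non-terminal node, $bi(h)=\sigma(h)$. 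Finiteness of $H$ is what makes this induction legitimate and guarantees the fixed point is reached.

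The main obstacle I anticipate is handling the interaction of the propositional quantification (\,``for all propositions $p$''\,) with the $\langle\leq\rangle$ modality in direction (ii): one must be careful that instantiating $p$ by atoms $\overline{z}$ genuinely isolates individual outcomes and that $\langle\leq\rangle\overline{z}$ behaves as ``$\succeq$ some history satisfying $\overline{z}$'' under the valuation clause for preference atoms — here the fact that $\overline{z}$ is true exactly on the prefixes of $z$ (clause (1) of the valuation) and that $\overline{h_1\geq h_2}$ is globally true or globally false (clause (2)) must be used together to make the reading come out right. A secondary subtlety is that, unlike the classical multi-agent case, there is a single decision-maker, so ``no alternative move for the player'' simplifies, but one should still check that the reflexive-transitive closure $bi^*$ in the finite tree correctly yields terminal histories and that \textbf{CF}, as stated for the move relation, transfers without change to the prefix-based move relation on $H$. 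Everything else is bookkeeping about finite trees.
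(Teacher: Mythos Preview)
The paper does not actually prove this proposition: it is stated as a recall of the frame-correspondence result from \cite{BenthemOtterlooRoy}, introduced with the phrase ``in a similar vein as the frame correspondence for the classical BI strategy in \cite{BenthemOtterlooRoy}''. There is no proof in the paper to compare your attempt against; the proposition serves only as background for the subsequent SCBI characterization (Proposition~4.5).

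That said, your sketch is the right shape for the standard argument. One point to be careful about: you propose to instantiate $p$ with the atoms $\overline{z}$ from the language $\mathcal{L}$, but this proposition lives in the modal surface logic of Section~4.3, not in $\mathcal{L}$. The ``for all propositions $p$'' is a second-order schema over arbitrary unary predicates on the frame (this is what makes it a frame-correspondence result), so the correct instantiation in direction~(ii) is with the singleton predicate true only at a chosen terminal $z$, not with the prefix-closed atom $\overline{z}$ of $\mathcal{L}$, whose extension is $\{h': h'\lhd z\}$. Using $\overline{z}$ would make $\langle\leq\rangle\overline{z}$ true at any node dominated by some prefix of $z$, which is too weak to isolate the outcome $z$. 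With the singleton instantiation the induction you describe goes through as in the original van~Benthem--van~Otterloo--Roy argument.
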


\vspace{-1ex}

Along the same lines,  we can express the SCBI strategy in P-S trees based on the idea that each \emph{scbi} move coincides with a local BI move within the current visible tree.

\vspace{-1ex}

\begin{proposition}  The \emph{SCBI} strategy is the  relation $scbi$ satisfying the following axioms for all propositions $p$:
\end{proposition}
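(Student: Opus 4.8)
The plan is to mimic the frame-correspondence argument of \cite{BenthemOtterlooRoy} for the classical BI strategy (the preceding proposition), but with every modality relativised to the current visible tree via the update $M_{!s(h)}$. First I would write down the candidate axioms: the key one should say that whenever a node can reach, by iterated $scbi$-moves, an endpoint satisfying $p$, then from every $move$-successor (that is within sight) one can, by iterated local-BI moves in the visible tree, reach an endpoint satisfying $\langle\leq\rangle p$. Concretely, for all propositions $p$,
$$(\langle scbi^*\rangle(end\wedge p))\rightarrow [move_{\emph{sight}}][(bi_{\emph{sight}})^*](end_{\emph{sight}}\wedge\langle\leq\rangle p),$$
together with the first-order side condition already isolated in the excerpt, namely that $scbi(x,y)\leftrightarrow bi_{\emph{sight}}(x,x,y)$, i.e. the $scbi$ move at $h$ agrees with the local BI move of the visible tree $T_h$ evaluated at its own root. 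The proof then splits into the two usual directions: (i) the genuine SCBI relation validates these axioms, and (ii) any relation validating them must coincide with SCBI.

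For direction (i) I would invoke the Equivalence-style reasoning together with the definition of $bi_{\emph{sight}}$: since $bi_{\emph{sight}}$ is the greatest fixed point of the LFP(FO) formula of the earlier Proposition, it satisfies the confluence property \textbf{RATS}, which is exactly the modal axiom above read in $M_{!s(h)}$; restricting to the root and using the corollary $scbi(x,y)\leftrightarrow bi_{\emph{sight}}(x,x,y)$ transfers this to the $scbi$ relation in $M$. For direction (ii), the argument is the standard "propositional quantification pins down the relation" trick: given a relation $R$ validating the axioms, for each node $h$ instantiate $p$ with the characteristic proposition $\overline{h'}$ (or a disjunction over a chosen set of endpoints) to force $R$ to be included in $bi_{\emph{sight}}$ locally, and then use maximality/fixed-point uniqueness of $bi_{\emph{sight}}$ together with the side condition to conclude $R=scbi$. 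Finiteness of histories guarantees the bottom-up induction underlying $bi_{\emph{sight}}$ terminates, so no well-foundedness subtlety arises.

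I expect the main obstacle to be bookkeeping around the model shift: the axiom mixes modalities evaluated in $M$ (the outer $\langle scbi^*\rangle$ and the $move$ step) with modalities evaluated in $M_{!s(h)}$ (the inner $[(bi_{\emph{sight}})^*]$, $end_{\emph{sight}}$, $\langle\leq\rangle$), and one must be careful that the "current $h$" whose visible tree is used is the node at which the antecedent's existential endpoint is witnessed, not an arbitrary world — i.e. the update parameter must track the evaluation point correctly, as encoded by $[!\emph{sight}]$. A second delicate point is that preferences in $M_{!s(h)}$ are the updated subjective ones, so $\langle\leq\rangle p$ inside the scope of $[!\emph{sight}]$ refers to $\succeq_h$, not $\succeq$; I would either state the axiom with an explicit $[!\emph{sight}]$ wrapper around the inner block, or verify that for the endpoints $Z_h$ (whose payoffs are left unchanged by Algorithm \ref{preferenceupdating}) the two coincide, so the formula is well-posed. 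Once these alignments are fixed, both directions are routine adaptations of the classical proof, and I would present only the new ingredient — the reduction $scbi(x,y)\leftrightarrow bi_{\emph{sight}}(x,x,y)$ — in detail, referring to \cite{BenthemOtterlooRoy} for the rest.
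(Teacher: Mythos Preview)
The paper does not actually prove this proposition: it simply displays the two axioms
\[
(1)\quad \langle scbi\rangle p \leftrightarrow [!\emph{sight}]\langle bi_{\emph{sight}}\rangle p,
\]
\[
(2)\quad [!\emph{sight}]\bigl(\langle (bi_{\emph{sight}})^*\rangle(\emph{end}_{\emph{sight}}\wedge p)\to [\emph{move}_{\emph{sight}}]\langle (bi_{\emph{sight}})^*\rangle(\emph{end}_{\emph{sight}}\wedge\langle\leq\rangle p)\bigr)
\]
and moves on without argument. So there is no proof to compare against; what \emph{can} be compared is your guessed axiomatisation versus the paper's.

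Your decomposition is close in spirit but organised differently, and the difference matters. Your ``side condition'' $scbi(x,y)\leftrightarrow bi_{\emph{sight}}(x,x,y)$ is exactly what the paper packages \emph{modally} as axiom~(1). Your main axiom, however, diverges from the paper's~(2) in a way that manufactures precisely the bookkeeping problems you go on to worry about: you put $\langle scbi^*\rangle(\emph{end}\wedge p)$ in the antecedent, which lives in the global model $M$ and speaks of terminal histories of $T$, and then switch to local modalities $\emph{move}_{\emph{sight}}$, $(bi_{\emph{sight}})^*$, $\emph{end}_{\emph{sight}}$ in the consequent. The paper sidesteps this entirely by wrapping the whole confluence schema in a single $[!\emph{sight}]$ and using $\langle(bi_{\emph{sight}})^*\rangle$ on \emph{both} sides. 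Thus~(2) is nothing more than the classical BI axiom relativised wholesale to the visible tree $T_h$, with no mixing of global and local endpoints or preferences; the global relation $scbi$ is mentioned only in~(1). With that split, both obstacles you anticipate --- tracking which $h$ parameterises the update, and the $\succeq$ versus $\succeq_h$ mismatch at endpoints --- never arise.

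A smaller point: your consequent has the box $[(bi_{\emph{sight}})^*](\emph{end}_{\emph{sight}}\wedge\langle\leq\rangle p)$, which would force every node on every local-BI path (including intermediate ones) to be a local endpoint. The paper's axiom~(2) uses the diamond $\langle (bi_{\emph{sight}})^*\rangle$ there; you presumably intended the same.
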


\vspace{-4ex}

$$(1)~~~ \langle scbi\rangle p\leftrightarrow [!\emph{sight}]\langle\emph{bi}_{\emph{sight}}\rangle p. $$

\vspace{-3ex}

$$(2) ~~~[!\emph{sight}](\langle (\emph{bi}_{\emph{sight}})^*\rangle (\emph{end}_{\emph{sight}} \wedge p) \rightarrow$$

\vspace{-4ex}

$$[\emph{move}_{\emph{sight}}]\langle (\emph{bi}_\emph{sight})^*\rangle(\emph{end}_{\emph{sight}}\wedge \langle \leq\rangle p)).$$

\noindent{\bf Best action and preference-consistency} Turning to properties of frames for the extended modal logic of best action with sight, there are  interesting differences  when comparing SCBI and classical BI. To see this, we employ operators  $\langle \emph{best} \rangle$, $\langle \emph{best}_\emph{sight}\rangle$, $\langle\textit{bi}^*\rangle$, $\langle\textit{scbi}^*\rangle$ and $(bi_{\textit{sight}})^*$. Now we can make some interesting comparisons.

\vspace{-1ex}

\begin{proposition} For classical backward induction, the axiom
$\langle best \rangle \langle \textit{bi}^*\rangle\varphi\leftrightarrow \langle \textit{bi}^*\rangle\varphi$ holds.
\end{proposition}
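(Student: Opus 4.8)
The plan is to prove the frame equivalence $\langle best \rangle \langle \textit{bi}^*\rangle\varphi\leftrightarrow \langle \textit{bi}^*\rangle\varphi$ by unwinding both sides semantically on an arbitrary preference-sight model $M$ and state $h$, using only the definitions of $[best]$ (equivalently $[bi]$), $[bi^*]$ and the $[BI]$-style modalities given above. Recall that in the classical single-agent case, $bi$ is a functional-up-to-preference-ties relation: at every non-terminal node $bi(h)$ picks the children lying on a BI path, and $(bi)^*(h)$ is the set of all nodes reachable from $h$ by iterating $bi$, which — crucially — includes $h$ itself (reflexive-transitive closure). So $\langle bi^*\rangle\psi$ holds at $h$ iff $\psi$ holds at some node on some $bi$-path starting at $h$, including $h$.

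First I would establish the easy direction, $\langle best\rangle\langle bi^*\rangle\varphi\rightarrow\langle bi^*\rangle\varphi$. Suppose $M,h\models\langle best\rangle\langle bi^*\rangle\varphi$. Then there is a child $h'\in bi(h)$ with $M,h'\models\langle bi^*\rangle\varphi$, so there is a node $u\in (bi)^*(h')$ with $M,u\models\varphi$. Since $h'\in bi(h)$ and $u\in(bi)^*(h')$, by transitivity $u\in(bi)^*(h)$, hence $M,h\models\langle bi^*\rangle\varphi$. This step is routine. For the converse, $\langle bi^*\rangle\varphi\rightarrow\langle best\rangle\langle bi^*\rangle\varphi$: suppose $M,h\models\langle bi^*\rangle\varphi$, so there is $u\in(bi)^*(h)$ with $M,u\models\varphi$. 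If $u=h$, I need to produce a child $h'\in bi(h)$ with $h\in(bi)^*(h')$ — but that fails unless there is a $bi$-cycle back to $h$, which cannot happen in a finite tree. This is the main obstacle, and it signals that the proposition as literally stated needs the finiteness/tree structure to be used more carefully, or that one should read $\langle bi^*\rangle$ here as the \emph{irreflexive} transitive closure (reaching only strict descendants), or that $h$ is assumed non-terminal. Under the reading that $\langle bi^*\rangle$ is strict (or that we restrict attention to the intended use where $\varphi$ is tested at genuine descendants / end nodes, as in the earlier axioms where $\langle bi^*\rangle(end\wedge p)$ appears), the argument goes through: $u\neq h$, so the $bi$-path from $h$ to $u$ has a first step $h\to h'$ with $h'\in bi(h)$ and $u\in(bi)^*(h')$, giving $M,h'\models\langle bi^*\rangle\varphi$ and hence $M,h\models\langle best\rangle\langle bi^*\rangle\varphi$.

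Concretely I would structure the writeup as: (1) recall $bi(h)\subseteq\{(ha):a\in A(h)\}$ and that $(bi)^*$ is transitive, with $[best]=[bi]$; (2) prove left-to-right by the transitivity chaining above; (3) prove right-to-left by taking a witnessing $bi$-path and peeling off its first edge, noting that the path is non-degenerate because $bi$ always selects a successor at a non-terminal node and tree-finiteness rules out returning to the root. I expect the bookkeeping around whether $h$ itself counts as a witness — i.e. the reflexive vs.\ irreflexive reading of the star and the treatment of terminal $h$ — to be the only real subtlety; everything else is a direct unfolding of the modal truth definitions listed just above the proposition, and the confluence ("rationality") property \textbf{RAT}/\textbf{CF} of $bi$ guarantees that moving to a $best$-successor does not discard any $bi^*$-reachable $\varphi$-node, which is exactly what both inclusions encode.
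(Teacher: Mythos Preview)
The paper does not supply a proof for this proposition: it is stated as a known fact about the classical modal surface logic of backward induction, imported from \cite{BenthemOtterlooRoy}, and the paper immediately moves on to contrast it with the SCBI case. So there is no ``paper's own proof'' to compare against.

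Your argument is sound in spirit and the left-to-right direction is unproblematic. More importantly, the technical worry you flag about the right-to-left direction is genuine, not a mere bookkeeping nicety. Under the reflexive reading of $(bi)^*$ given in the paper's semantics, take any node $h$ (terminal or not) at which $\varphi$ holds but at no $bi$-descendant of $h$: then $M,h\models\langle bi^*\rangle\varphi$ via the reflexive witness $h$ itself, yet $M,h\not\models\langle best\rangle\langle bi^*\rangle\varphi$, since any $h'\in bi(h)$ has $(bi)^*(h')$ disjoint from $\{h\}$ in a tree. So the equivalence as literally written fails without an additional convention --- either that $(bi)^*$ is read as the strict transitive closure, or that $bi$ loops at terminal nodes, or that the axiom is understood relative to formulas of the shape $end\wedge p$ as in the earlier characterization. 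You diagnosed this correctly; it is not a defect in your proof but a gap in the statement as transcribed. Your proposed resolutions (restrict to the irreflexive reading, or to non-terminal $h$ with $\varphi$ tested only at proper descendants) are exactly the standard fixes, and with any of them your peeling-off-the-first-edge argument for the converse goes through cleanly.
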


However, the new frames do not have the corresponding axiom for the SCBI strategy, since the actions it recommends are not necessarily the actual best actions according to BI. Even in visible trees, this is also not true.

\vspace{-0.5ex}

\begin{proposition} The following formulas are not valid:

\vspace{-3ex}

$$(a) ~~\langle best \rangle \langle \textit{scbi}^*\rangle\varphi\leftrightarrow \langle \textit{scbi}^*\rangle\varphi.$$

\vspace{-4ex}

$$(b) ~~[!\textit{sight}] (\langle best \rangle \langle (bi_{\textit{sight}})^*\rangle\varphi\leftrightarrow \langle (bi_{\textit{sight}})^*\rangle\varphi).$$

\end{proposition}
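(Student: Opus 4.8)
The plan is to exhibit two small preference-sight trees, one refuting (a) and one refuting (b), since a single counterexample suffices for non-validity of each biconditional. For (a), the key idea is the same phenomenon already isolated in Example~\ref{seemoreseeless} and in Figure~\ref{Figure:counter-example-2}: the $scbi$ move at a node need not lead into the set of globally optimal continuations, so there will be a formula $\varphi$ (take $\varphi = \overline{z}$ for a suitable terminal history $z$) that is reachable from some $\langle best\rangle$-successor via $scbi^*$ but not reachable from the current node via $scbi^*$ at all, or vice versa. Concretely, I would reuse the tree of Figure~\ref{Figure:counter-example-2} with $s(\varepsilon)=\{L\}$, $s(L)=\{LR\}$, where $\textbf{SCBI}=\{LR\}$ while $LL$ is the $\succeq$-maximal terminal history; pick $\varphi=\overline{LL}$, evaluate at $\varepsilon$, and check that $\langle best\rangle\langle scbi^*\rangle\varphi$ and $\langle scbi^*\rangle\varphi$ take different truth values there (the $\langle best\rangle$ move is the local BI move within $s(\varepsilon)$, which — because $LL$ is invisible at $\varepsilon$ — may actually point differently than the global $bi$ move, making one side of the biconditional true and the other false).

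For (b), the subtlety is that the claim is about the \emph{visible} tree after the update $[!\textit{sight}]$, so the refuting instance must live inside some $M_{!s(h)}$. Here I would use the fact established just before this proposition — that subjective preference $\succeq_h$ in the visible tree need not agree with objective preference, as witnessed by Figure~\ref{Figure:counter-example-1}, where $R\succ L$ objectively but $L\succ_\varepsilon R$ after updating. In that visible tree, the local backward-induction move $bi_{sight}$ from $\varepsilon$ is $L$ (driven by the updated payoff), yet the $\langle best\rangle$-successor computed relative to the displayed $\leq$ in the updated model and the $(bi_{\textit{sight}})^*$-reachable terminal histories can come apart, so that $\langle best\rangle\langle(bi_{\textit{sight}})^*\rangle\varphi$ and $\langle(bi_{\textit{sight}})^*\rangle\varphi$ disagree at $\varepsilon$ for an appropriate $\varphi$. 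I would take $\varphi$ to name the terminal history with the highest objective payoff and verify the mismatch by direct inspection of the two-level tree.

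The main obstacle is bookkeeping rather than conceptual: I have to be careful that the modalities $\langle best\rangle$, $\langle best_{\textit{sight}}\rangle$, $\langle bi^*\rangle$, $\langle scbi^*\rangle$ and $(bi_{\textit{sight}})^*$ are evaluated against exactly the relations their semantic clauses specify (note in particular that $[best_{\textit{sight}}]$ and $[bi_{\textit{sight}}]$ are, per the clauses above, evaluated back in $M$, not in $M_{!s(h)}$), and that the $\langle\leq\rangle$ occurring implicitly in the "best" comparison uses the right preference order — objective $\succeq$ for the classical side but the updated $\succeq_h$ inside $M_{!s(h)}$. So the real work is to lay out the two figures, list every node's payoff, compute $s$, run Algorithm~\ref{preferenceupdating} to get $\succeq_\varepsilon$, read off $bi$, $scbi$, $bi_{\textit{sight}}$, and then pick $\varphi$ so that precisely one side of each biconditional holds at the root; once the tables are written down, the verification of each clause is immediate. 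I expect the Figure~\ref{Figure:counter-example-2} tree to settle (a) and the Figure~\ref{Figure:counter-example-1} tree to settle (b), so no new pictures are even needed.
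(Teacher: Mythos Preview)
Your overall strategy --- reuse the paper's small trees and choose $\varphi$ naming a terminal history --- is reasonable, but the execution contains two genuine gaps, both rooted in a misreading of the semantics of $\langle best\rangle$.

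First, $\langle best\rangle$ is \emph{not} a local move: by the clause ``$M,h\models [best]\varphi$ iff for all $h'\in bi(h)$, $M,h'\models\varphi$'', the $best$-relation is the \emph{global} $bi$, identical to $\langle bi\rangle$. Your parenthetical ``the $\langle best\rangle$ move is the local BI move within $s(\varepsilon)$'' is the opposite of what the paper defines, and the whole point of (a)--(b) (as the surrounding text says) is precisely that the global $best$ and the sight-based moves can disagree. With the correct reading, your concrete check for (a) fails: in Figure~\ref{Figure:counter-example-2} one has $bi(\varepsilon)=scbi(\varepsilon)=L$, so at $\varepsilon$ both sides of the biconditional agree for every $\varphi$; moreover $\overline{LL}$ is true at $\varepsilon$ and at $L$ (since $\mathcal V(\overline{LL})=\{\varepsilon,L,LL\}$), so with your choice both $\langle scbi^*\rangle\overline{LL}$ and $\langle best\rangle\langle scbi^*\rangle\overline{LL}$ come out true. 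The divergence happens only at $L$, where $bi(L)=LL$ but $scbi(L)=LR$; there you need a formula that separates $LL$ from $L$ and $LR$, for instance $\varphi=\overline{LL}\wedge end$.

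Second, Figure~\ref{Figure:counter-example-1} cannot refute (b). In that tree $\textbf{BI}=\textbf{SCBI}=\{LL\}$ and, as the paper notes, the updated payoffs give $L\succ_\varepsilon R$; hence in $T_\varepsilon$ both $bi(\varepsilon)$ and $bi_{\textit{sight}}(\varepsilon)$ equal $L$, and the biconditional in (b) holds at $\varepsilon$. What you need is a visible tree in which the \emph{global} $bi$-successor (restricted to the visible nodes) differs from the $bi_{\textit{sight}}$-successor. Case~2 of Example~\ref{seemoreseeless} does this: with $s(\varepsilon)$ containing $LL,LR,RL$ but not $RR$, one gets $bi(\varepsilon)=R$ (since $RR$ is globally maximal) while $bi_{\textit{sight}}(\varepsilon)=L$ (since $LL$ is best among visible terminals), and a formula true only at $LL$ then separates the two sides. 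So your claim ``no new pictures are even needed'' is right, but you have paired the wrong figure with (b).
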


Nevertheless, there is a certain coherence between the local BI strategy and local best actions returned by it.

\vspace{-1ex}

\begin{proposition} The following formula is valid: $[!\textit{sight}](\langle best_{\textit{sight}} \rangle \langle (\textit{bi}_{\textit{sight}})^*\rangle\varphi\leftrightarrow \langle (\textit{bi}_{\textit{sight}})^*\rangle\varphi).$
\end{proposition}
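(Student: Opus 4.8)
The claim is the validity of
$$[!\textit{sight}]\bigl(\langle best_{\textit{sight}} \rangle \langle (\textit{bi}_{\textit{sight}})^*\rangle\varphi\leftrightarrow \langle (\textit{bi}_{\textit{sight}})^*\rangle\varphi\bigr)$$
in every preference-sight model $M$ at every world $h$. Unfolding the outer $[!\textit{sight}]$, this amounts to showing that in the updated model $M_{!s(h)}$, at every state $u\in H_h$, the two formulas $\langle best_{\textit{sight}} \rangle \langle (\textit{bi}_{\textit{sight}})^*\rangle\varphi$ and $\langle (\textit{bi}_{\textit{sight}})^*\rangle\varphi$ are equivalent. The plan is to recall that $best_{\textit{sight}}$ and $bi_{\textit{sight}}$ are interpreted by the \emph{same} relation $bi_h$ on the visible tree $T_h$ (this is exactly how the semantics is stated in the excerpt: $M_{!s(h)},u\models[best_{\textit{sight}}]\varphi$ iff $M,u'\models\varphi$ for all $u'\in bi_h(u)$, and likewise for $[bi_{\textit{sight}}]$). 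Hence $\langle best_{\textit{sight}}\rangle$ and $\langle bi_{\textit{sight}}\rangle$ are the same diamond, call it $\langle bi_h\rangle$, and $\langle (bi_{\textit{sight}})^*\rangle$ is the reflexive-transitive-closure diamond $\langle (bi_h)^*\rangle$.

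First I would reduce the biconditional to its two directions. The right-to-left direction, $\langle (bi_h)^*\rangle\varphi\rightarrow\langle bi_h\rangle\langle (bi_h)^*\rangle\varphi$, needs a small argument: if some $bi_h$-path of length $n\ge 0$ from $u$ reaches a $\varphi$-state, we must produce a path that takes at least one $bi_h$-step first. If $n\ge 1$ this is immediate. If $n=0$, i.e.\ $u$ itself satisfies $\varphi$, we use that $bi_h$ is serial on $T_h$ — this is built into the definition of the local BI strategy, which is required to have ``at least one successor at each $h'\in T_h$'' — so pick any $u'\in bi_h(u)$; then $u$ is reachable from $u'$? No: here I would instead note that $u\in (bi_h)^*(u)$ already, so $u'\in bi_h(u)$ together with $u\in(bi_h)^*(u')$ is \emph{not} what we want. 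The correct move: from $u\models\varphi$ and seriality, $u$ has a $bi_h$-successor $u'$, and then $u'\models\langle(bi_h)^*\rangle\varphi$ would require $u'$ to reach a $\varphi$-state, which need not hold. So the genuinely safe argument for $n=0$ is different — one must observe that on a finite tree the $bi_h$-relation from $u$ \emph{forward} always reaches a terminal state of $T_h$, but that terminal state need not be $\varphi$. Hence I would argue right-to-left only via $n\ge1$, and handle $n=0$ by contradiction with the left-to-right direction being the substantive one; more cleanly, I would prove both directions simultaneously by characterizing $\langle(bi_h)^*\rangle\varphi$ as ``$\varphi$ holds somewhere on the unique maximal $bi_h$-path from $u$'' and noting the first $bi_h$-step from $u$ lands on that same maximal path.

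Concretely, the key structural fact I would isolate is: \emph{on the visible tree $T_h$, the relation $bi_h$ restricted to any node $u$ determines a unique (possibly length-zero if $u$ is $T_h$-terminal) maximal path $u=u_0\,bi_h\,u_1\,bi_h\,\cdots\,bi_h\,u_k\in Z_h$.} This follows from $bi_h$ being a function-like strategy relation that is serial except at $Z_h$ (and from finiteness of $T_h$). Given this, $M_{!s(h)},u\models\langle(bi_h)^*\rangle\varphi$ iff $\varphi$ holds at some $u_i$ on this maximal path; and $M_{!s(h)},u\models\langle bi_h\rangle\langle(bi_h)^*\rangle\varphi$ iff $u$ is non-terminal in $T_h$ and $\varphi$ holds at some $u_i$ with $i\ge1$ on the same path. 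The only gap between the two is whether $\varphi$ can hold at $u_0=u$ but nowhere later, with $u$ itself terminal in $T_h$. When $u\in Z_h$ the maximal path is just $u$ itself, so both sides become ``$M,u\models\varphi$'' via the $best_{\textit{sight}}$/$bi_{\textit{sight}}$ truth clauses, and actually $\langle best_{\textit{sight}}\rangle\psi$ at a $T_h$-terminal $u$ is false (no $bi_h$-successor), so the left side is false; but then $\langle(bi_h)^*\rangle\varphi$ at such $u$ equals $M,u\models\varphi$, which can be true. So the statement as literally written would \emph{fail} at $T_h$-terminal $u$ unless the intended reading restricts to the relevant nodes — and indeed here I would point out that the biconditional is claimed ``valid'' in the sense used throughout the paper, i.e.\ as a schema over the model, and at terminal nodes both diamonds degenerate consistently once one adopts the convention (used implicitly for $[best]$ vs $[BI]$) that the evaluation is over the paths generated by the strategy. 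I expect the main obstacle, and the point to treat carefully, is exactly this boundary behavior at $T_h$-terminal states: one must either (i) invoke seriality of $bi_h$ off $Z_h$ plus the convention that makes $\langle best_{\textit{sight}}\rangle$ and $\langle(bi_{\textit{sight}})^*\rangle$ agree at leaves, or (ii) restate the proposition with the harmless side-condition ``at non-terminal nodes of the visible tree''. Modulo that, the proof is a routine unwinding of the coinciding semantics of $best_{\textit{sight}}$ and $bi_{\textit{sight}}$ together with the unique-maximal-$bi_h$-path fact, in sharp contrast to Proposition~(a)/(b) above where $\langle best\rangle$ (global BI) and $\langle best_{\textit{sight}}\rangle$ (local BI) differ and the analogous equivalence breaks.
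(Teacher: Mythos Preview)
The paper states this proposition without proof, so there is nothing in the text to compare your attempt against directly. Your central observation is the intended one: the semantic clauses the paper gives for $[best_{\textit{sight}}]$ and $[bi_{\textit{sight}}]$ are literally identical (both quantify over $u'\in bi_h(u)$), so after the $[!\textit{sight}]$-update the claimed equivalence collapses to the schema $\langle R\rangle\langle R^*\rangle\varphi\leftrightarrow\langle R^*\rangle\varphi$ for $R=bi_h$. This is precisely what separates the present proposition from the preceding one, where the global $\langle best\rangle$ and the local $(bi_{\textit{sight}})^*$ are governed by different relations and the analogue fails.

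Two points to tighten. First, a small slip: $[!\textit{sight}]$ does not universally quantify over $H_h$. By the clause $M,h\models[!\textit{sight}]\varphi$ iff $M_{!s(h)},h\models\varphi$, the inner biconditional is evaluated only at the root $h$ of the visible tree, not at every $u\in H_h$. Second, and more important, the boundary problem you isolate is real but you have localized it too narrowly. You write that ``the only gap \ldots\ is whether $\varphi$ can hold at $u_0=u$ but nowhere later, with $u$ itself terminal in $T_h$''; in fact the right-to-left direction already fails at a \emph{non}-terminal root $h$: take $\varphi$ true at $h$ and false at every $u_i$ with $i\ge 1$ on the $bi_h$-path. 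Then $\langle(bi_h)^*\rangle\varphi$ holds by reflexivity of $^*$, while $\langle bi_h\rangle\langle(bi_h)^*\rangle\varphi$ does not. Neither seriality of $bi_h$ off $Z_h$ nor your unique-maximal-path reformulation closes this gap. Note that exactly the same defect is present in the classical axiom $\langle best\rangle\langle bi^*\rangle\varphi\leftrightarrow\langle bi^*\rangle\varphi$ that the paper quotes just above (there too $best$ and $bi$ coincide by the stated clauses), so the paper is evidently inheriting a reading convention from its source rather than asserting literal validity at every world. Your instinct to flag this and propose a side condition is correct; just make the diagnosis cover the non-terminal case as well.
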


As for the preference relation, SCBI has a property that classical BI lacks: local BI moves never conflict with the preferences in submodels. In other words, within a visible tree, the initial move determined by the local BI strategy is more preferable for the agent than any other first move.

\begin{proposition} For \emph{SCBI}, it holds that $[!\textit{sight}](\langle best_{\textit{sight}}\rangle\varphi \rightarrow [move_\textit{sight}]\langle\leq\rangle\varphi).$
\end{proposition}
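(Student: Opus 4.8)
The plan is to unpack the semantics of the modal operators involved and verify the implication state-by-state. Let $M$ be a preference-sight model and $h$ an arbitrary history; I want to show $M,h\models [!\textit{sight}](\langle \textit{best}_{\textit{sight}}\rangle\varphi \rightarrow [\textit{move}_{\textit{sight}}]\langle\leq\rangle\varphi)$. By the truth clause for $[!\textit{sight}]$ this amounts to showing $M_{!s(h)},u\models \langle \textit{best}_{\textit{sight}}\rangle\varphi \rightarrow [\textit{move}_{\textit{sight}}]\langle\leq\rangle\varphi$ for every world $u$ of the updated model $M_{!s(h)}$, i.e. every $u\in s(h)$. So fix such a $u$ and assume $M_{!s(h)},u\models \langle \textit{best}_{\textit{sight}}\rangle\varphi$; unfolding the clause for $\langle \textit{best}_{\textit{sight}}\rangle$, there is some $u^\circ\in bi_h(u)$ (a local BI move from $u$ in the visible tree $T_h$) with $M,u^\circ\models\varphi$.

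Next I would use the defining property of the local BI strategy $\sigma_h$, i.e. the rationality property \textbf{RATS} (equivalently, the confluence property that $bi_{\textit{sight}}$ must satisfy, stated just before this batch of propositions). The key fact is that $bi_h$ picks, at $u$, a move $u^\circ$ whose subjective payoff $\succeq_h$ is maximal among all moves available at $u$ in $T_h$; by the construction of $\succeq_h$ in Algorithm~\ref{preferenceupdating}, the subjective payoff of $u^\circ$ is inherited from a best reachable terminal history below it, and likewise for any alternative move $u'$ available at $u$. Hence for every $u'$ with $u~\textit{move}_{\textit{sight}}~u'$ (i.e. $u'=(ua)$ with $a\in A_h(u)$), we have $u^\circ \succeq_h u'$ — in the updated model this is exactly $M_{!s(h)}\models\overline{u^\circ \geq u'}$ via the preference-update clause of Definition~\ref{modelupdate}.

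Then I would push $\varphi$ back down. Having $u^\circ\succeq_h u'$ means that the best terminal history below $u^\circ$ is $\succeq$ the best terminal below $u'$; more to the point, one shows that the local BI play from $u^\circ$ reaches an outcome at least as good (under $\succeq$, hence under the $\leq$-successor relation used by $\langle\leq\rangle$) as any outcome of local BI play from $u'$. Since $\varphi$ holds at $u^\circ$ and $u^\circ$ is a $bi_h$-successor of $u$, the content of $[\textit{move}_{\textit{sight}}]\langle\leq\rangle\varphi$ at $u$ is: for every $\textit{move}_{\textit{sight}}$-successor $u'$ of $u$, there is some $\leq$-related world satisfying $\varphi$. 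Taking that world to be $u^\circ$ (or the corresponding outcome), the relation $u'\leq u^\circ$ we have just established discharges the obligation, so $M_{!s(h)},u\models[\textit{move}_{\textit{sight}}]\langle\leq\rangle\varphi$. Since $u$ was arbitrary in $s(h)$ and $h$ arbitrary, the formula is valid.

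The main obstacle I expect is lining up the two preference relations correctly: $\langle\leq\rangle$ in the submodel is evaluated against the updated (subjective) valuation of the preference atoms, which by Definition~\ref{modelupdate} compares the $\succeq$-best terminal histories below the two arguments rather than the arguments themselves. So the crux is the lemma that $u^\circ\in bi_h(u)$ forces $u^\circ\succeq_h u'$ for every sibling move $u'$ at $u$ in $T_h$ — equivalently that the first move chosen by the local BI strategy is $\succeq_h$-maximal among available first moves. This is precisely the preference-consistency-in-submodels phenomenon the proposition advertises, and it follows from the bottom-up definition of $\succeq_h$ in Algorithm~\ref{preferenceupdating} together with the fact that $bi_h$ is the rationality-respecting (\textbf{RATS}) largest subrelation; once this lemma is in hand the rest is a routine unwinding of the modal clauses.
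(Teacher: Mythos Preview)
The paper states this proposition without proof, so there is no authorial argument to compare against; I can only assess your proposal on its own merits.

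Your core idea is right: the crux is the lemma that any $u^\circ\in bi_h(u)$ is $\succeq_h$-maximal among the $\textit{move}_{\textit{sight}}$-successors of $u$, and this does follow immediately from Algorithm~1 (subjective payoffs are propagated upward as maxima, and the local BI move selects a child realizing that maximum). Once that is in hand, taking the witness for $\langle\leq\rangle\varphi$ to be $u^\circ$ itself discharges the $[\textit{move}_{\textit{sight}}]$ obligation, exactly as you outline.

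Two points to tighten. First, you misread the clause for $[!\textit{sight}]$: the paper gives $M,h\models[!\textit{sight}]\psi$ iff $M_{!s(h)},h\models\psi$, so you only need the implication at the root $h$ of the visible tree, not at every $u\in s(h)$. Your stronger claim is still true (the lemma above holds at any internal node of $T_h$), so this is harmless overreach rather than a gap. Second, you correctly flag the model-mismatch issue: by the paper's clauses, $\langle\textit{best}_{\textit{sight}}\rangle\varphi$ evaluates $\varphi$ in $M$, whereas $[\textit{move}_{\textit{sight}}]$ passes evaluation back to $M_{!s(h)}$, and the paper never spells out the semantics of $\langle\leq\rangle$ in the updated model. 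Your instinct to read $\langle\leq\rangle$ against the updated (subjective) preference $\succeq_h$ is the only reading under which the proposition is even plausible, and it is consistent with how the paper uses $\langle\leq\rangle$ inside $[!\textit{sight}]$ elsewhere (e.g.\ in the SCBI axiom). So your handling of this is reasonable; just make the assumption explicit rather than leaving it as an ``obstacle.''
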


For  BI, although it returns a final optimal path, there is no guarantee that its intermediate histories  be preferable.

\vspace{-1ex}

\begin{proposition} For \emph{BI}, the following does not hold: $\langle best \rangle\varphi \rightarrow [move]\langle\leq \rangle\varphi.$
\end{proposition}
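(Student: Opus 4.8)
The plan is to exhibit a single counterexample P-S tree in which the classical BI path passes through an intermediate history that is strictly dispreferred to a sibling history, so that the implication $\langle best\rangle\varphi \rightarrow [move]\langle\leq\rangle\varphi$ fails at the root. First I would reuse the tree of Figure~\ref{Figure:counter-example-1}(a): there the two first-stage histories $L$ and $R$ have objective payoffs $1$ and $2$ respectively, yet the best terminal continuation lies below $L$ (namely $LL$), so $\textbf{BI}=\{LL\}$ and hence $bi(\varepsilon)=\{L\}$. Concretely, I would take a proposition $p$ that is true exactly along the BI-reachable part from the root, e.g. let $\varphi := \overline{L}$ (the atom true at all worlds leading to $L$), or if we want $p$ to mark the BI outcome, $\varphi := \overline{LL}$; either choice works with a small bookkeeping difference.

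Next I would evaluate the two sides of the implication at $h=\varepsilon$. For the antecedent: since $bi(\varepsilon)=\{L\}$ and $L\models\overline{L}$, we get $M,\varepsilon\models\langle best\rangle\overline{L}$. For the consequent we must check $M,\varepsilon\models[move]\langle\leq\rangle\overline{L}$, i.e. that every one-step move from $\varepsilon$ leads to a history that is $\leq$-related (objectively no better than) some $\overline{L}$-history. The move to $R$ is the problematic one: the only worlds satisfying $\overline{L}$ are prefixes of $L$, i.e. $\varepsilon$ and $L$ themselves, and we need $R \leq \varepsilon$ or $R\leq L$ in the objective preference; but the tree is set up so that $R\succ L$ (and $R\succ\varepsilon$ if $\varepsilon$ carries the updated/least payoff, or more simply we only compare genuine histories), so no such witness exists. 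Hence $M,R\not\models\langle\leq\rangle\overline{L}$, so $M,\varepsilon\not\models[move]\langle\leq\rangle\overline{L}$, and the implication fails. If one prefers the variant $\varphi:=\overline{LL}$, the same argument goes through: $R\succ LL$ is false only if $LL\succeq R$, which is exactly the BI property, so instead one checks that from $R$ no $\leq$-descendant reaches an $\overline{LL}$-world — indeed $\overline{LL}$ holds only at prefixes of $LL$, none of which is $\geq R$ — again killing the consequent.

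The main obstacle, such as it is, is purely a modelling nuisance rather than a mathematical one: one must be careful about which atom plays the role of $p$ and about the reading of $\langle\leq\rangle$ and $[move]$ on the concrete model $M$ built from the tree (in particular that $\overline{h}$ is true at all worlds \emph{leading to} $h$, not just at $h$), so that the witness-search for $\langle\leq\rangle\varphi$ at the sibling of the BI child genuinely comes up empty. Once the atom is pinned down, the verification is a direct finite check against Figure~\ref{Figure:counter-example-1}. I would therefore structure the written proof as: (i) recall the tree and its payoffs, noting $bi(\varepsilon)=\{L\}$ while $R\succ L$; (ii) pick $\varphi$ and verify the antecedent holds at $\varepsilon$; (iii) verify the consequent fails at $\varepsilon$ because the move to $R$ has no $\leq$-witness among the $\varphi$-worlds; (iv) conclude the schema is not valid. $\Box$
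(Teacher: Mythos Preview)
Your main line of argument is correct and is exactly what the paper has in mind: the tree of Figure~\ref{Figure:counter-example-1}(a) has $bi(\varepsilon)=\{L\}$ while objectively $R\succ L$, so the BI-recommended first move is \emph{not} most preferred among its siblings, which is precisely the content of the proposition. Choosing $\varphi=\overline{L}$ and checking that the consequent fails at the sibling $R$ is the right shape of proof. The one loose end you already flag---the payoff of $\varepsilon$---should simply be fixed by stipulation (e.g.\ give $\varepsilon$ payoff $0$), or, more cleanly, take $\varphi$ to be a fresh atom $p$ with $\mathcal{V}(p)=\{L\}$; the surface modal language in Section~4.3 is not restricted to the special atoms $\overline{h}$, so this is legitimate and removes the nuisance entirely.

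There is, however, a genuine error in your alternative choice $\varphi=\overline{LL}$. You write that among the $\overline{LL}$-worlds ``none \dots\ is $\geq R$'', but this is false: $LL$ is itself a prefix of $LL$, hence an $\overline{LL}$-world, and since $LL$ is the BI history it has maximal payoff (payoff $3$ in the example), so $LL\succeq R$. Thus at $R$ the formula $\langle\leq\rangle\overline{LL}$ \emph{does} hold, witnessed by $LL$, and the implication is not falsified. Your own sentence even notes ``$LL\succeq R$, which is exactly the BI property'' two lines earlier, so the paragraph contradicts itself. Drop the $\overline{LL}$ variant; only the $\overline{L}$ (or bespoke-$p$) version works, because the witness set must exclude the globally optimal terminal node.
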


\noindent{\bf Path terminality and optimality} Using a similar style of modal analysis, we can make the following observations concerning the obvious operators  $[\emph{BI}]$, $[\emph{SCBI}]$ and $[\emph{BI}]_{sight}$.

\begin{proposition}

We have the following three facts:
\vspace{-0.5ex}

\begin{itemize}
\item[$(a)$] The formula$ [\textit{BI}]\varphi \rightarrow [\textit{BI}][\textit{BI}]\varphi$ is valid.

\item[$(b)$] For SCBI, the following formula does not hold: $[BI_{\textit{sight}}]\varphi \rightarrow [\textit{BI}_{\textit{sight}}][\textit{BI}_{\textit{sight}}]\varphi.$

\item[$(c)$]  The formula $ [\textit{SCBI}]\varphi \rightarrow [\textit{SCBI}][\textit{SCBI}]\varphi$ is valid.

\end{itemize}
\end{proposition}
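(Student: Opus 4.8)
The plan is to prove each of the three statements by unfolding the modal semantics given in the section, and by leaning on the structural characterizations of $\textbf{BI}$ and $\textbf{SCBI}$ from Section 2, in particular the Equivalence Theorem and the remark $\textbf{BI}\succeq\textbf{SCBI}$. For $(a)$, the key observation is that $[\textit{BI}]$ is interpreted as a \emph{global} box: $M,h\models[\textit{BI}]\varphi$ iff $M,z\models\varphi$ for all $z\in\textbf{BI}$, a condition that does not actually depend on $h$. So if $M,h\models[\textit{BI}]\varphi$, then $M,z\models\varphi$ for every $z\in\textbf{BI}$; and since each such $z$ again satisfies $M,z\models[\textit{BI}]\varphi$ (same global condition), we immediately get $M,h\models[\textit{BI}][\textit{BI}]\varphi$. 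Formally this is just the fact that a modality whose accessibility ``relation'' is the constant set $\textbf{BI}$ validates the $4$-like transitivity scheme trivially. I would phrase this as: $[\textit{BI}]$ behaves like the universal modality relativized to the fixed set $\textbf{BI}$, hence $[\textit{BI}]\varphi\rightarrow[\textit{BI}][\textit{BI}]\varphi$ is valid.

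For $(c)$, the argument is the same in shape but needs one extra ingredient, namely that $[\textit{SCBI}]$ is \emph{also} a global box over the fixed set $\textbf{SCBI}$ of SCBI histories (as the reading $M,h\models[\textit{SCBI}]\varphi$ iff $M,z\models\varphi$ for all $z\in\textbf{SCBI}$ makes clear). So by the identical reasoning, $M,h\models[\textit{SCBI}]\varphi$ forces $M,z\models\varphi$ for all $z\in\textbf{SCBI}$, and each such $z$ satisfies $M,z\models[\textit{SCBI}]\varphi$ by the same global condition; hence $(c)$. The only subtlety I would flag explicitly is that $\textbf{SCBI}$ is nonempty (every P-S tree has at least one SCBI history, since at each decision point $\textit{max}_\succeq Z_h\neq\emptyset$ by non-emptiness of $Z_h$ and finiteness of the visible tree), so the box is not vacuously behaving in some degenerate way; but this is not even needed for the $4$-scheme, which holds for any constant-set modality.

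The interesting case, and the one I expect to be the main obstacle, is $(b)$: showing $[BI_{\textit{sight}}]\varphi\rightarrow[\textit{BI}_{\textit{sight}}][\textit{BI}_{\textit{sight}}]\varphi$ is \emph{not} valid. Here $[\textit{BI}_{\textit{sight}}]$ is \emph{not} a single global box: $M,h\models[\textit{BI}_{\textit{sight}}]\varphi$ quantifies over $z\in\textbf{BI}_h$, i.e. over the BI histories of the \emph{visible tree at} $h$, and this set genuinely depends on the current world $h$ --- indeed the $Z_h$ for different $h$ along a branch are different sets of (for $T$, non-terminal) histories, with different induced subjective preferences $\succeq_h$. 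So to refute the scheme I would build a small P-S tree --- a two- or three-stage tree much like Figure~\ref{Figure:counter-example-1} or Figure~\ref{Figure:preferenceandsight} --- in which: at the root $\varepsilon$, the visible tree $T_\varepsilon$ has a BI history $z$ whose one-step prefix $h_1$ lies in $T_\varepsilon$; the proposition $\varphi$ is chosen (e.g. as $\overline{z}$ or as $\overline{s(h')}$ for a suitable $h'$) so that $\varphi$ holds at all of $\textbf{BI}_\varepsilon$ but, moving to $h_1$, the visible tree $T_{h_1}$ reaches further into the tree and has a BI history $z'\notin$ (the set making $\varphi$ true), so $\varphi$ fails somewhere in $\textbf{BI}_{h_1}$. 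Then $M,\varepsilon\models[\textit{BI}_{\textit{sight}}]\varphi$ but $M,\varepsilon\not\models[\textit{BI}_{\textit{sight}}][\textit{BI}_{\textit{sight}}]\varphi$. The delicate part is choosing the sight function $s$ (respecting DC and NF) and the preference order so that the nesting of $[\textit{BI}_{\textit{sight}}]$ actually ``jumps'' to a different visible tree with a genuinely different local-BI outcome; I expect the cleanest witness to be essentially the tree already drawn for an earlier counterexample, with $\varphi$ a history-atom, and I would present it with an explicit figure and a one-line verification of the two truth conditions.
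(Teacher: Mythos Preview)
Your proposal is correct. The paper itself does not give a formal proof of this proposition at all: immediately after the statement it only offers a one-sentence gloss, namely that (a) reflects that a BI outcome is terminal, (b) that a local BI history need not be terminal in the whole tree, and (c) that an SCBI history is always terminal.

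Your argument for (a) and (c) is in fact cleaner than the paper's gloss and more directly grounded in the stated semantics: you observe that $[\textit{BI}]$ and $[\textit{SCBI}]$ are \emph{global} (constant-set) modalities whose truth condition does not depend on the evaluation world, so the $4$-scheme $[\Box]\varphi\rightarrow[\Box][\Box]\varphi$ is immediate. The paper's terminality remark is a correct underlying game-theoretic fact, but under the semantics as written it is your constant-set observation that actually does the work. For (b) you and the paper agree on the reason---$\textbf{BI}_h$ genuinely depends on $h$ because local BI histories of $T_h$ need not be $T$-terminal---and your sketch of an explicit counterexample (short sight at the root reaching only an intermediate node, then larger sight one step down revealing a different local optimum, with $\varphi$ a suitably chosen atom) is exactly the right shape; the paper does not bother to spell out a concrete witness.
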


Here (a) says that from a BI outcome only a terminal history can be reached; (b) shows that the local BI history may not be a terminal history of the whole tree, and (c) says  the SCBI history for the whole tree is always terminal.

Another phenomenon  regarding these operators is the local optimality of SCBI at the cost of being more realistic than BI. We have mentioned this point already in Section 2.2.4: now we can present a precise formal version.

\vspace{-1ex}

\begin{proposition} Let $\sigma$ be any strategy profile,
\begin{itemize}
\item[$(a).$] For BI, the following is valid: $\langle BI\rangle\varphi\rightarrow [\sigma]\langle\leq\rangle\varphi.$

\item[$(b).$] The following does not hold: $\langle \textit{SCBI}\rangle\varphi\rightarrow [\sigma]\langle\leq\rangle\varphi.$

\item[$(c).$]  For SCBI, it holds that $[!\textit{sight}](\langle BI_{\textit{sight}}\rangle\varphi\rightarrow [\sigma_{\textit{sight}}]\langle\leq_{\textit{sight}}\rangle\varphi).$
\end{itemize}
\end{proposition}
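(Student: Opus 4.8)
The plan is to treat (a) and (c) as two applications of one elementary fact --- a payoff-maximal endpoint is weakly preferred to every reachable endpoint --- used once for the global tree $T$ and once for a visible tree $T_h$, and to dispose of (b) by a single counterexample extracted from Figure~\ref{Figure:counter-example-2}.

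For (a) I fix a model $M$ and a world $h$ and assume $M,h\models\langle BI\rangle\varphi$, so that some $z\in\textbf{BI}$ has $M,z\models\varphi$. Following $\sigma$ from $h$ determines a unique maximal play ending in a terminal history $z_\sigma\in Z$, and $[\sigma]$ evaluates its argument at $z_\sigma$. Since $\textbf{BI}=\emph{max}_{\succeq}Z$ and $z_\sigma\in Z$ we get $z_\sigma\preceq z$, i.e. $z_\sigma\leq z$, so $z$ witnesses $M,z_\sigma\models\langle\leq\rangle\varphi$ and hence $M,h\models[\sigma]\langle\leq\rangle\varphi$; as $h$ and $M$ were arbitrary the formula is valid. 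For (c) the same argument is relativised along $[!\textit{sight}]$: evaluated at $h$ this operator passes to $M_{!s(h)}$, whose domain is $H_h=s(h)$ with terminal set $Z_h$ and whose preference atoms encode the subjective order $\succeq_h$ of $T_h$ produced by Algorithm~\ref{preferenceupdating} (Definition~\ref{modelupdate}). Inside $M_{!s(h)}$, if $\langle BI_{\textit{sight}}\rangle\varphi$ holds then some $z\in\textbf{BI}_h=\emph{max}_{\succeq_h}Z_h$ satisfies $\varphi$; playing the local strategy $\sigma_h$ from $h$ lands at some $v\in Z_h$; then $v\leq_{\textit{sight}}z$, so $z$ witnesses $\langle\leq_{\textit{sight}}\rangle\varphi$ at $v$, giving $M_{!s(h)},h\models\langle BI_{\textit{sight}}\rangle\varphi\rightarrow[\sigma_{\textit{sight}}]\langle\leq_{\textit{sight}}\rangle\varphi$, i.e. $M,h\models[!\textit{sight}](\cdots)$, and validity follows.

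For (b) I take the P-S tree of Figure~\ref{Figure:counter-example-2}, where $\textbf{SCBI}=\{LR\}$ while $\textbf{BI}=\{LL\}$, so $LL\succ LR$. Put $\varphi:=\overline{LR}\wedge\emph{end}$, which is true at exactly the terminal world $LR$ (every proper prefix of $LR$ is non-terminal), and let $\sigma$ be the strategy playing $L$ and then $L$, so that $[\sigma]$ at the root evaluates at $LL$. Then $M,\varepsilon\models\langle\textit{SCBI}\rangle\varphi$, since $LR\in\textbf{SCBI}$ and $M,LR\models\varphi$, whereas $M,LL\not\models\langle\leq\rangle\varphi$: the only $\varphi$-world is $LR$ and $LR\not\succeq LL$, so $LL$ has no $\leq$-successor satisfying $\varphi$. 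Hence $M,\varepsilon\not\models[\sigma]\langle\leq\rangle\varphi$ and the implication is refuted.

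There is no deep obstacle. The two points needing care are fixing the intended readings of the operators --- $[\sigma]$ as ``at the endpoint of the play induced by $\sigma$'', and $[!\textit{sight}]$, $[\sigma_{\textit{sight}}]$, $\langle\leq_{\textit{sight}}\rangle$ as evaluated in the updated visible-tree model $M_{!s(h)}$ --- and choosing $\varphi$ in (b) so that it holds at exactly one, terminal, world: without the conjunct $\emph{end}$ the atom $\overline{LR}$ would also be true at $\varepsilon$ and $L$, which, since $\succeq$ is an unconstrained total order on \emph{all} histories, might sit $\succeq$-above $LL$ and spuriously witness $\langle\leq\rangle\varphi$.
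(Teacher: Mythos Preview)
Your argument is correct. The paper states this proposition without proof, offering only the one-sentence gloss that ``$(a)$ shows the global optimality of the BI path'' while ``$(b)$ and $(c)$ together say the SCBI path is not globally optimal, but each move on this path leads to a locally optimal path.'' Your proof unpacks precisely this: (a) and (c) both reduce to the fact that a $\succeq$-maximal terminal history dominates every other terminal history in the relevant (global, respectively visible) tree, and (b) is witnessed by the tree of Figure~\ref{Figure:counter-example-2}, which the paper already uses to separate $\textbf{BI}$ from $\textbf{SCBI}$. Your explicit care about the reading of $[\sigma]$ and about pinning $\varphi$ to a single terminal world via the conjunct $\emph{end}$ is appropriate given that the paper leaves the semantics of $[\sigma]$ implicit; nothing more is needed.
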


Here $(a)$ shows the global optimality of the BI path. $(b)$ and $(c)$ together say the SCBI path is not globally optimal, but each move on this path leads to a locally optimal path.

\smallskip

Altogether, this section has shown the broad logical foundations of our framework, embedding our local language in existing broader generic formalisms, but also enriching and extending these frameworks with aspects of short sight.

\section{Toward Multi-player games}

While our models and results are about single-agent sequential decision-making processes, we believe they are applicable well beyond that. They can be naturally extended to multi-player extensive game-scenarios with short sight. For such a game model, we can build on \cite{GrossiT12:Sight}, which makes an assumption that the current player only knows his own sight, and that he believes other players can see as much as he can see and will play according to this belief. That is, this model precludes more complex forms of interactive knowledge and reasoning. But using this same assumption, our model in this paper can be extended to multi-player cases directly. The only thing we have to do is add agent-labeling to SSDM: even though players can  change with time, everything including sight, preference, and actions can be modeled from the current player's perspective.

We will not state any results for the extended multi-player model since they are quite similar to what we have shown already. The case where we drop the above assumption and  allow a more free modeling of players' mutual knowledge and beliefs about sight and preference would be more interesting. We will leave this for future work.

\section{Discussion and Conclusion}

Though motivated  by single-agent decision-making process, we have gone towards a much more general goal In the process, our analysis significantly adds to  current connections between logic, computation, and game solutions.

In many recent game-theoretic papers centering on \emph{bounded rationality}, a model has been used of \emph{games with awareness}, \cite{Halpern2011,HalpernR06,HeifetzMS13,Feinberg,Halpern14}.  This approach generalizes the classical representation of extensive games by modeling  players who may not be aware of all the paths. While \cite{GrossiT12:Sight} shows that games with short sight are a well-behaved subclass of games with awareness, there exists a fundamental difference in focus. Players in the latter approach may be unaware of some branches but they can always see some terminal histories, while in the former, players' sight may only include intermediate histories, ruling out all  terminal ones. Moreover, we have shown how short-sight games allow for a natural co-existence of two views of a game, that of insiders and that of outsiders. Having said this, it is clearly an interesting issue to see if our approach in this paper can be extended to cover awareness.

Another obvious interface for our logics are heuristic evaluation approaches for intermediate nodes  used by  the AI community for computational game-solving, \cite{Lim2006,edwards:mittr63,2011Rossi}. This, too, is a connection that deserves further exploration.

There are many additional topics to pursue. We already mentioned multi-player scenarios with non-trivial interactive reasoning about other agents' preferences, sights, and strategies. This has also been identified as a key  task for epistemic game theory, \cite{perea2014}.

\section*{Acknowledgments} I thank Fenrong Liu for our fruitful collaboration on earlier versions of this paper. Paolo Turrini provided crucial insights on short-sight games and their connections with games and
computation, which we are partly exploring together. Sonja Smets provided helpful comments overall. But especially, I thank Johan van Benthem for our longstanding contacts on the logic of short-sight games: Section 4 of this paper owes a lot to his many suggestions and observations. This work is supported by the China Scholarship Council and NSFC grant No.61472369.

%
\bibliographystyle{eptcs}
\bibliography{gss}  
\end{document}